\numberwithin{equation}{section}
\numberwithin{figure}{section}
\numberwithin{table}{section}
\newtheorem{theorem}{Theorem}[section]
\newtheorem{corollary}{Corollary}[section]
\newtheorem{lemma}{Lemma}[section]
\theoremstyle{definition}
\newtheorem{definition}{Definition}[section]
\newtheorem{note}{Note}[section]
\newtheorem{question}{Question}[section]
\numberwithin{equation}{section}
\definecolor{darkread}{rgb}{0.7, 0, 0}
\definecolor{darkbrown}{rgb}{0.55, 0.2, 0.15}
\definecolor{darkblue}{rgb}{0.1,0.1,0.6}
\definecolor{darkgreen}{rgb}{0.1,0.5,0.2}
\DeclareMathOperator*{\Var}{Var}
\DeclareMathOperator*{\DJI}{DJ}
\DeclareMathOperator*{\AO}{AO}
\newcommand{\dd}{\mathrm{d}}
\title{\Large\bf Detecting systematic anomalies affecting systems when inputs are stationary time series\thefootnote\relax\footnotetext{
We are indebted to Editor-in-Chief Fabrizio Ruggeri, an Associate Editor, and two anonymous referees for their most careful reading of our paper, constructive criticism, and numerous questions, comments and suggestions.
We are grateful to Yuri Davydov, Nadezhda Gribkova, Hong Li, Raghu Pasupathy, Jiandong Ren, Jianxi Su, Ruodu Wang, Aaron Nung Kwan Yip, and all the participants of the Risk Management and Actuarial Science Seminar (University of Waterloo and Tsinghua University), Actuarial Seminar (University of Wisconsin-Milwaukee), and the Industrial Mathematics and Statistics Seminar (Purdue University) for discussions and suggestions. Our research has been supported by the Natural Sciences and Engineering Research Council (NSERC) of Canada, and the national research organization Mathematics of Information Technology and Complex Systems (MITACS) of Canada.}}
\author[1]{Ning Sun}
\author[,1,2]{Chen Yang \thanks{Corresponding author; e-mail \href{mailto:cyang244@whu.edu.cn}{cyang244@whu.edu.cn} }} 
\author[1,3]{Ri\v cardas Zitikis}
\affil[1]{\normalsize School of Mathematical and Statistical Sciences, Western University, \break London, Ontario N6A 5B7, Canada}
\affil[2]{\normalsize Economics and Management School, Wuhan University, \break Wuhan, Hubei 430072, P.~R.~China}
\affil[3]{\normalsize Risk and Insurance Studies Centre, York University,  \break Toronto, Ontario M3J 1P3, Canada}
\date{}
\begin{document}

\maketitle
\vspace{-10mm}

\noindent
{\bf Abstract.}
We develop an anomaly-detection method when systematic anomalies, possibly statistically very similar to genuine inputs, are affecting control systems at the input and/or output stages. The method allows anomaly-free inputs (i.e., those before contamination) to originate from a wide class of random sequences, thus opening up possibilities for diverse applications. To illustrate how the method works on data, and how to interpret its results and make decisions, we analyze several actual time series, which are originally non-stationary but in the process of analysis are converted into stationary. As a further illustration, we provide a controlled experiment with anomaly-free inputs following an ARMA time series model under various contamination scenarios.
\smallskip

\noindent
{\it Key words and phrases}: control systems, anomaly detection, systematic errors, time series.

\newpage


\section{Introduction}
\label{intro}

Control systems are often exposed to errors, intrusions, and other anomalies whose detection in a timely fashion is of paramount importance.
Computer systems monitor and control a myriad of physical processes, and their protection against random errors, deliberate intrusions \cite[e.g.,][]{d1987,ddw1999,calhhs2011,pak2013}, false data injections \cite[e.g.,][]{lzlwd2017}, and other disruptors is of much interest.

A vast number of methods have been proposed for the purpose. For example, we find methods based on deep learning \cite[e.g.,][]{hmw2017}, probabilistic arguments \cite[e.g.,][]{htclch2016,o2016}, artificial neural networks \cite[e.g.,][]{pds2017}, and Fourier techniques \cite[e.g.,][]{z2018}. \cite{CXS2018} discuss the effects of an early warning mechanism on system's reliability. For a recently developed LSTM-based intrusion detection system for in-vehicle can bus communications, we refer to \cite{hiofk2020}.

For complementary reviews on anomaly detection, we refer to \cite{cbk2009},   \cite{BBK2014}, and \cite{f2020}. For general information on various facets of risk and with them associated problems, we refer to, e.g., \cite{ABFZ2014} and \cite{Z2018}. For more specialized discussions on the topic, we refer to, e.g., \cite{CLY2017} and \cite{lzlwd2017}.

The emphasis in the present paper is on temporal aspects and dependence structures that arise in this area of research. There have been a number of studies tackling these issues from several perspectives. For example, \cite{BP1996} present a computational procedure capable of robust and sensitive statistical detection of deterministic and chaotic dynamics in  short and noisy time series. \cite{HXXZ2017} explore the role of dependence when assessing quantities such as system-compromise probabilities and the cost of attacks, which are then used to  develop optimization strategies. \cite{DL2018} tackle the problem of assessing whether temporal clusters in randomly occurring sequences of events are genuinely random.

Furthermore, \cite{fef2018} propose what is called the collective and point anomalies (CAPA) method  for detecting point anomalies (i.e., outliers in the statistical language) as well as anomalous segments, or collective anomalies. The method is suitable when collective anomalies are characterised by either a change in mean, variance, or both, ant it is capable of distinguishing collective anomalies from point anomalies. This and several other methods have been implemented in an R package by \cite{fgefb2020b}, where we also find
the multi-variate collective and point anomaly (MVCAPA) method of \cite{fef2019},
the proportion adaptive segment selection (PASS) method of \cite{jcl2013},
the Bayesian abnormal region detector (BARD) of \cite{bf2017},
and also sequential versions of CAPA and MVCAPA by \cite{fbe2020a}.
\cite{f2020} provides the state of the art on statistical anomaly detection, together with a guide for computational implementation.

The present paper is devoted to another anomaly-detection method that works irrespective of whether systems are being affected at the input or output stage, or at both stages simultaneously.
The important feature that distinguishes our method from the earlier ones is that it can detect persistent anomalies that may not change the regime (e.g., mean, variance, and/or autocorrelations) of data in an abrupt fashion during the period of observation. Hence, those statistical techniques that have been designed to detect outliers and other aberrations become ineffective in such situations.

As in many previous studies, we also consider dependent random inputs and, in turn, dependent outputs. This enables us to use the method in a myriad of applications. We have carefully proven the underlying theoretical results and illustrated the method using stationary time series under various contamination by anomalies scenarios. It is useful to recall at this point that historical data as well as  subject-matter knowledge are helpful in deciding how to reduce non-stationary random sequences to stationary ones, and transformations such as differencing and de-periodization can especially be helpful \citep[see, e.g.,][]{BJRL2015,BD1991}. We shall rely on such transformations in our real-world illustrative examples in Section~\ref{illustrate}.

The departure from the earlier explored by \cite{GZ2020} case of independent and identically distributed (iid) inputs to the herein tackled dependent random inputs and thus outputs requires considerable technical innovation and have given rise to notions such as $p$-reasonable order and temperate dependence, whose connections to classical notions such as phantom distributions have been illuminated. We note that the just mentioned parameter $p$ is related the $p$-th finite moment of inputs, and thus to the tail heaviness of the input distribution.

The rest of the paper is organized as follows.
In Section~\ref{notation} we introduce and discuss basic notation.
In Section~\ref{illustrate} we analyze two actual examples that illustrate the anomaly-detection method what we develop in subsequent sections.
In Section~\ref{avr} we introduce an experiment that further illustrates and guides our technical considerations.
In Section~\ref{prelim} we lay out a foundation for our anomaly-detection method.
In Section~\ref{expdesign} we illustrate the performance of the method graphically.
In Section~\ref{orderly systems} we explain how the method acts in anomaly-free orderly systems, whereas in Section~\ref{disorderly systems} we show how the method detects anomalies when they are present.
Section~\ref{conclude} concludes the paper with a brief summary of main results and several suggestions for future studies. Although some graphical illustrations are already given in the main body of the paper, Appendix~\ref{graphs} contains more extensive illustrations. Technical details such as lemmas and proofs are in Appendix~\ref{proofs}.

\section{Setting the stage: basic notation}
\label{notation}

Throughout the paper, we assume the existence of a function $h:\mathbb{R}^{1+d}\to \mathbb{R}$, called transfer function, that connects inputs $X_t\in \mathbb{R}$ and outputs $Y_t\in \mathbb{R}$ via the equation
\begin{equation}\label{model}
Y_t=h(X_t,\boldsymbol{\varepsilon}_t),
\end{equation}
where $ \boldsymbol{\varepsilon}_t\in \mathbb{R}^d$ are $d$-dimensional exogenous random variables, called anomalies. They can of course be equal to  $\mathbf{0}:=(0,\dots , 0)\in \mathbb{R}^{d}$, meaning that the system is free of anomalies. In this case the transfer function reduces to $h_0:\mathbb{R}\to \mathbb{R}$ defined by
\[
h_0(x)=h(x,\mathbf{0}),
\]
which we call the baseline function. When we wish to emphasize that outputs $Y_t$ arise from this anomaly-free case, we use the notation
\begin{equation}\label{error-free}
Y^0_t=h_0(X_t).
\end{equation}

To illustrate, let $d=2$, in which case we have  $\boldsymbol{\varepsilon}_t=(\varepsilon_{1,t},\varepsilon_{2,t})$. We may think of  $\varepsilon_{1,t}$ as anomalies affecting the inputs $X_t$ before they enter the control system, and $\varepsilon_{2,t}$ as anomalies affecting the (already affected) inputs when they exit the system. Thinking in this fashion, we arrive at the following transfer functions $h:\mathbb{R}^{3}\to \mathbb{R}$, which we use in our numerical experiment later in the paper:
\begin{enumerate}[TF1:]
  \item\label{tf1} $h(x,y,0):=h_0(x+y)$ when the system is affected by anomalies only at the input stage;
  \item\label{tf2}  $h(x,0,z):=h_0(x)+z$ when the system is affected by anomalies only at the output stage;
  \item\label{tf3}  $h(x,y,z):=h_0(x+y)+z$ when the system is affected by anomalies at the input and output stages.
\end{enumerate}

Besides the additive model, there are other models and thus other transfer functions that link inputs with exogenous variables \cite[e.g.,][]{Finkelshtainetal(1999),Frankeetal(2006), Frankeetal(2011),Guoetal(2018)}. Arguments in favour of using one model over another can be found in studies by, e.g.,  \cite{Peroteetal(2015)}, \cite{Su(2016)}, \cite{Semenikhineetal(2018)}, \cite{Guoetal(2018)}, and \cite{Guoetal(2019)}.

Model~\eqref{model} arises in many areas, including regression analysis, classification, and, generally, in machine learning \citep[e.g.,][]{htf2009}. It also relates our research to the so-called strategy-proof estimation in regression \cite[e.g.,][]{PeroteandPerote-Pena(2004),Peroteetal(2015)}.

\begin{note}
Visually, model~\eqref{model} may give the impression that the outputs depend only on the current value of inputs, but $X_t$, at least in the case of causal time series, is a linear combination of the contemporary and historical values of the underlying white noise. That is, $X_t$ is the inner product $\langle \boldsymbol{\beta}, \mathbf{Z}_t \rangle $ of a (finite or infinite) sequence $\boldsymbol{\beta}=(\beta_i)_{i\ge 0}$ of parameters and  a (finite or infinite) sequence $\mathbf{Z}_t=(Z_{t-i})_{i\ge 0}$ of uncorrelated random variables. We shall elaborate more on this topic in Section~\ref{conclude},  in the context of potential future work.
\end{note}

Although the function $h$ might be known to, e.g., the control system's manufacturer, its precise formula may not be known to those working in the area of anomaly detection (e.g., company's IT personnel). The transfer function might even deviate from its original specifications due to, e.g., wear and tear. Furthermore, in the context of, say, economic variables, which we shall encounter in the next section, their relationships might be postulated by academics but in actuality, the true relationships (i.e., the transfer mechanisms from one to another) usually deviate from any model. In addition, the relationships might be, and usually are, affected by exogenous economic and other variables. Hence, to accommodate various scenarios associated with model uncertainty, we shall aim at deriving results for very large classes of transfer functions, that is, under very mild assumptions.

\section{Two actual illustrations}
\label{illustrate}

This section is devoted to two real-world illustrations, which make up the contents of the following two subsections. The illustrations are based on pairs $(X_t,Y_t)$ of economic variables, which are  observed only for $1\le t \le n$ for some sample sizes $n$. The inputs and outputs are dependent. With some luck, one of these variables can be assessed from the values of another variable, although not precisely because the transfer mechanism (i.e., the transfer function $h$) is not known, except of course in academic models. This, however, is not of concern to us because our primary interest is in finding out whether exogenous economic or other variables are  systematically affecting the relationship between $X_t$ and $Y_t$. That is, we want to answer the following question:
\begin{question}\label{question}
Are there $ \boldsymbol{\varepsilon}_t$'s in model~\eqref{model}?
\end{question}

At this point, we may instinctively start to debate as to the extent of smoothing of the scatterplot $(X_1,Y_1), \dots , (X_n,Y_n)$, assuming that we want to do it: extreme undersmoothing would result in a wiggly function $h$ with no errors $ \boldsymbol{\varepsilon}_t$, whereas too much smoothing would result in a nice function $h$ but with large errors $ \boldsymbol{\varepsilon}_t$.
Hence, the researcher's subjectively chosen level of smoothing determines whether or not there are errors in the model, and how large they are. We therefore do not do any smoothing. Our task is to find out if the hypothetical transfer function from one economic variable to another is affected by exogenous systematic variables, whatever they might be.

\subsection{Dow Jones and Australian All Ordinaries Indices}

The data \citep[Example~8.1.1]{BD2016} consist of the closing values of the Dow Jones Index (DJ) and the Australian
All Ordinaries Index of Share Prices (AO) recorded at the termination of trading on 251
successive trading days up to August 26, 1994. From the original data, we calculate the percentage  relative price changes, known as percentage returns, and plot them in Figure~\ref{RDJandRAO}.
\begin{figure}[h!]
    \centering
    \begin{subfigure}[b]{0.39\textwidth}
        \includegraphics[width=\textwidth]{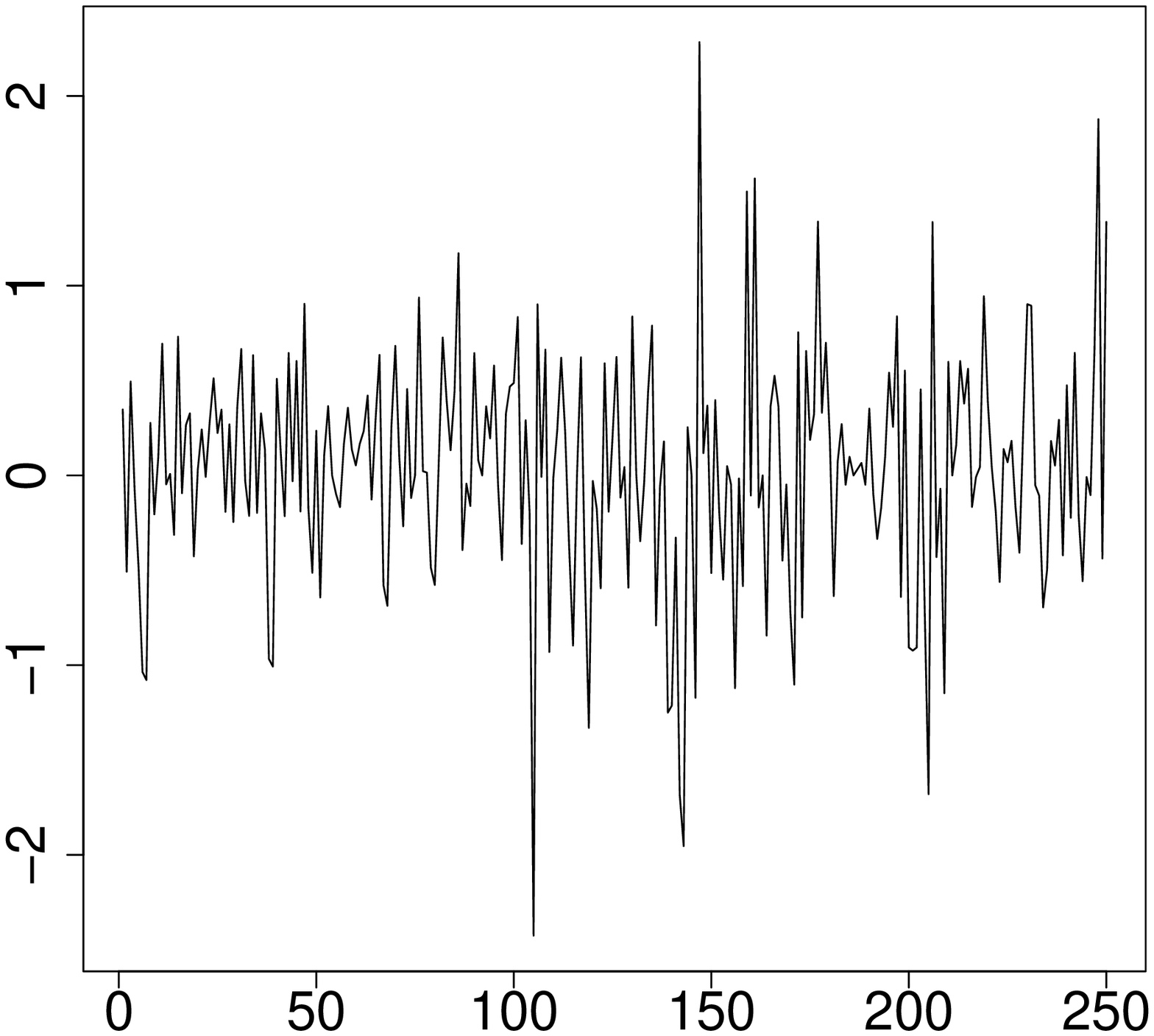}
        \caption{DJ.}\label{RDJ}
    \end{subfigure}
\hspace{10mm}
    \begin{subfigure}[b]{0.39\textwidth}
        \includegraphics[width=\textwidth]{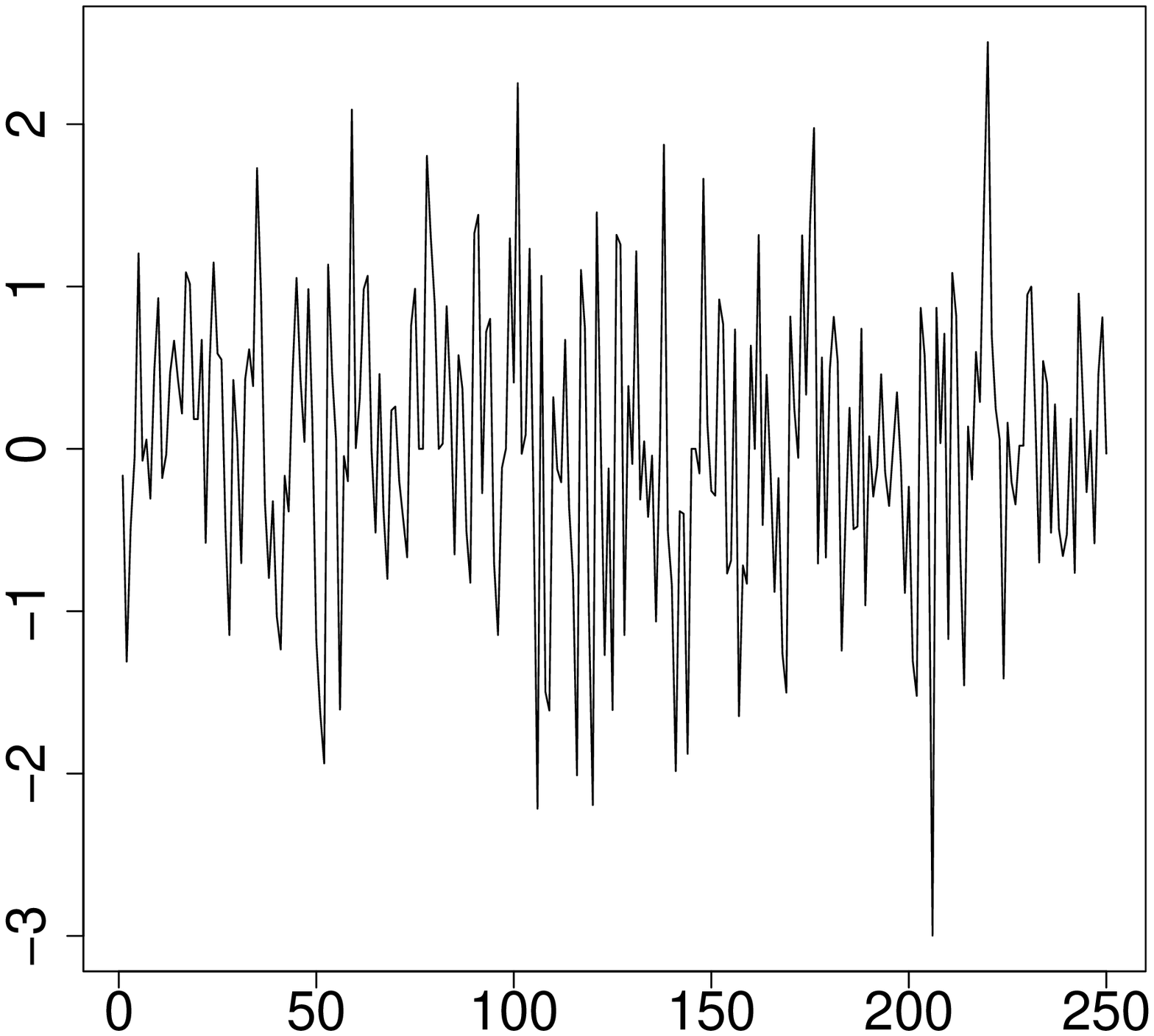}
        \caption{AO.}\label{RAO}
    \end{subfigure}
    \caption{The percentage returns of DJ and AO.}\label{RDJandRAO}
\end{figure}

To answer Question~\ref{question}, we employ an index $I_n$, whose mathematical definition will be introduced in Section~\ref{prelim}. At the moment, what really matters and interests us are the conclusions that we can reach, for which we use the following decision rules:
\begin{description}
\item[Decision 1:]
If the transition from inputs to outputs is accomplished without systematic interference, then, when the sample size $n$ grows, the index $I_n$ stays away from $1/2$.
\item[Decision 2:]
If, however, the transition is exposed to systematic interference, then, when the sample size $n$ grows, the index $I_n$ tends to $1/2$.
\end{description}

Due to limited sample sizes $n$ or some other reasons, it may not always be clear whether or not the index $I_n$ tends to $1/2$. In such cases we additionally calculate another index, denoted by $B_{n,2}$, whose mathematical definition will be given in Section~\ref{prelim}.
The meaning of the index $B_{n,2}$ relies on its growth to infinity, and it supplements Decisions~1 and~2 in the following way:
\begin{description}
\item[Supplement 1:]
If the transition from inputs to outputs is accomplished without systematic interference, then, when the sample size $n$ grows, the index $B_{n,2}$ stays asymptotically bounded, that is,  $B_{n,2}=O_{\mathbb{P}}(1)$ in mathematical terms.
\item[Supplement 2:]
If, however, the transition is exposed to systematic interference, then, when the sample size $n$ grows, the index $B_{n,2}$ tends to infinity.
\end{description}

Equipped with these indices $I_n$ and $B_{n,2}$, we can now look at the closing values of DJ and AO. The first question that arises is which of the two variables should be the ``input.'' The answer is naturally related to causality, but to avoid any prejudicial statement and thus controversy, we do our analysis both ways:  first we take DJ as the input and thus AO as the output, and then interchange their roles. The two cases with their respective indices $I_n$ and $B_{n,2}$ are visualized in Figure~\ref{DJandAOindices}.
\begin{figure}[h!]
    \centering
    \begin{subfigure}[b]{0.39\textwidth}
        \includegraphics[width=\textwidth]{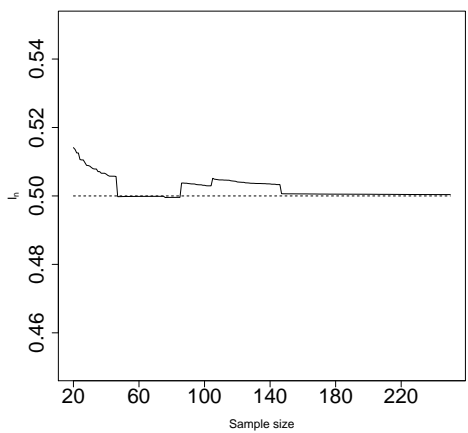}
        \caption{$I_n$ when $(X,Y)=(\DJI,\AO)$.}\label{YAOI}
    \end{subfigure}
\hspace{10mm}
    \begin{subfigure}[b]{0.39\textwidth}
        \includegraphics[width=\textwidth]{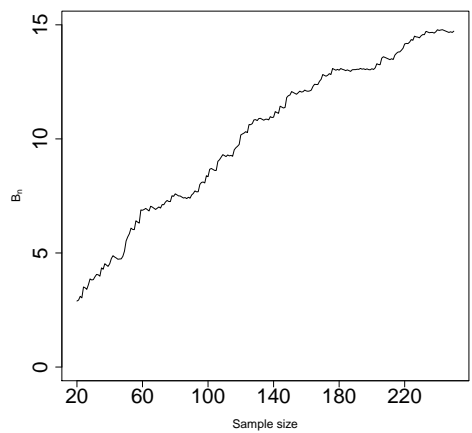}
        \caption{$B_{n,2}$ when $(X,Y)=(\DJI,\AO)$.}\label{YAOB}
    \end{subfigure}
\hspace{10mm}
    \begin{subfigure}[b]{0.39\textwidth}
        \includegraphics[width=\textwidth]{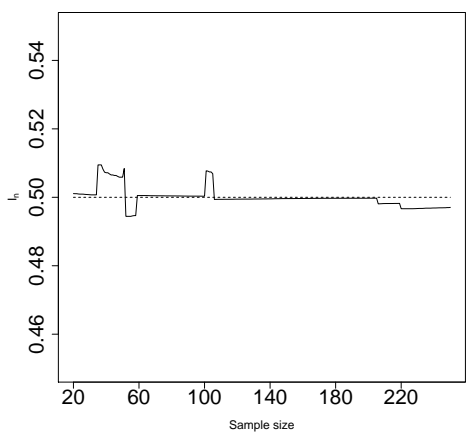}
        \caption{$I_n$ when $(X,Y)=(\AO,\DJI)$.}\label{YDJI}
    \end{subfigure}
\hspace{10mm}
    \begin{subfigure}[b]{0.39\textwidth}
        \includegraphics[width=\textwidth]{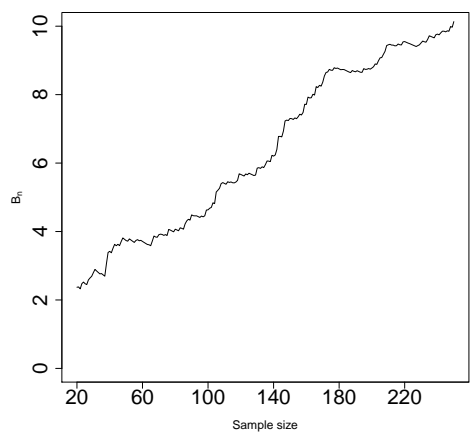}
        \caption{$B_{n,2}$ when $(X,Y)=(\AO,\DJI)$.}\label{YDJB}
    \end{subfigure}
    \caption{The indices $I_n$ and $B_{n,2}$ corresponding to DJ and AO with respect to the sample sizes $n=20,\ldots,149$.}\label{DJandAOindices}
\end{figure}

The graphs suggest that there is exogenous interference when transferring DJ to AO, and also the other way around, although there is a little dip below $1/2$ on the right-hand side of Figure~\ref{YDJI}, which may not be of importance given its small value. The index $B_{n,2}$ sends the same message as $I_n$. Hence, we comfortably conclude the existence of interference, although more data might overturn the conclusion.

Note that in Figure~\ref{DJandAOindices} we always start graphing the panels at $n=20$. This is so because for small values of $n$, the index $I_n$ fluctuates wildly between $0$ and $1$, as it should, which will be clearly seen from the mathematical definition of $I_n$. Hence, by starting at $n=20$, we are able to better depict the behaviour of $I_n$ near $1/2$, which is what really matters for our anomaly-detection method.

\subsection{Sales with a leading indicator}

The data \citep[Example~8.1.2]{BD2016} consist of 150-day sales with a leading indicator, plotted in Figure~\ref{LS origin}.
\begin{figure}[h!]
    \centering
    \begin{subfigure}[b]{0.39\textwidth}
        \includegraphics[width=\textwidth]{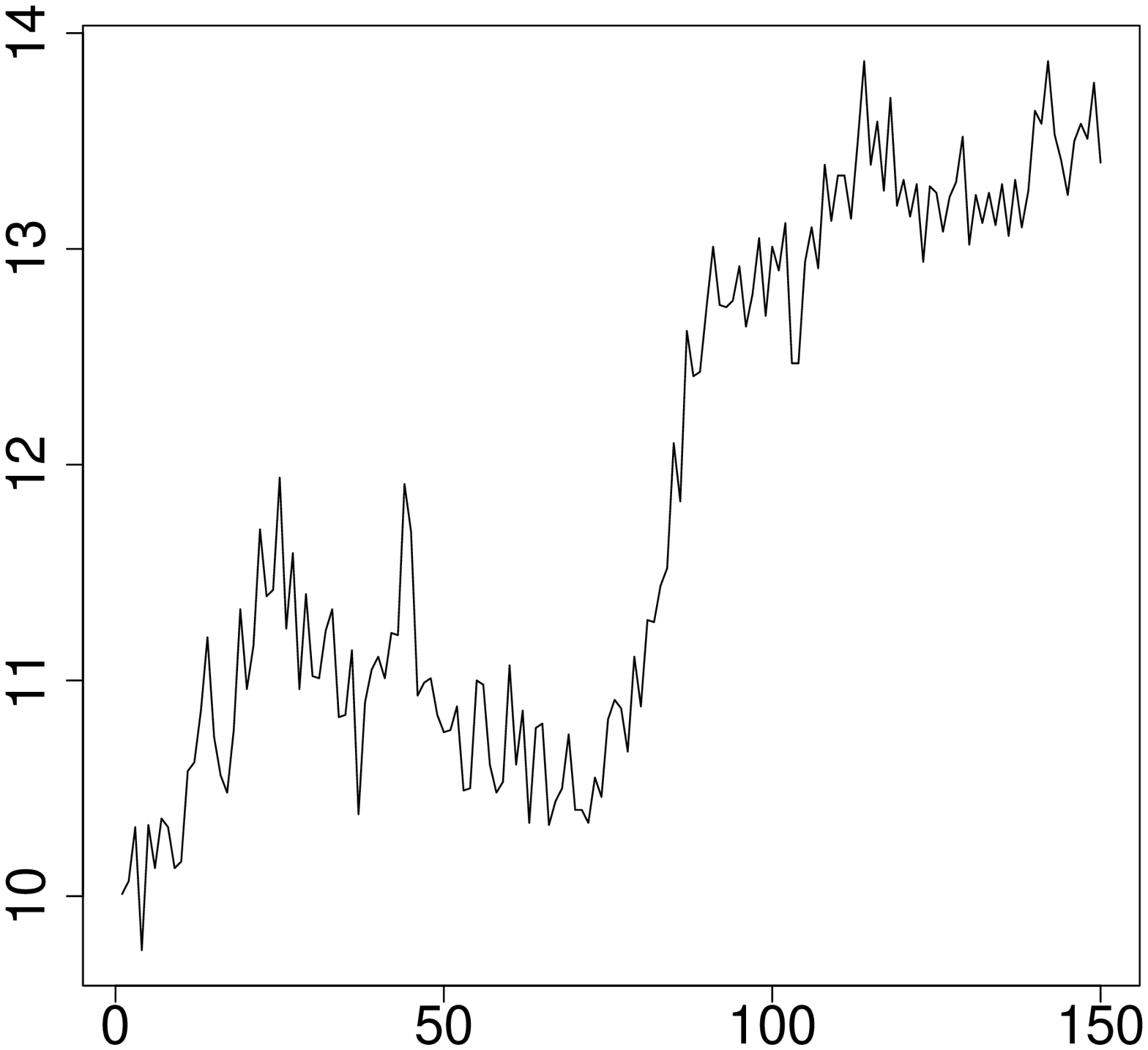}
        \caption{The leading indicator}
    \end{subfigure}
\hspace{10mm}
    \begin{subfigure}[b]{0.39\textwidth}
        \includegraphics[width=\textwidth]{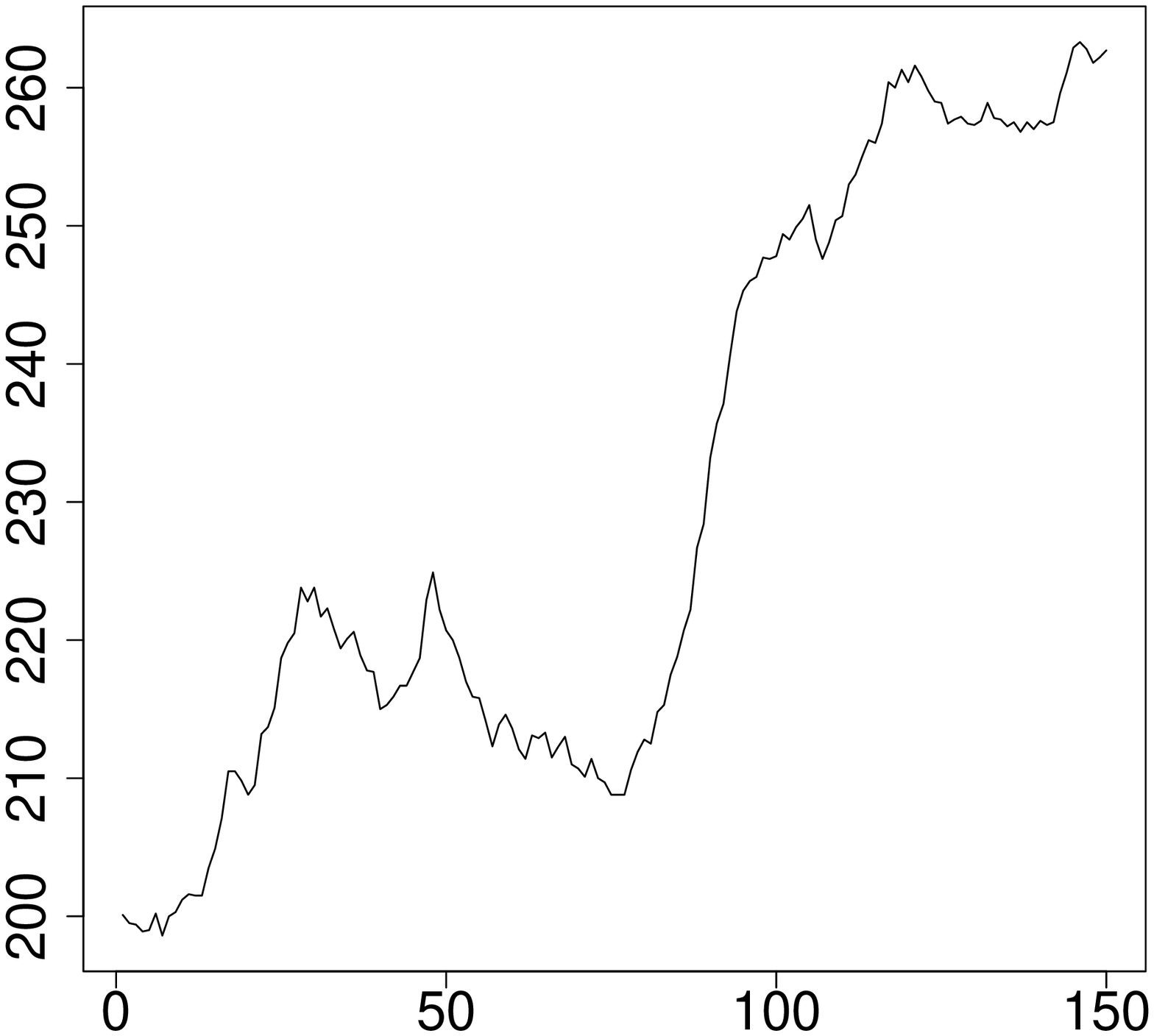}
        \caption{The sales}
    \end{subfigure}
    \caption{The original 150-day data of the leading indicator and the sales.}\label{LS origin}
\end{figure}
The data are non-stationarity, and so we difference it at lag~$1$. The transformed data are plotted in Figure~\ref{LS diff}.
\begin{figure}[h!]
    \centering
    \begin{subfigure}[b]{0.39\textwidth}
        \includegraphics[width=\textwidth]{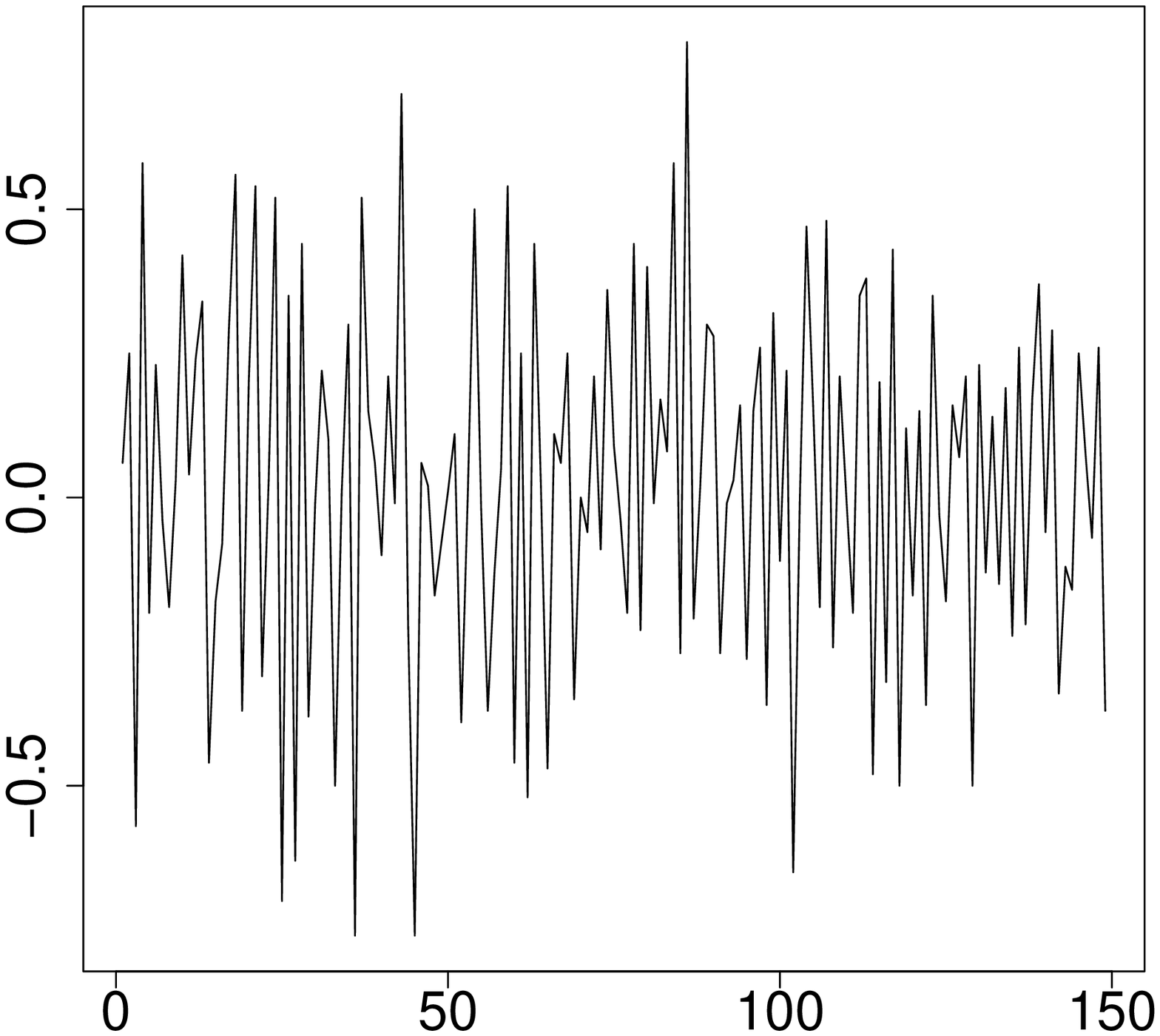}
        \caption{The leading indicator}
    \end{subfigure}
\hspace{10mm}
    \begin{subfigure}[b]{0.39\textwidth}
        \includegraphics[width=\textwidth]{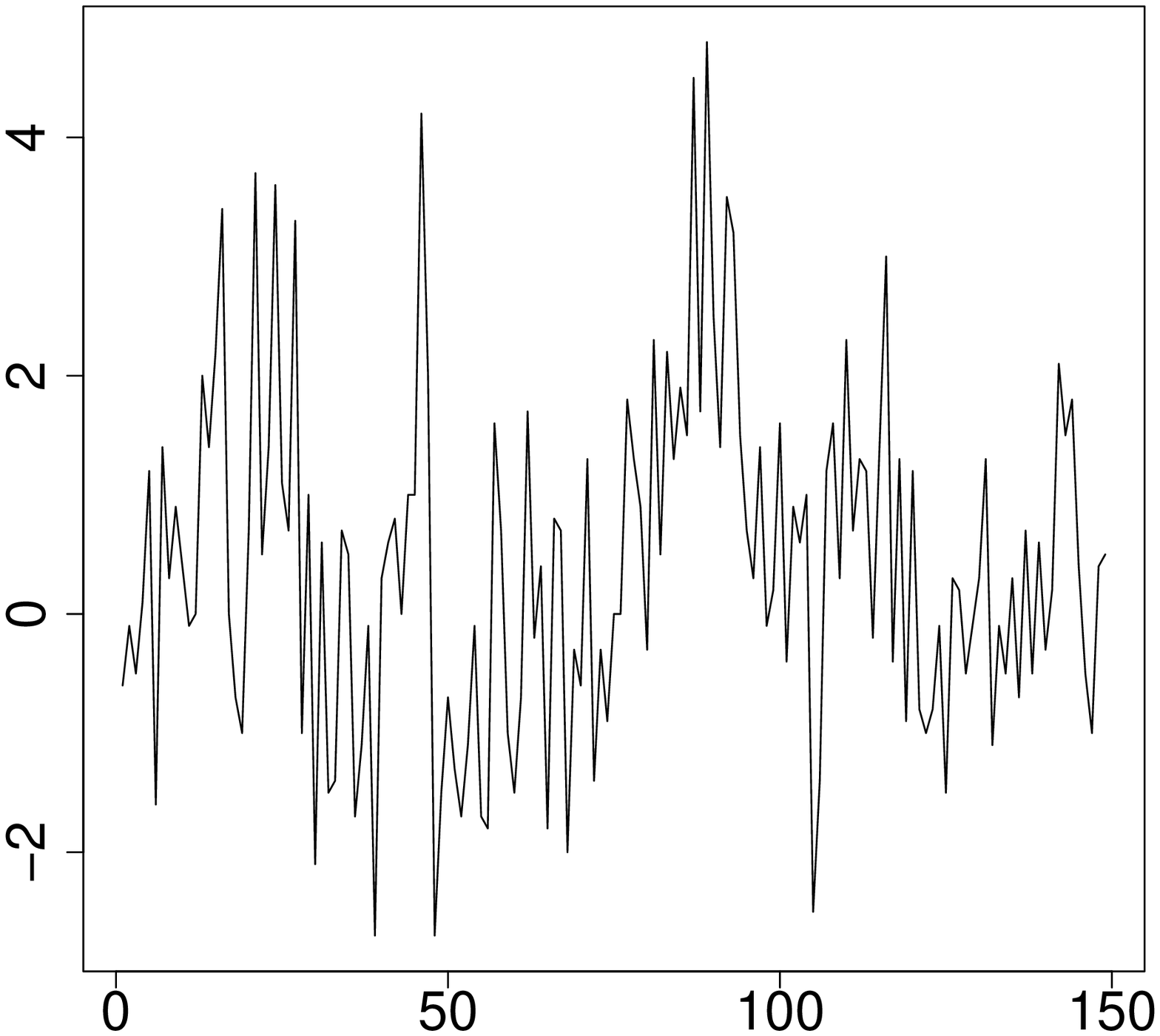}
        \caption{The sales}
    \end{subfigure}
    \caption{The 1-lag differences of the leading indicator and the sales.}\label{LS diff}
\end{figure}
We set the differenced leading indicator as the input and the differenced sales as the output. There are two reasons for this choice: first, it makes economic sense, and second, the differenced leading indicator exhibits stationarity whereas differenced sales seem to hint at some periodicity. Having thus made these choices, we next calculate the indices $I_n$ and $B_{n,2}$, which are depicted in  Figure~\ref{LS I B}.
\begin{figure}[h!]
    \centering
    \begin{subfigure}[b]{0.39\textwidth}
        \includegraphics[width=\textwidth]{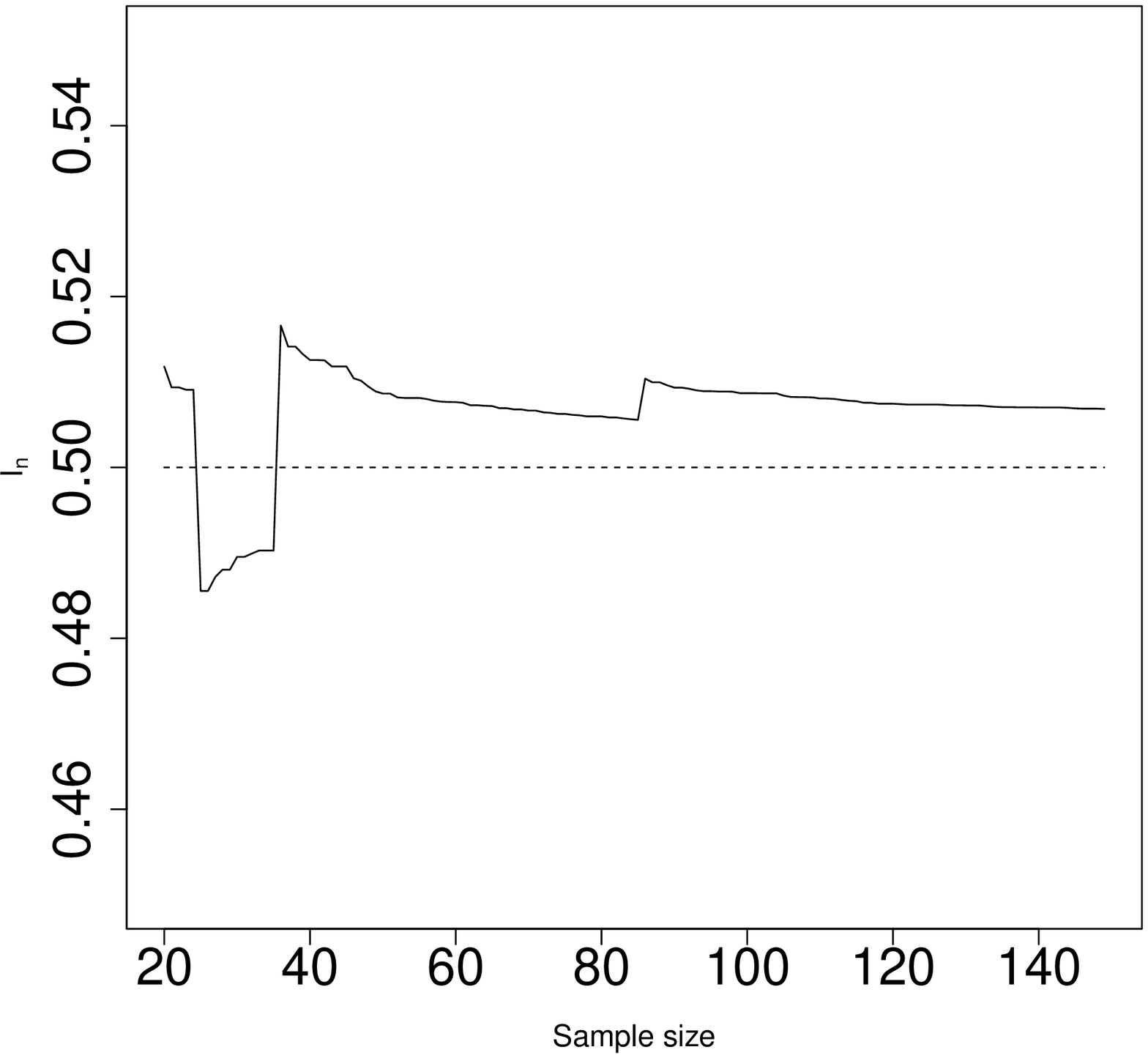}
        \caption{$I_n$.}
    \end{subfigure}
\hspace{10mm}
    \begin{subfigure}[b]{0.39\textwidth}
        \includegraphics[width=\textwidth]{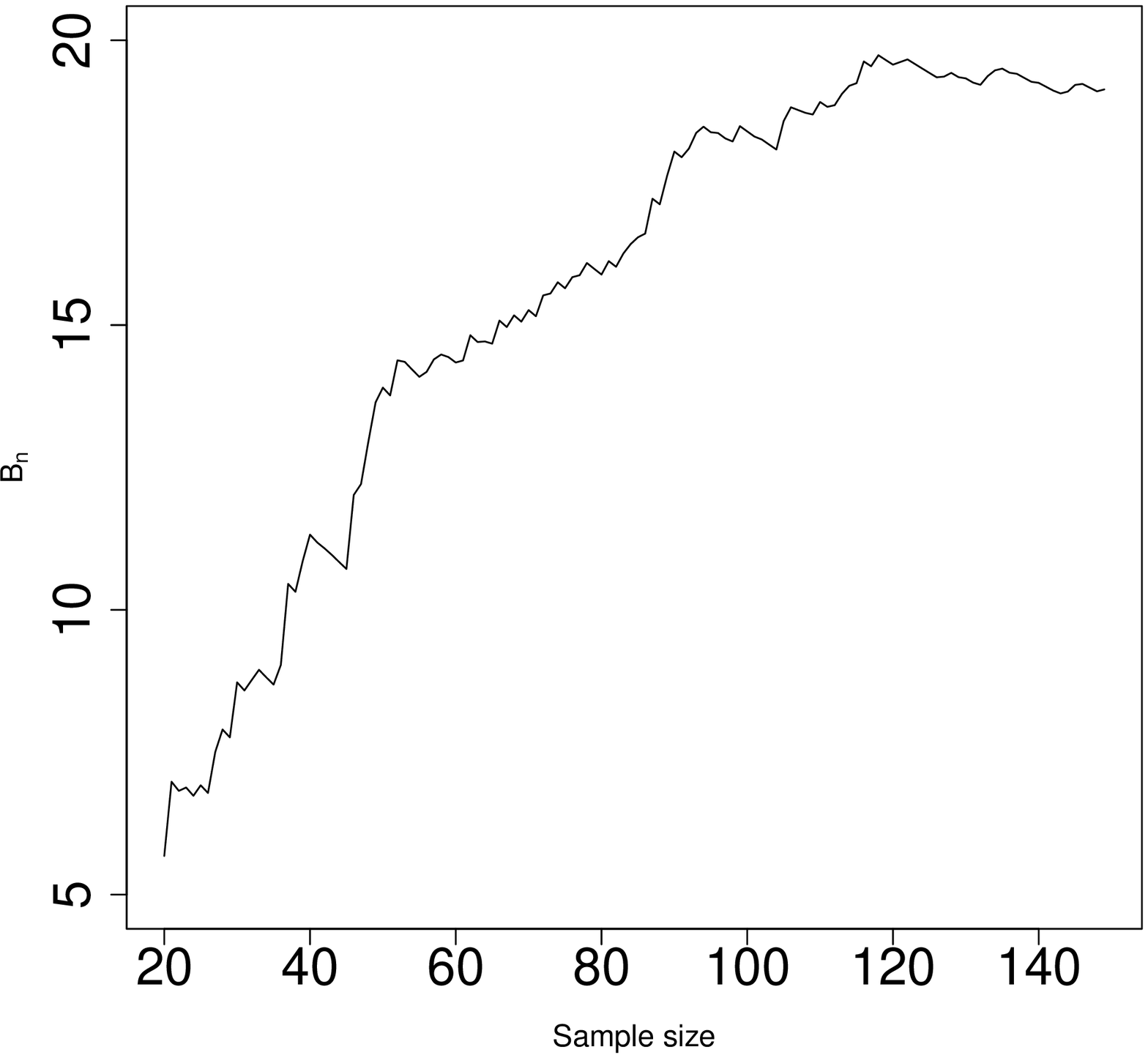}
        \caption{$B_{n,2}$.}
    \end{subfigure}
    \caption{The indices $I_n$ and $B_{n,2}$ of the 150-day sales data with respect to $n=20,\ldots,149$.}\label{LS I B}
\end{figure}
The index $I_n$ does not tend to $1/2$ and the index $B_{n,2}$ stops rising at about $n=120$. These observations suggest the lack of exogenous interference when transiting from the inputs to the outputs, that is, from the leading variable to the sales.

\section{Introducing a controlled experiment}
\label{avr}

To explore how the anomaly-detection method works, we have designed an experiment based on a simple (from the statistical modeling perspective) control system, which is the automatic voltage regulator (AVR) that has been an active research area with a considerable number of innovative designs and algorithms proposed in the literature. For details, we refer to the recent contributions by, e.g., \cite{CD2018}, \cite{G2020}, and extensive references therein.

In its simplest form, the AVR intakes voltages $X_{t}$ and outputs more stable voltages $Y_{t}$ within a pre-specified service range $[a,b]$.  When it is known that the system is free of anomalies, the outputs are
\[
Y_t^0 =(X_t\wedge b -a)_{+}+a
=
\left\{
  \begin{array}{ll}
    a   & \hbox{when } X_t<a,  \\
    X_t & \hbox{when } a\le X_t \le b,  \\
    b   & \hbox{when } X_t>b.
  \end{array}
\right.
\]
Hence, using the ``clamped'' baseline function (see Figure~\ref{h-example})
\begin{equation}\label{clamped-h}
h_c(x)=(x\wedge b -a)_{+}+a,
\end{equation}
the outputs are $Y_t^0=h_c(X_t)$.
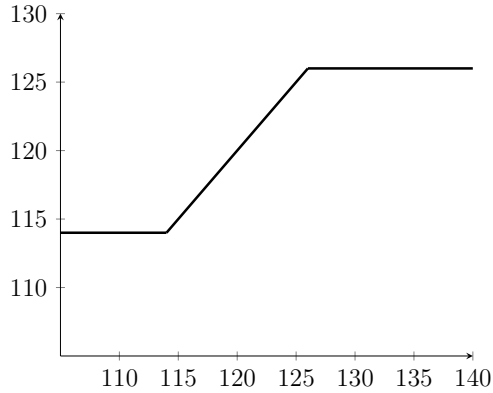
\begin{figure}[h!] \bigskip
\centering
\begin{tikzpicture}[thick, scale=0.8]
\begin{axis}[axis lines=middle,xmin=105,xmax=140,ymin=105,ymax=130]
\addplot+[no marks,domain=100:114,samples=200, very thick, black] {114};
\addplot+[no marks,domain=114:126,samples=200, very thick, black] {x};
\addplot+[no marks,domain=126:140,samples=200, very thick, black] {126};
\end{axis}
\end{tikzpicture}
    \caption{The transfer function $h_c$ with $a=114$ and $b=126$ corresponding to the automatic voltage regulator with the transfer window $120\pm 6 $ volts (i.e., $\pm 5\% $).}
    \label{h-example}
\end{figure}
Note that the clamped function $h_c$ is Lipschitz continuous but not continuously differentiable, and we shall keep this example in mind when deriving results in Sections~\ref{orderly systems} and \ref{disorderly systems} so that to avoid making assumptions that would exclude functions such as $h_c$.

\begin{note}
Transfer functions similar to the one in Figure~\ref{h-example} appear naturally in various reinsurance treaties, where direct insurers and reinsurers calculate their risk transfers using formulas resembling~\eqref{clamped-h} with pre-determined deductibles and policy limits as parameters, very much like $a$ and $b$ in equation~\eqref{clamped-h}. Determining whether or not anomalies (e.g., processing errors) are affecting such transfers is of interest to all parties involved.
\end{note}

In the numerical experiment in Section~\ref{expdesign}, we shall use the clamped function $h_c$ as the baseline function $h_0$, and then use the transfer functions~TF\ref{tf1}--TF\ref{tf3} (Section~\ref{notation}) as $h$. The anomaly-free inputs $X_t$ in the experiment are assumed to follow the $\text{ARMA}(1,1)$ time series, with the input anomalies $\delta_t$ and the output anomalies $\epsilon_t$ being independent within and among them, and coming from a certain parametric distribution. Note that such anomalies can be interpreted as genuinely unintentional; they may arise from, e.g., systematic measurement-errors due to faulty equipment. In Section~\ref{orderly systems} we shall develop results for the anomaly-free outputs $Y^0_t=h_0(X_t)$. In Section~\ref{disorderly systems} we shall do the same for the anomaly affected case, that is, when $\boldsymbol{\varepsilon}_t=(\delta_t,\epsilon_t)\in \mathbb{R}^2$ and thus $Y_t=h(X_t,\delta_t,\epsilon_t)$.

\begin{note}
When the inputs $X_t$  are iid random variables, which is a very special case of the present paper,  anomaly detection in systems with $\delta_t=0$ has been studied by \citet{GZ2020}, with $\epsilon_t=0$ by \citet{GZ2018}, and with arbitrary anomalies $(\delta_t,\epsilon_t)$ by \citet{GZ2019metron}. In the present paper we extend those iid-based results to scenarios when inputs are governed by stationary time-series models, which is a highly important feature from the practical point of view. To achieve these goals, a considerable technical work has to be done, which we  present in Appendix~\ref{proofs}.
\end{note}

We are now ready to familiarize with the anomaly-detection method, and in particular with mathematical definitions of (dis)orderly systems and of the indices $I_n$ and $B_{n,2}$, as well as of the more general index $B_{n,p}$.

\section{Anomaly detection: a foundation}
\label{prelim}

Let $X_1,\dots , X_n$ be the observable part of a stationary sequence of inputs $X_t$, whose marginal cumulative distribution functions (cdf's) are the same; we denote them by $F$. With these observable inputs, there are associated outputs $Y_1,\dots , Y_n$, and so we are dealing with the random input-output pairs $(X_1,Y_1), \dots , (X_n,Y_n)$. Based on them, we wish to determine whether the system transferring the inputs into the outputs is functioning as intended or is systematically affected by anomalies. To successfully tackle this problem, we first need to rigorously define (dis)orderly systems.

Let the cdf $F$ be continuous, which allows us without loss of generality to state that all the inputs $X_1,\dots , X_n$ are different. Hence, their order statistics
\[
X_{1:n}<X_{2:n}< \cdots < X_{n:n}
\]
are strictly increasing. This facilitates unambiguous definition of the concomitants of the outputs $Y_1,\dots , Y_n$, which are denoted by $Y_{1,n}, \dots ,  Y_{n,n}$ and defined by the equation
\[
Y_{t,n}=\sum_{s=1}^n Y_{s}\mathds{1}\{X_{s}=X_{t:n}\},
\]
where $\mathds{1}$ is the indicator: it is equal to $1$ when the condition $X_{s}=X_{t:n}$ is satisfied and $0$ otherwise. We are now in the position to define (dis)orderly systems.

\begin{definition}\label{def-00}
We say that the outputs and thus the system are \textit{in $p$-reasonable order} with respect to the inputs for some $p>0$ if
\[
B_{n,p}:={1\over n^{1/p}}\sum_{t=2}^n |Y_{t,n}-Y_{t-1,n}|=O_{\mathbb{P}}(1)
\]
when $n\to \infty $. If, however, $B_{n,p}\to_{\mathbb{P}} \infty $,
then we say that the outputs and thus the system are \textit{out of $p$-reasonable order}  with respect to the inputs.
\end{definition}

Although this definition is a technicality that is necessary for our anomaly-detection method, it is also natural from the practical point of view. Indeed, detection of anomalies in disorderly systems can hardly be a task worth pursuing. As to the parameter $p$, its role in Definition~\ref{def-00}  is to control tail heaviness of the outputs, and we shall later see that this is achieved by controlling tail heaviness of the inputs. Roughly speaking, we can view $p$ as the order of finite moments. Note that when the outputs are in $p$-reasonable order, the outputs are in $r$-reasonable order for all $r\le p$. On the other hand, if the outputs are out of $p$-reasonable order, the outputs are out of $r$-reasonable order for all $r\ge p$. Hence, we can say that for any given system, there is a threshold $p$ delineating the sets of in-order and out-of-order outputs.

To successfully detect anomalies affecting a system, we of course need to know that the brand new system was in orderly state. For a rigorous definition of the latter notion, we slightly adjust Definition~\ref{def-00} as follows.

\begin{definition}\label{def-00a}
The anomaly-free outputs and thus the anomaly-free system are \textit{in $p$-reasonable order} with respect to the inputs for some $p>0$ if
\[
B^0_{n,p}:={1\over n^{1/p}}\sum_{t=2}^n |Y^0_{t,n}-Y^0_{t-1,n}|=O_{\mathbb{P}}(1)
\]
when $n\to \infty $. (For obvious reasons, we do not consider systems that are out of order when they are free of anomalies.)
\end{definition}

To illustrate the anomaly-free case, that is, when all $\boldsymbol{\varepsilon}_t$'s are equal to $\mathbf{0}$, if all the inputs happen to be equal to the same constant, say $c$, then the system is in $p$-reasonable order for every $p>0$, because $B^0_{n,p}=0$. More generally, next Theorem~\ref{th-1aa} will show that if the transfer function $h$ is sufficiently smooth (e.g., Lipschitz continuous), then the system is in $p$-reasonable order for some $p>0$ even when the inputs are random, although not too heavy tailed. We need to introduce additional notation before we can formulate the theorem.

Let $a_{X}$ and $b_{X}$ be the endpoints of the support of the cdf $F$, that is,
\begin{align*}
a_X&=\sup \{ x\in \mathbb{R} : F(x)=0\},
\\
b_X&=\inf \{ x\in \mathbb{R} : F(x)=1\}.
\end{align*}
These endpoints can of course be infinite, but they never coincide because the cdf $F$ is assumed to be continuous. Therefore, the open interval $(a_{X},b_{X})$ is never empty.

Next, we recall that $h_0$ is called absolutely continuous if there is a function $h^*_0$, called the Radon-Nikodym derivative of $h_0$, that satisfies the equation
\[
h_0(v)-h_0(u) =\int_{u}^{v}h^*_0(x)\dd x
\]
for all $u\le v$. Now we are ready to formulate the theorem that describes the circumstances under which the anomaly-free system is orderly.

\begin{theorem}\label{th-1aa}
The anomaly-free outputs are in $p$-reasonable order with respect to the inputs for some $p\ge 1$ if there is $\alpha \in [1,p]$ such that $\mathbb{E}(|X_1|^{p/\alpha})<\infty $ and one of the following conditions holds:
\begin{enumerate}[\rm (i)]
\item \label{part-1}
If $\alpha=1$, then the baseline function $h_0$ is Lipschitz continuous, that is, there is a constant $K\ge 0$ such that, for all $x,y\in [a_{X},b_{X}]$,
\[
 |h_0(x) - h_0(y)| \le K|x - y| .
\]
\item \label{part-2}
If $\alpha>1$, then the baseline function $h_0$ is absolutely continuous on the interval $[a_{X},b_{X}]$ and its Radon-Nikodym derivative $h^*_0$
satisfies
\[
\int_{-\infty }^{\infty }|h^*_0(x)|^{\alpha/(\alpha-1)}\dd x <\infty .
\]
\end{enumerate}
\end{theorem}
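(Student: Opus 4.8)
The plan is to observe first that in the anomaly-free case the concomitants reduce to a deterministic relabelling of the inputs. Since $Y^0_s=h_0(X_s)$, the unique index $s$ with $X_s=X_{t:n}$ (the inputs being a.s.\ distinct because $F$ is continuous) gives $Y^0_{t,n}=h_0(X_{t:n})$. Hence
\[
B^0_{n,p}=\frac{1}{n^{1/p}}\sum_{t=2}^n\bigl|h_0(X_{t:n})-h_0(X_{t-1:n})\bigr|,
\]
and the strategy is to bound this sum by a power of the sample range $X_{n:n}-X_{1:n}$ and then to control the range through the sample maximum $M_n:=\max_{1\le t\le n}|X_t|$ using the moment assumption. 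The point is that each increment is taken across a consecutive gap $[X_{t-1:n},X_{t:n}]$, and these gaps tile the range.

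For part~\eqref{part-1}, where $\alpha=1$, Lipschitz continuity of $h_0$ together with the monotonicity of the order statistics gives $|h_0(X_{t:n})-h_0(X_{t-1:n})|\le K(X_{t:n}-X_{t-1:n})$, so the sum telescopes:
\[
\sum_{t=2}^n\bigl|h_0(X_{t:n})-h_0(X_{t-1:n})\bigr|\le K\,(X_{n:n}-X_{1:n})\le 2K\,M_n .
\]
Thus $B^0_{n,p}\le 2K\,M_n/n^{1/p}$, and it remains only to show $M_n/n^{1/p}=O_{\mathbb{P}}(1)$.

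For part~\eqref{part-2}, where $\alpha>1$, I would write each increment as $h_0(X_{t:n})-h_0(X_{t-1:n})=\int_{X_{t-1:n}}^{X_{t:n}}h^*_0(x)\dd x$ and apply H\"older's inequality twice, with the conjugate exponents $\alpha/(\alpha-1)$ and $\alpha$. The first application, on each gap, yields
\[
\bigl|h_0(X_{t:n})-h_0(X_{t-1:n})\bigr|\le\Bigl(\int_{X_{t-1:n}}^{X_{t:n}}|h^*_0|^{\alpha/(\alpha-1)}\Bigr)^{(\alpha-1)/\alpha}(X_{t:n}-X_{t-1:n})^{1/\alpha}.
\]
The second application, now the discrete H\"older inequality over $t$, combined with the fact that the gap integrals sum to $\int_{X_{1:n}}^{X_{n:n}}|h^*_0|^{\alpha/(\alpha-1)}\le\int_{-\infty}^{\infty}|h^*_0|^{\alpha/(\alpha-1)}=:C<\infty$ and that the gap lengths sum to the range, gives
\[
\sum_{t=2}^n\bigl|h_0(X_{t:n})-h_0(X_{t-1:n})\bigr|\le C^{(\alpha-1)/\alpha}\,(X_{n:n}-X_{1:n})^{1/\alpha}\le C^{(\alpha-1)/\alpha}\,(2M_n)^{1/\alpha}.
\]
Hence $B^0_{n,p}\le C^{(\alpha-1)/\alpha}2^{1/\alpha}\,M_n^{1/\alpha}/n^{1/p}$, and it remains to show $M_n^{1/\alpha}/n^{1/p}=O_{\mathbb{P}}(1)$, i.e.\ $M_n/n^{\alpha/p}=O_{\mathbb{P}}(1)$.

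Both parts thus collapse to a single maximal estimate, which is where stationarity enters — but only through the common marginal $F$. I would bound, for any $\lambda>0$, the tail of the maximum by a union bound and Markov's inequality,
\[
\mathbb{P}\bigl(M_n>\lambda\,n^{\alpha/p}\bigr)\le\sum_{t=1}^n\mathbb{P}\bigl(|X_t|>\lambda\,n^{\alpha/p}\bigr)=n\,\mathbb{P}\bigl(|X_1|>\lambda\,n^{\alpha/p}\bigr)\le\frac{\mathbb{E}(|X_1|^{p/\alpha})}{\lambda^{p/\alpha}},
\]
which is finite by hypothesis and, crucially, independent of $n$; letting $\lambda$ be large makes it arbitrarily small, so $M_n/n^{\alpha/p}=O_{\mathbb{P}}(1)$ (the case $\alpha=1$ being exactly the estimate needed in part~\eqref{part-1}). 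The main point is conceptual rather than computational: the telescoping/range reduction together with the double use of H\"older converts the finite-moment condition $\mathbb{E}(|X_1|^{p/\alpha})<\infty$ and the integrability condition on $h^*_0$ into precisely the matching powers, and the trade-off between the smoothness of $h_0$ (through the exponent $\alpha/(\alpha-1)$) and the tail heaviness of the inputs (through the moment order $p/\alpha$) is exactly what the parameter $\alpha$ is designed to balance. No mixing or independence among the $X_t$ is needed, which is why the conclusion holds for an arbitrary stationary input sequence.
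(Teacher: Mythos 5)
Your proof is correct, and while its skeleton (reduce $B^0_{n,p}$ to a power of the sample range via telescoping and H\"older) matches the paper's, the crucial stochastic estimate is obtained by a genuinely different and more elementary route. The paper controls the range $X_{n:n}-X_{1:n}$ through its Lemma~\ref{lemma-GC}, which invokes the Gascuel--Caraux bounds on expectations of extreme order statistics of identically distributed (arbitrarily dependent) variables, i.e.\ $\mathbb{E}(\xi_{n:n})\le n\int_{1-1/n}^{1}F_\xi^{-1}(y)\,\dd y$, followed by H\"older on the quantile function and Markov's inequality; that lemma is then recycled in several later proofs. You instead dominate the range by $2M_n$ with $M_n=\max_{1\le t\le n}|X_t|$ and get $M_n=O_{\mathbb{P}}(n^{\alpha/p})$ from a union bound plus Markov applied to $|X_1|^{p/\alpha}$, where the factor $n$ from the union bound is exactly cancelled by $(n^{\alpha/p})^{p/\alpha}$ --- a self-contained argument needing no external inequality and, like the paper's, using only the common marginal law and no dependence structure. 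There is also a cosmetic difference in part~(ii): you apply H\"older gap-by-gap and then discretely over $t$, whereas the paper applies it once to the integral $\int_{X_{1:n}}^{X_{n:n}}|h^*_0|$ over the whole range; both yield the same bound $c\,(X_{n:n}-X_{1:n})^{1/\alpha}$. What the paper's route buys is a reusable lemma (quantile-based expectation bounds for extremes) that it needs elsewhere; what yours buys is brevity and elementarity for this particular theorem. One small point worth making explicit in a write-up: the identification $Y^0_{t,n}=h_0(X_{t:n})$, which you justify by uniqueness of the index $s$ with $X_s=X_{t:n}$, is exactly the anomaly-free case of the paper's Lemma~\ref{conco-e}.
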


Next are two facts (to be proven later) upon which we base our anomaly-detection method:

\begin{description}
\item[Fact 1:]
If the anomaly-free outputs are in $p$-reasonable order with respect to the inputs, then, under some fairly weak assumptions on the inputs and the transfer function $h$ (details in Section~\ref{orderly systems}), the index
\begin{equation}\label{index-00}
I^0_n:={\sum_{i=2}^n (Y^0_{i,n}-Y^0_{i-1,n})_{+}
\over \sum_{i=2}^n |Y^0_{i,n}-Y^0_{i-1,n}|}
\end{equation}
converges, when $n\to \infty $, to a limit other than $1/2$.
\item[Fact 2:]
If, due to anomalies, the outputs are out of $p$-reasonable order with respect to the inputs, then (details in Section~\ref{disorderly systems}) the index
\begin{equation}\label{index-01}
I_n:={\sum_{i=2}^n (Y_{i,n}-Y_{i-1,n})_{+}
\over \sum_{i=2}^n |Y_{i,n}-Y_{i-1,n}|}
\end{equation}
converges to $1/2$ when $n\to \infty $.
\end{description}

Establishing these two facts rigorously is a complex and lengthy exercise, which we do in Sections~\ref{orderly systems} and \ref{disorderly systems}, as well as in Appendix~\ref{proofs}.  To show that the task is worth the effort, in the next section we show how the anomaly-detection method actually works in the case of the AVR-based experiment that we introduced in Section~\ref{avr}.

\section{The experiment: parameter choices and results}
\label{expdesign}

To illustrate the anomaly-detection method, and in particular Facts 1 and 2 formulated in the previous section, we use the AVR-based experiment with the following parameter choices.

First, the anomaly-free inputs $X_t$ (see Figure~\ref{arma-1})
\begin{figure}[h!]
\includegraphics[width=\textwidth,height=0.5\textwidth]{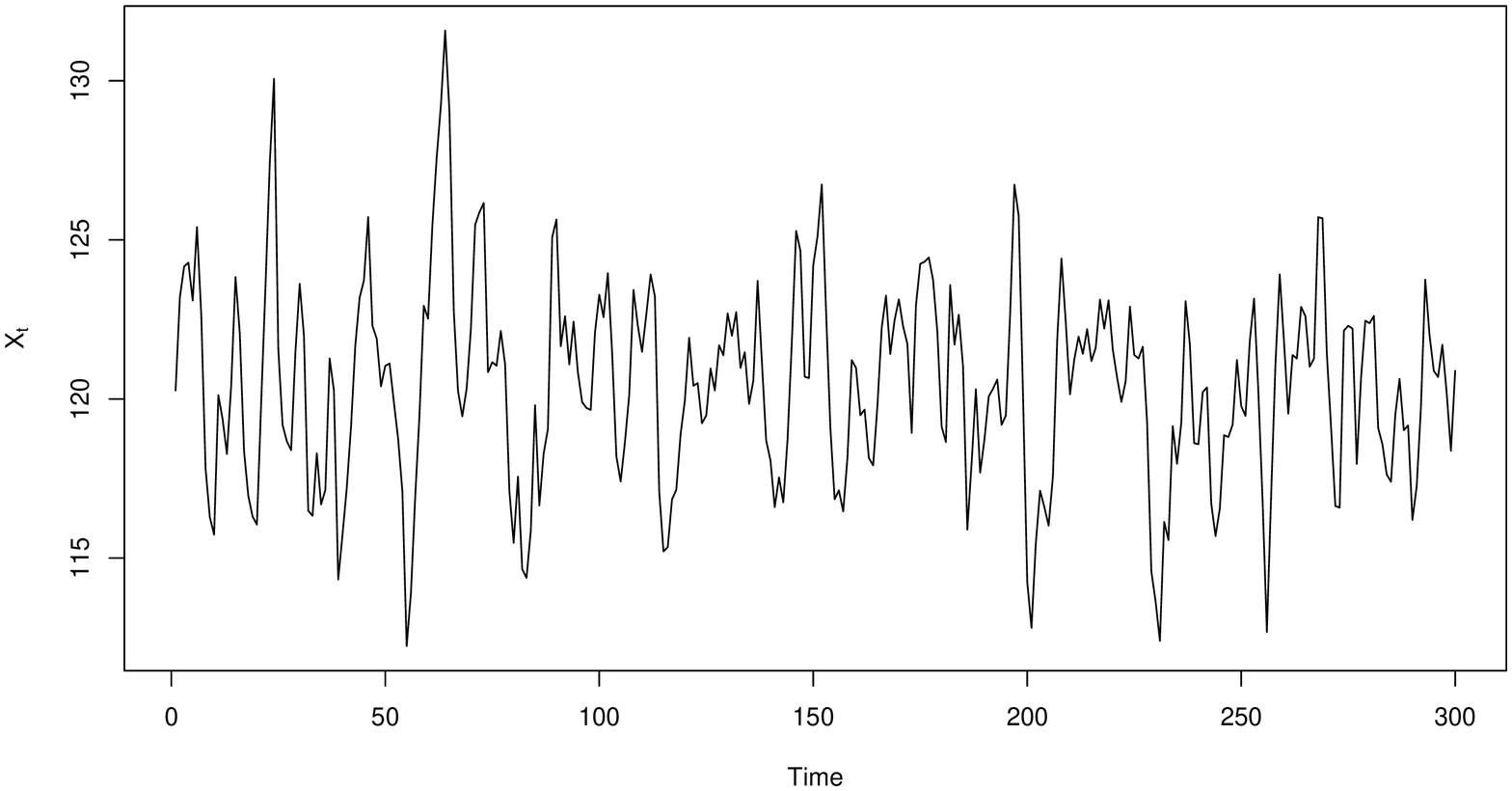}
    \caption{$\text{ARMA}(1,1)$ inputs $(X_t)_{t\in \mathbb{Z}}$ as specified by  model~\eqref{arma-model}.}
    \label{arma-1}
\end{figure}
follow the $\text{ARMA}(1,1)$ time series model
\begin{equation}\label{arma-model}
(X_t - 120) = 0.6(X_{t-1} - 120) + \eta_t + 0.4\eta_{t-1} ,
\end{equation}
where the white noise sequence $\eta_t$ consists of iid, mean zero, normal $\mathcal{N}(0,\sigma^2_{\eta})$ random variables with the variances
$$
\sigma^2_{\eta} = \dfrac{3^2(1 - 0.6^2)}{1 + 2(0.6)(0.4) + 0.4^2} = \dfrac{5.76}{1.64} \approx 3.512195.
$$
Under these specifications \citep[e.g.,][Eq.~(3.4.7), p.~79]{BJRL2015} the input time series $X_t$  has the marginal normal distribution with mean $120$ and variance $9$, that is,
\[
X_t\sim\mathcal{N}(120,9)
\]
for every $t\in\mathbb{Z}$.

Next, the input anomalies $\delta_t$ are iid $\text{Lomax}(\alpha,1)$ with shape parameter $\alpha>0$ \citep[e.g.,][Section~2.3.11, pp.~23--24]{LX2006}, and the output anomalies $\epsilon_t$ are also iid $\text{Lomax}(\alpha,1)$. Both the input and output anomalies are  independent of each other, and they are also independent of the inputs $X_t$. (Such anomalies can be interpreted as genuinely unintentional.)
Hence, the anomalies are independent, non-negative, random variables with the means
\[
\mathbb{E}(\delta_t)=\mathbb{E}(\epsilon_t)={1\over \alpha-1}
\]
and the variances
\[
\Var(\delta_t)=\Var(\epsilon_t)=
\left\{
  \begin{array}{ll}
   {\alpha\over (\alpha-1)^2(\alpha-2)} & \hbox{ when } \alpha>2,  \\
    \infty & \hbox{ when } 1<\alpha \le 2.
  \end{array}
\right.
\]
We set the following values for the shape parameter $\alpha $:
\begin{itemize}
  \item $\alpha=11$, which gives $\mathbb{E}(\delta_t)=\mathbb{E}(\epsilon_t)=0.1$ and $\Var(\delta_t)=\Var(\epsilon_t)=0.0122$, thus making, in average, the anomalies look small  if compared to the nominal voltage $120$;
  \item $\alpha=1.2$, which gives $\mathbb{E}(\delta_t)=\mathbb{E}(\epsilon_t)=5$ and infinite variances, thus making, in average, the anomalies look moderate in size if compared to the nominal voltage $120$.
\end{itemize}

We shall see that in both cases the method detects the anomalies with remarkable easiness, although the required sample size when $\alpha=11$ needs to be, naturally, larger than when $\alpha=1.2$ in order to reach the same conclusion. A few clarifying notes follow.

\begin{note}
The terms ``small'' and ``moderate'' that we used to describe anomalies with average values $0.1$ and $5$, respectively, are our terms and may not coincide with what the reader might think about such anomalies. Nevertheless, it seems to us that the terms ``small'' and ``moderate'' correlate well with the accepted notions of ``strict'' and ``satisfactory'' AVR service ranges, which are $120\pm 3$ and $120\pm 6$, respectively.
\end{note}

\begin{note}
Among the two choices of $\alpha $ made above, one leads to a finite variance and another to infinite. These two distinct scenarios are of practical interest. Indeed, based on empirical evidence, there has been a considerable discussion in the literature as to what distribution tails (and related dependence structures) could be suitable for modelling, e.g., data traffic and cyber risks. For details and further references on the topic, we refer to, e.g., \cite{hrs1998}, \cite{MS2010}, and \cite{EHF2016}.
\end{note}

To proceed with the set-up of our AVR-based experiment, we next introduce three service ranges, among which ``satisfactory'' and ``strict'' are commonly used terms in practice, and ``precise'' is an  artefact.
\begin{description}
\item[Satisfactory:]
  \[
  [a,b]=[114,126],
   \]
  which is $120\pm 6$ (i.e., $\pm 5\%$) and is considered a standard supply range in, e.g., Canada (recall Figure~\ref{h-example}), with $120$ being the nominal voltage.
  \item[Strict:]
  \[
  [a,b]=[117,123],
  \]
  which is $120\pm 3$ (i.e., $\pm 2.5\%$).
  \item[Precise:]
  \[
  [a,b]=[120,120]=\{120\}.
  \]
\end{description}

The anomaly-detection method in the case of the clamped transfer function $h_c$ corresponding to these three service ranges is illustrated in the next three subsections, with a more extensive set of illustrative graphs provided in Appendix~\ref{graphs}. For space considerations, we only consider the case $p=2$. We start with the precise service range.

\subsection{Precise service range}

When the AVR service range is precise, which is an artefact created only for illustrative purposes, the clamped function is constant, that is,
$$
h_c(v) =120.
$$
Since $h_0=h_c$ in this experiment, all the three transfer functions~TF\ref{tf1}--TF\ref{tf3} in the anomaly-free case yield
\[
B_{n,2}^0= 0,
\]
thus implying that the system is orderly. The index $I_n^0$ is undefined, as it is the ratio $0/0$. These notes also apply to the anomaly-affected case $h(x,y,0)$, as it is equal to $h_c(x+y)$, which is $120$, a constant. In the remaining two anomaly-affected cases $h(x,0,z)$ and $h(x,y,z)$, which are identical and given by the equations
\[
h(x,0,z) =120+z= h(x,y,z),
\]
the output anomalies affect the system. Figure~\ref{precise-1}
\begin{figure}[h!]
    \centering
    \begin{subfigure}[b]{0.39\textwidth}
        \includegraphics[width=\textwidth]{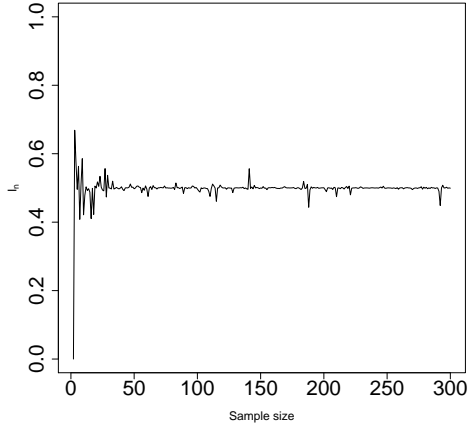}
        \caption{$I_n$ with $\text{Lomax}(1.2,1)$ anomalies}
    \end{subfigure}
\hspace{10mm}
    \begin{subfigure}[b]{0.39\textwidth}
        \includegraphics[width=\textwidth]{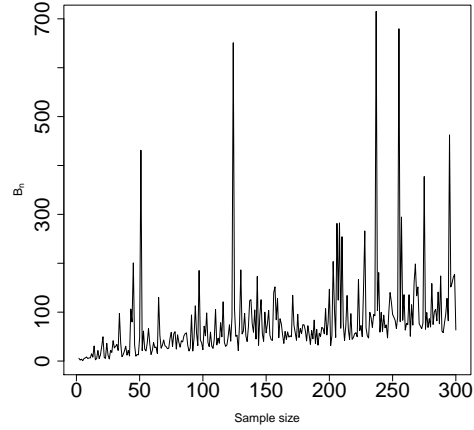}
        \caption{$B_{n,2}$ with $\text{Lomax}(1.2,1)$ anomalies}
    \end{subfigure}
\\
    \begin{subfigure}[b]{0.39\textwidth}
        \includegraphics[width=\textwidth]{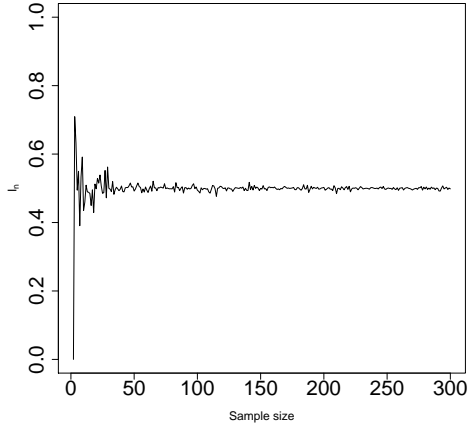}
        \caption{$I_n$ with $\text{Lomax}(11,1)$ anomalies}
    \end{subfigure}
\hspace{10mm}
    \begin{subfigure}[b]{0.39\textwidth}
        \includegraphics[width=\textwidth]{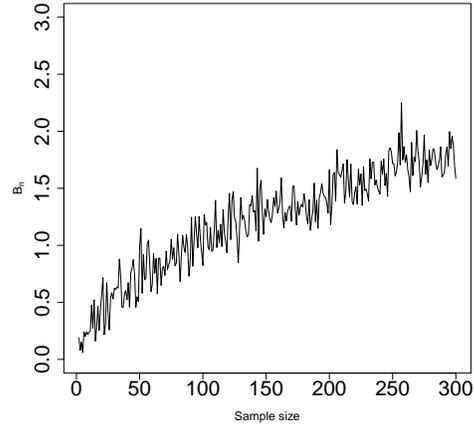}
        \caption{$B_{n,2}$ with $\text{Lomax}(11,1)$ anomalies}
    \end{subfigure}
    \caption{The anomaly-affected indices $I_n$ and $B_{n,2}$ for the precise service range with respect to the sample sizes $2\le n\le 300$ for $\text{ARMA}(1,1)$ inputs.}
    \label{precise-1}
\end{figure}
depicts $I_n$ and $B_{n,2}$ for various sample sizes $n$. Note that the index $I_n$ initially fluctuates but quickly starts to tend to $1/2$, whereas $B_{n,2}$ is increasing with respect to the sample size. These two observations suggest that anomalies are affecting the system, which is indeed the case given the experimental design.

\subsection{Strict service range}

When the AVR service range is strict, the clamped function is
$$
h_c(v) = (\min\{123, v\} - 117)_+ + 117
=
\left\{
  \begin{array}{ll}
    117   & \hbox{when } v<117,  \\
    v & \hbox{when } 117\le v \le 123,  \\
    123   & \hbox{when } v>123.
  \end{array}
\right.
$$
Figure~\ref{precise-2}
\begin{figure}[h!]
    \centering
    \begin{subfigure}[b]{0.39\textwidth}
        \includegraphics[width=\textwidth]{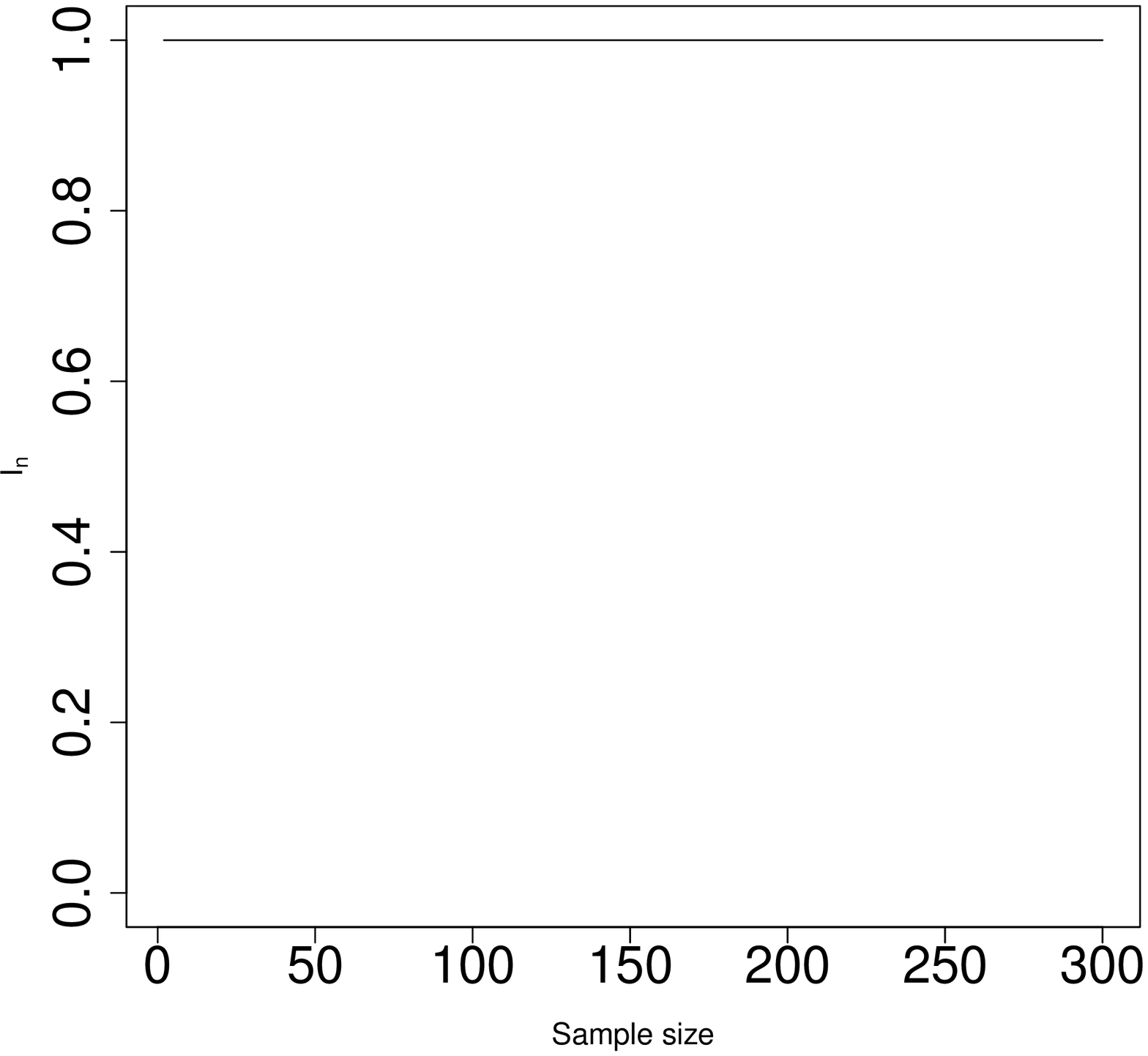}
        \caption{$I_n^0$}
    \end{subfigure}
\hspace{10mm}
    \begin{subfigure}[b]{0.39\textwidth}
        \includegraphics[width=\textwidth]{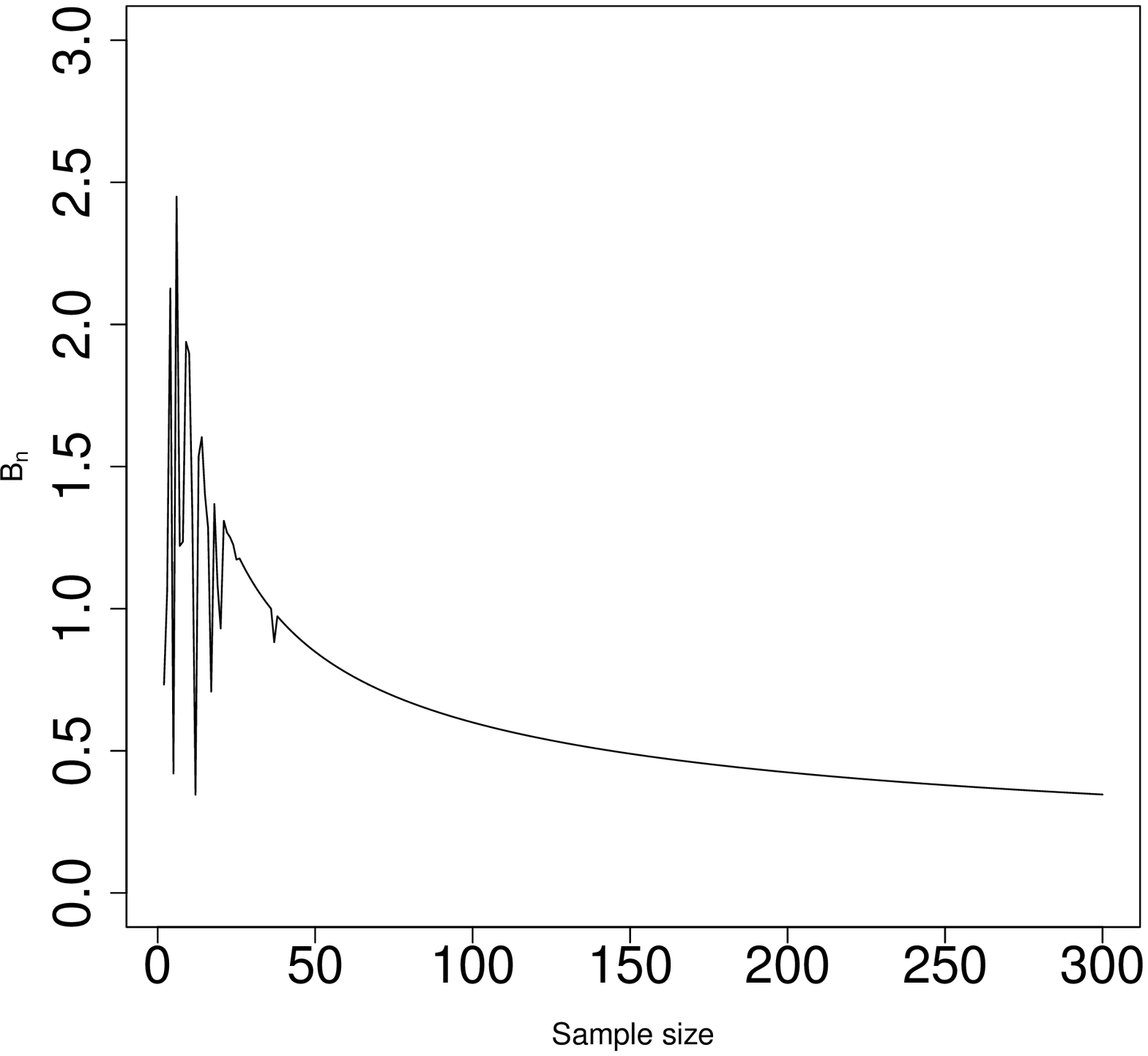}
        \caption{$B_{n,2}^0$}
    \end{subfigure}
    \caption{The anomaly-free indices $I_n^0$ and $B_{n,2}^0$ for the strict service range with respect to the sample sizes $2\le n\le 300$ for $\text{ARMA}(1,1)$ inputs.}
    \label{precise-2}
\end{figure}
depicts the anomaly-free indices $I_n^0$ and $B_{n,2}^0$ for various sample sizes $n$. Looking at the graphs, we safely infer that $B^0_{n,p}=O_{\mathbb{P}}(1)$, which implies that the anomaly-free system is orderly with respect to the inputs, and we also see that $I_n^0$ does not converge to $1/2$, which confirms that the system is free of anomalies.

When, however, the system is affected by anomalies at the input and/or output stages, the behaviour of $I_n$ and $B_{n,2}$ changes drastically. We see from Figures~\ref{stringent-2} (when $\alpha=1.2$) and \ref{stringent-2-11} (when $\alpha=11$) that the index $I_n$ tends to $1/2$. Naturally, it tends to $1/2$ faster when $\alpha=1.2$ than when $\alpha=11$, simply because the anomalies  in the latter case are less noticeable. For both $\alpha$ values, the index $B_{n,2}$ has the tendency to grow. These observations suggest that the system is being affected by anomalies, which is indeed the case.

\subsection{Satisfactory service range}

When the AVR service range is satisfactory, the clamped function is
$$
h_c(v) = (\min\{126, v\} - 114)_+ + 114
=
\left\{
  \begin{array}{ll}
    114   & \hbox{when } v<114,  \\
    v & \hbox{when } 114\le v \le 126,  \\
    126   & \hbox{when } v>126.
  \end{array}
\right.
$$
Figure~\ref{precise-3}
\begin{figure}[h!]
    \centering
    \begin{subfigure}[b]{0.39\textwidth}
        \includegraphics[width=\textwidth]{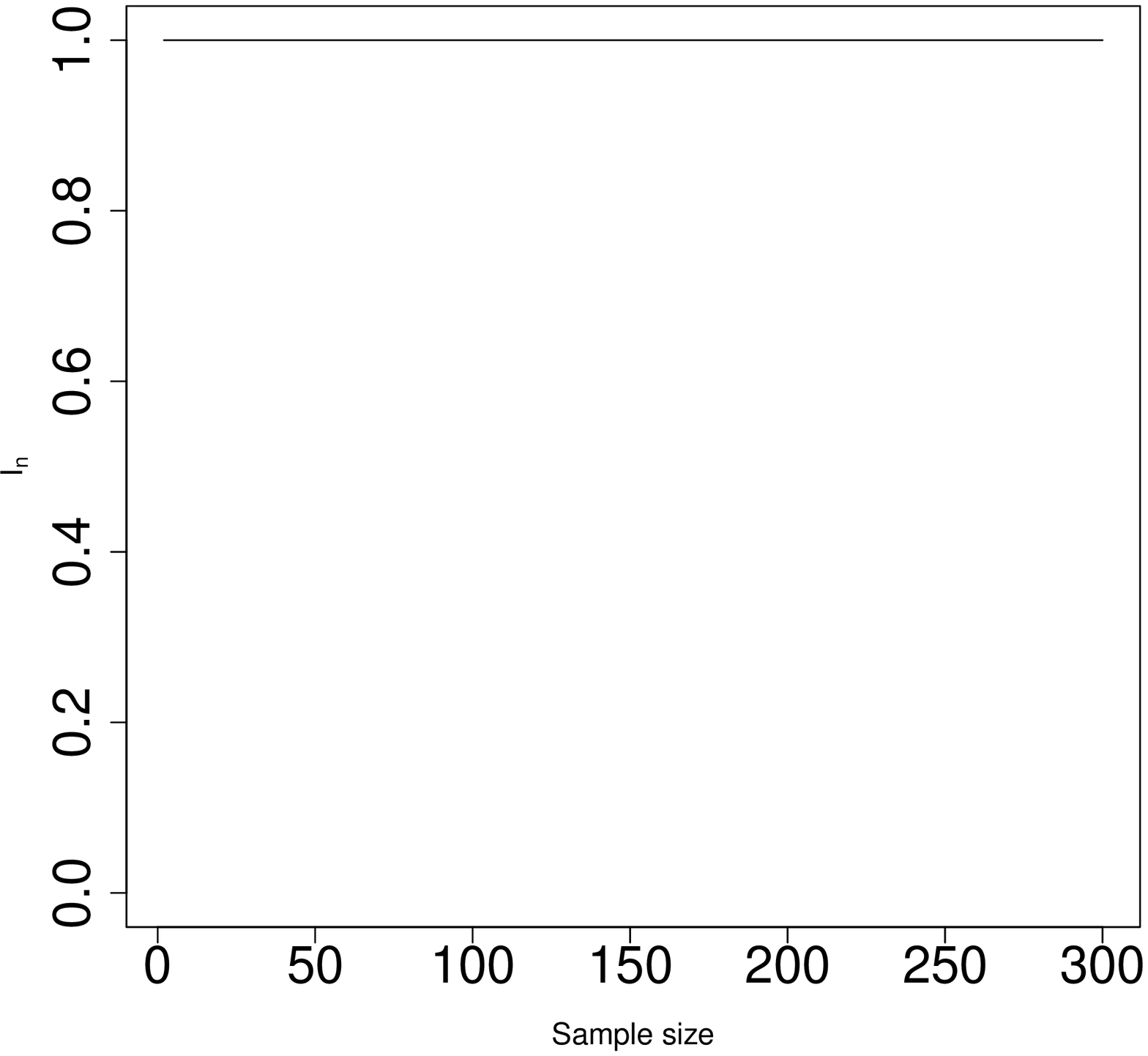}
        \caption{$I_n^0$}
    \end{subfigure}
\hspace{10mm}
    \begin{subfigure}[b]{0.39\textwidth}
        \includegraphics[width=\textwidth]{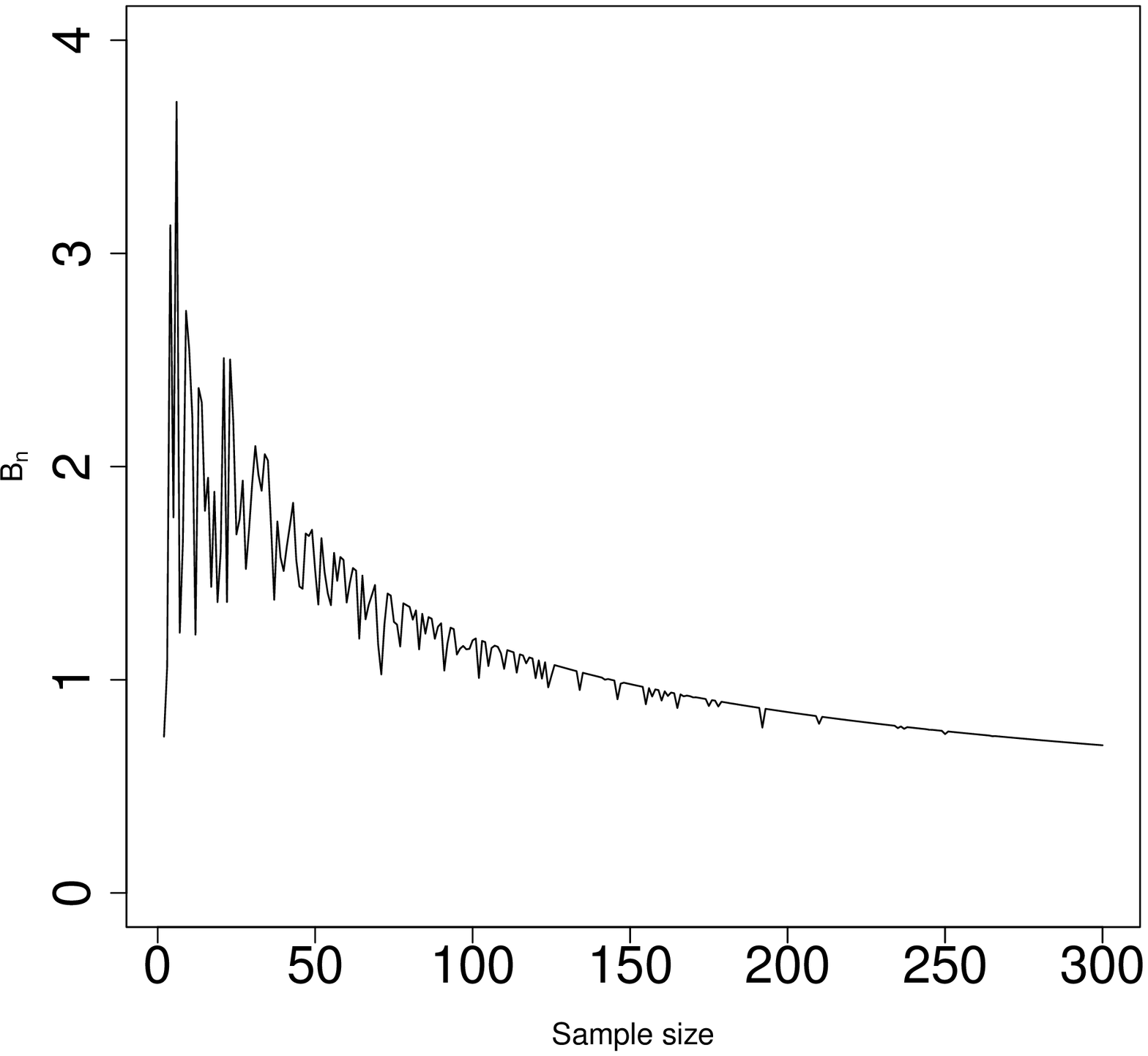}
        \caption{$B_{n,2}^0$}
    \end{subfigure}
    \caption{The anomaly-free indices $I_n^0$ and $B_{n,2}^0$ for the satisfactory service range with respect to the sample sizes $2\le n\le 300$ for $\text{ARMA}(1,1)$ inputs.}
    \label{precise-3}
\end{figure}
resembles Figure~\ref{precise-2}, and Figures~\ref{standard-3}--\ref{standard-3-11} convey essentially the same information as Figures~\ref{stringent-2}--\ref{stringent-2-11}.

\section{Anomaly-free systems}
\label{orderly systems}

In this section we specify conditions under which the index $I^0_n$ given by equation~\eqref{index-00} converges to a limit other than $1/2$. We note at the outset that since the positive part $z_{+}$ of any real number $z\in \mathbb{R}$ can be expressed as $(|z|+z)/2$, the index $I^0_n$ can be re-written as
\begin{equation}\label{index-0}
I^0_n={1\over 2}\bigg( 1+
{Y^0_{n,n}-Y^0_{1,n}\over \sum_{i=2}^n |Y^0_{i,n}-Y^0_{i-1,n}|} \bigg ) .
\end{equation}
Hence, our goal becomes to understand when and where the numerator and the denominator of the ratio
\begin{equation}\label{ratio-0}
\Lambda_n:={Y^0_{n,n}-Y^0_{1,n}\over \sum_{i=2}^n |Y^0_{i,n}-Y^0_{i-1,n}|}
\end{equation}
converge. These are the topics of the following two subsections.

\subsection{Asymptotics of the $\Lambda_n$ numerator}

We start with a definition.

\begin{definition}
We say that the inputs $X_t$ having the same continuous marginal cdf $F$ are \textit{temperately dependent} if, for every $x\in (a_{X},b_{X})$ and when $n\to \infty $,
\begin{equation}
\label{md-1}
\mathbb{P}(X_{1:n}\ge x)\to 0
\quad \textrm{and} \quad
\mathbb{P}(X_{n:n}\le x)\to 0.
\end{equation}
\end{definition}

A few clarifying notes follow. First, the open interval $(a_{X},b_{X})$ is not empty  because the cdf $F$ is continuous. Second, our use of the term \textit{temperately dependent} is natural because property~\eqref{md-1} is simultaneously related to $1$-minimally and $n$-maximally dependent random variables in the terminology used by \citet{GC1992}.  To clarify, consider two extreme cases:
\begin{itemize}
\item
If $X_t$ is a sequence of iid random variables, then
$\mathbb{P}(X_{1:n}\ge x)=(1-F(x-))^n$
and
$\mathbb{P}(X_{n:n}\le x)=F(x)^n$, where $x\mapsto F(x-)$ is the left-continuous version of $F$. Since for all $x\in (a_{X},b_{X})$, both $F(x)$ and  $F(x-)$ are in the interval $ (0,1)$, property~\eqref{md-1} holds.
\item
If $X_t$ is a sequence of super-dependent random variables,
that is, if there is a random variable $X$ such that $X_t=X$ for all $t\in \mathbb{Z}$, then $X_{1:n}=X=X_{n:n}$ and so neither of the two probabilities in property~\eqref{md-1} converges to $0$.
\end{itemize}

Finally, we note that the concept of temperate dependence is closely related to the existence of phantom distributions, which originate from the work of \cite{O1987}. For further details, examples, and extensive references on this topic, we refer to \cite{J1991}, \cite{B2007}, and \cite{DJL2015}. Phantom distributions for non-stationary random sequences have been tackled by \cite{J1993}.

\begin{theorem}\label{th-md}
Let the inputs $X_t$ be strictly stationary and temperately dependent. If the baseline function $h_0$ is absolutely continuous on the interval $[a_{X},b_{X}]$ and its Radon-Nikodym derivative $h_0^*$ is integrable on $[a_{X},b_{X}]$, then
\begin{equation}
\label{md-2}
Y^0_{n,n}-Y^0_{1,n}\to_{\mathbb{P}} h_0(b_{X})- h_0(a_{X})
\end{equation}
when $n\to \infty $.
\end{theorem}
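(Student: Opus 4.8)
The plan is to reduce \eqref{md-2} to the convergence of the sample extremes $X_{1:n}$ and $X_{n:n}$ and then push that convergence through the baseline function $h_0$. The starting point is a simplification of the concomitants in the anomaly-free case: since $Y^0_s=h_0(X_s)$ and exactly one input realizes each order statistic, we have $Y^0_{t,n}=h_0(X_{t:n})$ for every $t$, so that
\[
Y^0_{n,n}-Y^0_{1,n}=h_0(X_{n:n})-h_0(X_{1:n}).
\]
It therefore suffices to prove $h_0(X_{n:n})\to_{\mathbb{P}}h_0(b_X)$ and $h_0(X_{1:n})\to_{\mathbb{P}}h_0(a_X)$ separately and add the two limits.

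First I would read off from temperate dependence what happens to the extremes. The condition $\mathbb{P}(X_{n:n}\le x)\to 0$ for each $x\in(a_X,b_X)$ says precisely that the maximum is eventually forced above every interior level, i.e. $X_{n:n}\to_{\mathbb{P}}b_X$; symmetrically $X_{1:n}\to_{\mathbb{P}}a_X$. Because $F$ is continuous and $b_X=\inf\{x:F(x)=1\}$, we also have $\mathbb{P}(X_t=b_X)=0$ and hence $X_{n:n}<b_X$ almost surely, and likewise $X_{1:n}>a_X$; so both extremes live inside the open interval $(a_X,b_X)$, where $h_0$ is under control.

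The next step, and the one requiring genuine care, is to give meaning to $h_0(b_X)$ and $h_0(a_X)$, since the endpoints may be infinite and $h_0$ need not be continuous (or even defined) there. Here the integrability of $h_0^*$ is exactly what is needed: fixing a reference point $c\in(a_X,b_X)$ and writing $h_0(v)=h_0(c)+\int_c^v h_0^*(x)\,\dd x$, the finiteness of $\int|h_0^*|$ forces the one-sided limits $h_0(b_X):=\lim_{v\uparrow b_X}h_0(v)$ and $h_0(a_X):=\lim_{u\downarrow a_X}h_0(u)$ to exist and be finite. It also supplies a quantitative modulus of continuity near the endpoints, namely $|h_0(v)-h_0(b_X)|\le\int_v^{b_X}|h_0^*|\to 0$ as $v\uparrow b_X$, by absolute continuity of the Lebesgue integral.

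Finally I would combine the two ingredients with a routine $\varepsilon$-argument. Given $\varepsilon>0$, I would choose $x_0\in(a_X,b_X)$ so close to $b_X$ that $\int_{x_0}^{b_X}|h_0^*|<\varepsilon$; then on the event $\{X_{n:n}>x_0\}$, which by the previous paragraph also satisfies $X_{n:n}<b_X$, one has $|h_0(X_{n:n})-h_0(b_X)|<\varepsilon$, whence
\[
\mathbb{P}\bigl(|h_0(X_{n:n})-h_0(b_X)|\ge\varepsilon\bigr)\le\mathbb{P}(X_{n:n}\le x_0)\to 0
\]
by temperate dependence. The minimum is treated identically at the lower endpoint, and adding the two statements yields \eqref{md-2}. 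The main obstacle is not the extreme-value convergence, which is immediate from the hypothesis, but the transfer through $h_0$ when $a_X$ or $b_X$ is infinite: everything hinges on integrability of $h_0^*$, both to make $h_0(a_X)$ and $h_0(b_X)$ well defined as finite limits and to guarantee that $h_0$ is uniformly close to these limits near the endpoints, so that the convergence of $X_{1:n}$ and $X_{n:n}$ actually propagates.
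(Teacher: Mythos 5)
Your proof is correct, and it reaches the conclusion by a more elementary route than the paper. Both arguments rest on the same reduction---$Y^0_{n,n}-Y^0_{1,n}=h_0(X_{n:n})-h_0(X_{1:n})$, with the error against $h_0(b_X)-h_0(a_X)$ controlled by the tail integrals $\int_{X_{n:n}}^{b_X}|h_0^*(x)|\,\dd x$ and $\int_{a_X}^{X_{1:n}}|h_0^*(x)|\,\dd x$---but the paper then passes to expectations: it applies Markov's inequality, interchanges expectation and integration to bound the error probability by $\lambda^{-1}\int_{a_X}^{b_X}\mathbb{P}(X_{n:n}\le x)|h_0^*(x)|\,\dd x+\lambda^{-1}\int_{a_X}^{b_X}\mathbb{P}(X_{1:n}>x)|h_0^*(x)|\,\dd x$, and finishes with the Lebesgue dominated convergence theorem (domination from integrability of $h_0^*$, pointwise convergence from temperate dependence). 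You avoid Markov, Fubini, and dominated convergence altogether: you fix $\varepsilon>0$, pick an interior cutoff $x_0$ with $\int_{x_0}^{b_X}|h_0^*(x)|\,\dd x<\varepsilon$ (absolute continuity of the Lebesgue integral), and use the event inclusion $\{|h_0(X_{n:n})-h_0(b_X)|\ge\varepsilon\}\subseteq\{X_{n:n}\le x_0\}$, whose probability vanishes by hypothesis, treating the minimum symmetrically and adding the two limits. Your route also makes explicit a point the paper passes over silently: when $a_X$ or $b_X$ is infinite, $h_0(a_X)$ and $h_0(b_X)$ must be read as one-sided limits, which exist and are finite precisely because $h_0^*$ is integrable. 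What the paper's expectation-based version buys in exchange is compactness (both extremes handled in one sweep) and a quantitative flavour: the error probability is bounded directly by $\int\mathbb{P}(X_{n:n}\le x)|h_0^*(x)|\,\dd x$ and its companion, so any uniform decay rates for these probabilities would translate immediately into rates of convergence, which your purely qualitative event argument does not provide.
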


Recall that a strictly stationary time series $X_t$ is $\alpha$-mixing (i.e., strongly mixing) if
\[
\alpha_X(t) := \sup
\big |\mathbb{P}(A\cap B) - \mathbb{P}(A)\mathbb{P}(B)\big |
\to 0
\]
when $t\to \infty $, with the supremum taken over all $A\in\mathscr{F}^0_{-\infty}$ and $B\in\mathscr{F}^{\infty}_t$, where the two $\sigma$-algebras are defined as follows:
\[
\mathscr{F}^0_{-\infty}=\sigma(X_u, u \le 0)
\quad \textrm{and} \quad
\mathscr{F}^{\infty}_k=\sigma(X_v, v\ge t ).
\]
We refer to \citet{LL1996},   \cite{B2007}, and \cite{R2017} for details and references on various notions of mixing. In the context of the present paper, a particularly important random sequence is the strictly stationary ARMA($p,q$) time series $X_t$. We refer to \citet{Mokkadem1988} who has shown, among other things, that a strictly stationary ARMA time series is $\beta$-mixing (i.e., completely regular) and thus $\alpha$-mixing (i.e., strongly mixing).

\begin{theorem}\label{lm-1}
If the inputs $X_t$ are strictly stationary and $\alpha$-mixing, then they are temperately dependent.
\end{theorem}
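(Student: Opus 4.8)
The plan is to establish both limits in~\eqref{md-1} directly, treating the statement about $X_{n:n}$ in detail since the one about $X_{1:n}$ is entirely symmetric. First I would fix $x\in(a_X,b_X)$ and record, from the definitions of $a_X$ and $b_X$ together with the continuity of $F$, that $0<F(x)<1$; write $q:=F(x)$. Since $X_{n:n}=\max_{1\le t\le n}X_t$, the event $\{X_{n:n}\le x\}$ equals the intersection $\bigcap_{t=1}^n\{X_t\le x\}$, and by strict stationarity each factor has probability $q$.

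The central idea is to thin the index set so that the retained coordinates are far enough apart in time for $\alpha$-mixing to decouple them. Fix a gap $k\ge 1$ and a number of points $m\ge 1$, and for $n\ge (m-1)k+1$ select the indices $t_j=1+(j-1)k$, $j=1,\dots,m$, which all lie in $\{1,\dots,n\}$ and are spaced exactly $k$ apart. By monotonicity,
\[
\mathbb{P}(X_{n:n}\le x)\le \mathbb{P}\Big(\bigcap_{j=1}^m\{X_{t_j}\le x\}\Big).
\]
To bound the right-hand side I would invoke the standard $\alpha$-mixing probability inequality: writing $C_\ell=\bigcap_{j=1}^{\ell}\{X_{t_j}\le x\}\in\mathscr{F}^{t_\ell}_{-\infty}$ and noting $\{X_{t_{\ell+1}}\le x\}\in\mathscr{F}^{\infty}_{t_\ell+k}$, the defining bound gives $|\mathbb{P}(C_{\ell+1})-q\,\mathbb{P}(C_\ell)|\le \alpha_X(k)$ at each step. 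A telescoping induction over $\ell=1,\dots,m-1$ (using $q\le 1$) then yields
\[
\mathbb{P}\Big(\bigcap_{j=1}^m\{X_{t_j}\le x\}\Big)\le q^{\,m}+(m-1)\,\alpha_X(k).
\]

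It remains to choose the parameters in the right order, which is the only delicate point. Because the decoupling error $(m-1)\alpha_X(k)$ grows with $m$, I cannot send $m\to\infty$ with $k$ fixed; instead, given $\varepsilon>0$ I would \emph{first} pick $m$ so large that $q^{m}<\varepsilon/2$ (possible since $q<1$), and \emph{only then} pick $k$ so large that $(m-1)\alpha_X(k)<\varepsilon/2$ (possible since $\alpha_X(k)\to0$ and $m$ is now frozen). With these fixed $m,k$, every $n\ge (m-1)k+1$ satisfies $\mathbb{P}(X_{n:n}\le x)<\varepsilon$, whence $\mathbb{P}(X_{n:n}\le x)\to0$. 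The first limit in~\eqref{md-1} follows by the identical argument applied to the events $\{X_{t_j}\ge x\}$, whose common probability is $1-F(x)\in(0,1)$ by continuity of $F$, giving $\mathbb{P}(X_{1:n}\ge x)\le(1-q)^{m}+(m-1)\alpha_X(k)\to0$.

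The main obstacle, such as it is, is not any single estimate but getting the quantifier order right: the telescoping mixing bound is useful only when $m$ is frozen before $k$ is sent to infinity, so that the decoupling error stays controlled while the independent-like product $q^{m}$ is already small. Everything else—the monotone reduction to a thinned index set, the one-step mixing inequality, and the induction—is routine.
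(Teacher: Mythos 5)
Your proof is correct and follows essentially the same route as the paper's: thin the index set to times separated by a fixed gap, apply the $\alpha$-mixing inequality recursively to bound $\mathbb{P}(X_{n:n}\le x)$ by a power of $F(x)$ plus an accumulated mixing error, and then take limits (the paper lets the number of retained points grow as $\lfloor n/t\rfloor$ and controls the error by a geometric series before sending the gap $t\to\infty$, while you freeze $m$ and $k$ via an $\varepsilon$-argument, and you treat the minimum symmetrically where the paper negates the sequence — cosmetic differences only).
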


The proofs of Theorems~\ref{th-md} and \ref{lm-1} are in Appendix~\ref{proofs}.

\subsection{Asymptotics of the $\Lambda_n$ denominator}

We start with a definition, which is a weak (i.e., in probability) form of the classical Glivenko-Cantelli theorem.

\begin{definition}
We say that the inputs $X_t$ having the same marginal cdf $F$ satisfy the Glivenko-Cantelli property if
\begin{equation}
\label{gc=0}
\Vert F_n-F\Vert:=\sup_{x\in \mathbb{R}}\big | F_n(x)-F(x) \big | \to_{\mathbb{P}} 0,
\end{equation}
where $F_n$ is the empirical cdf based on the random variables $X_{1},  \dots ,  X_{n}$.
\end{definition}

The classical Glivenko-Cantelli theorem says that statement~\eqref{gc=0} (with convergence in probability replaced by almost surely) holds for iid random sequences. Establishing the Glivenko-Cantelli property for dependent sequences has been a challenging but fruitful task. In particular, results by \citet[Corollary~2.1, p.~49]{CR1992}
and \citet[Proposition~7.1, p.~114]{R2017} tell us that if a strictly stationary sequence $X_t$ is $\alpha$-mixing and there exists a constant $\nu>0$ such that
\begin{equation}
\label{sm-0}
\alpha_X(t)=O( t^{-\nu})
\end{equation}
when $t\to \infty $, then the Glivenko-Cantelli property holds.

Consider now the stationary  $\text{ARMA}(p,q)$ time series $X_t$ that follows the dynamical model
\[
\sum^p_{i=0}\phi_iX_{t-i} = \sum^q_{j=0}\theta_j \eta_{t-j} ,
\quad t\in \mathbb{Z},
\]
with $\phi_0=1$ and some parameters $\phi_i, \theta_j \in \mathbb{R}$ such that the absolute values of all the roots of the characteristic polynomial $z\mapsto \sum^p_{i=0}\phi_iz^i$ are (strictly) greater than $1$. Hence, the time series is causal.

\begin{note}
There is a clash of notation between the $p$ in $\text{ARMA}(p,q)$ and the $p$ in the earlier introduced definition of $p$-reasonable order. The two $p$'s are unrelated, and we do not expect them to cause any confusion. We have simply run out of different notation, especially given the deeply rooted traditions in the literature, such as those in time series analysis.
\end{note}

\citet[Theorem~1]{Mokkadem1988} has proved that if the white noise sequence $\eta_t$ is iid with absolutely continuous (with respect to the Lebesgue measure) marginal distributions, then the time series $X_t$ is geometrically completely regular. That is, there exists $\rho\in (0,1)$ such that
\begin{equation}
\label{arma-0}
\beta_X(t)=O(\rho^t)
\end{equation}
when $t\to \infty $, where $\beta_X(t)$ is the complete regularity coefficient \citep{D1973} defined by
\[
\beta_X(t) =\mathbb{E}\Big(\sup\big|\mathbb{P}(B \mid \mathscr{F}^0_{-\infty}) - \mathbb{P}(B)\big|\Big) ,
\]
where the supremum is taken over all $B\in\mathscr{F}^{\infty}_t $.
As noted by \cite{Mokkadem1988}, the bound
\[
\alpha_X(t)\le \beta_X(t)
\]
holds, and thus statement~\eqref{arma-0} implies \eqref{sm-0} for any $\nu >0$, which in turn establishes the Glivenko-Cantelli property for the sequence $X_t$.

We are now in the position to formulate the main result of this subsection concerning the denominator on the right-hand side of equation~\eqref{index-0}.

\begin{theorem}\label{th-md-2}
Let the inputs $X_t$ be strictly stationary, temperately dependent,  satisfy the Glivenko-Cantelli property, and have finite $p$-th moments  $\mathbb{E}(|X_t|^p)<\infty $ for some $p>2$. Let the cdf $F$ and its quantile function $F^{-1}$ be continuous. Finally, assume that the baseline function $h_0$ is absolutely continuous on $[a_{X},b_{X}]$ and such that its Radon-Nikodym derivative $h^*_0$ is continuous on a finite interval $[a,b]\subseteq [a_{X},b_{X}]$ and vanishes outside $[a,b]$. Then
\begin{equation}
\label{md-4}
\sum_{t=2}^n |Y^0_{t,n}-Y^0_{t-1,n}|\to_{\mathbb{P}} \int_{a}^{b}|h^*_0(x)|\dd x
\end{equation}
when $n\to \infty $.
\end{theorem}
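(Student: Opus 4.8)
The plan is to exploit the fact that in the anomaly-free case the concomitants collapse onto order statistics composed with the baseline function. Because $F$ is continuous, the inputs are almost surely distinct, so exactly one index $s$ satisfies $X_s=X_{t:n}$, and hence $Y^0_{t,n}=h_0(X_s)=h_0(X_{t:n})$. Using the absolute continuity of $h_0$, each increment equals the integral of its Radon--Nikodym derivative over the spacing interval $I_t:=[X_{t-1:n},X_{t:n}]$, so that
\[
\sum_{t=2}^n |Y^0_{t,n}-Y^0_{t-1,n}| = \sum_{t=2}^n \Big| \int_{I_t} h^*_0(x)\,\dd x \Big|.
\]
The whole proof then reduces to comparing this sum of absolute integrals with the single integral $\int_a^b|h^*_0|$, and the natural route is a two-sided bound.

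First I would establish the upper bound, which is immediate. By the triangle inequality $\sum_t \bigl|\int_{I_t}h^*_0\bigr|\le \sum_t\int_{I_t}|h^*_0|=\int_{X_{1:n}}^{X_{n:n}}|h^*_0(x)|\,\dd x$, since the $I_t$ tile $[X_{1:n},X_{n:n}]$. Temperate dependence~\eqref{md-1} squeezes the extremes onto the support endpoints, $X_{1:n}\to_{\mathbb{P}}a_X$ and $X_{n:n}\to_{\mathbb{P}}b_X$; because $h^*_0$ vanishes off the finite interval $[a,b]\subseteq[a_X,b_X]$ and $a_X\le a$, $b_X\ge b$, the overshoot/undershoot mass is negligible, so by the absolute continuity of the Lebesgue integral $\int_{X_{1:n}}^{X_{n:n}}|h^*_0|\to_{\mathbb{P}}\int_a^b|h^*_0|$.

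The matching lower bound is the heart of the matter and the step I expect to be the main obstacle, since in general $\bigl|\int_{I_t}h^*_0\bigr|<\int_{I_t}|h^*_0|$ whenever $h^*_0$ changes sign on $I_t$. Writing $I_t^{\pm}$ for the subsets of $I_t$ where $h^*_0$ is positive/negative, the defect is
\[
R_n:=\sum_{t=2}^n\Big(\int_{I_t}|h^*_0|-\Big|\int_{I_t}h^*_0\Big|\Big)=\sum_{t=2}^n 2\min\Big(\int_{I_t^+}|h^*_0|,\int_{I_t^-}|h^*_0|\Big),
\]
and each summand is nonzero only on a sign-changing interval. By continuity such an $I_t$ contains a zero $x_0$ of $h^*_0$, so $|h^*_0(x)|=|h^*_0(x)-h^*_0(x_0)|\le\omega(|I_t|)$ throughout $I_t$, where $\omega$ is the modulus of continuity of $h^*_0$ on $[a,b]$. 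Hence $R_n\le\sum_{t:\,I_t\cap[a,b]\neq\varnothing}\omega(\delta_n)\,|I_t|\le\omega(\delta_n)\,(b-a+2\delta_n)$, with $\delta_n:=\max_t|I_t|$ the mesh of the relevant spacings, and it remains only to prove $\delta_n\to_{\mathbb{P}}0$.

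To control the mesh I would convert the Glivenko--Cantelli property into a spacing estimate. With $D_n:=\Vert F_n-F\Vert\to_{\mathbb{P}}0$ and $F$ continuous, evaluating $F_n$ just left of $X_{t:n}$ and at $X_{t-1:n}$ (both equal to $(t-1)/n$) gives $F(X_{t:n})-F(X_{t-1:n})\le 2D_n$; that is, consecutive order statistics are within $2D_n$ in the $F$-scale. Continuity of $F^{-1}$, uniform on the compact $F$-image of a finite interval containing $[a,b]$, then yields $X_{t:n}-X_{t-1:n}\le\omega_{F^{-1}}(2D_n)$ for the relevant indices, so $\delta_n\le\omega_{F^{-1}}(2D_n)\to_{\mathbb{P}}0$. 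Since $h^*_0$ is uniformly continuous on $[a,b]$, $\omega(\delta_n)\to_{\mathbb{P}}0$, whence $R_n\to_{\mathbb{P}}0$; combining this with the upper bound gives~\eqref{md-4}. I do not expect the moment hypothesis $\mathbb{E}(|X_1|^p)<\infty$ to be binding here, as every integral lives on the bounded support $[a,b]$ of $h^*_0$; it is presumably imposed for consistency with the companion results on the $\Lambda_n$ numerator and on $p$-reasonable order.
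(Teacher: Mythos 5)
Your proof is correct, and it reaches the conclusion by a genuinely different route than the paper. The paper first proves a standalone lemma showing that the mesh of the $[a,b]$-clamped spacings $Z_{t,n}-Z_{t-1,n}$ tends to $0$ in probability, which it obtains by deducing a uniform quantile convergence $\sup_{t\in[t_0,t_1]}|F_n^{-1}(t)-F^{-1}(t)|\to_{\mathbb{P}}0$ from the Glivenko--Cantelli property and combining it with uniform continuity of $F^{-1}$; it then applies the integral mean value theorem (valid since $h_0^*$ is continuous on $[a,b]$) to write each increment as $|h_0^*(\xi_{t,n})|(Z_{t,n}-Z_{t-1,n})$, recognizes the sum as a Riemann sum for $\int_a^b|h_0^*|$ over a random partition with vanishing mesh, and concludes by Riemann integrability. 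You instead run a two-sided squeeze: the deterministic upper bound $\sum_t|\int_{I_t}h_0^*|\le\int_a^b|h_0^*|$, and a lower bound $\int_{X_{1:n}}^{X_{n:n}}|h_0^*|-R_n$ in which the cancellation defect $R_n$ is killed by the intermediate value theorem at sign changes of $h_0^*$ together with its modulus of continuity; and you control the mesh directly by noting that consecutive order statistics are within $2\Vert F_n-F\Vert$ of each other in the $F$-scale and composing with the modulus of continuity of $F^{-1}$. Your mesh argument is shorter than the paper's quantile-process detour, and your defect decomposition isolates exactly where continuity of $h_0^*$ is needed (only near its zeros), whereas the paper's Riemann-sum formalism is more compact once its spacing lemma is in hand and its clamping construction handles the boundary spacings automatically. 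Two details you should make explicit to be airtight: the identity $X_{t:n}=F^{-1}(F(X_{t:n}))$ implicit in your mesh bound requires $F$ to be strictly increasing on its support, which follows from the assumed continuity of $F^{-1}$ (the paper notes this); and the modulus $\omega_{F^{-1}}$ must be taken on a compact set containing the quantile levels of the relevant spacings, which works precisely because $[a,b]$ is a finite interval (so $F^{-1}$ extends continuously to the closed image even when $a=a_X$ or $b=b_X$) and because those levels lie within $2\Vert F_n-F\Vert$ of $[F(a),F(b)]$. Your closing observation is also accurate: the paper's own proof never invokes the moment condition $\mathbb{E}(|X_t|^p)<\infty$, which is carried along for consistency with the surrounding results.
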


The restriction of the support of the Radon-Nikodym derivative $h^*_0$ to only a finite interval $[a,b]$ is due to two reasons:
\begin{enumerate}[1)]
\item
those real-life examples (automatic voltage regulators, insurance layers, etc.)  that initiated our current research are based on finite transfer windows;
\item
dealing with finite intervals $[a,b]$ considerably simplifies mathematical technicalities, which is an appealing feature, especially because we do not have a solid practical justification that would  warrant further technical complexities.
\end{enumerate}

\subsection{Back to the index $I^0_n$}

The following corollary to Theorems~\ref{th-md} and \ref{th-md-2} is the main result of entire   Section~\ref{orderly systems}. Since the conditions of Theorem~\ref{th-md} make up only a subset of the conditions of Theorem~\ref{th-md-2}, we thus impose the latter set of conditions when formulating the corollary.

\begin{corollary}\label{cor-md-2}
Under the conditions of Theorem~\ref{th-md-2}, we have
\begin{equation}\label{index-0cor}
I^0_n\to_{\mathbb{P}}
I^0_{\infty} := {\int_{a}^{b}(h^*_0(x))_{+}\dd x
\over \int_{a}^{b}|h^*_0(x)|\dd x}
\end{equation}
when $n\to \infty $.
\end{corollary}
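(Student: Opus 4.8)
The plan is to assemble the corollary directly from the two preceding theorems by a Slutsky-type argument, since all the analytic work has already been carried out upstream. First I would invoke the rewriting~\eqref{index-0}, which expresses $I^0_n$ as $\tfrac12\big(1+\Lambda_n\big)$ with $\Lambda_n$ the ratio in~\eqref{ratio-0}; it then suffices to identify the limit in probability of $\Lambda_n$ and transport it back through this affine relation.

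For the numerator of $\Lambda_n$, Theorem~\ref{th-md} gives $Y^0_{n,n}-Y^0_{1,n}\to_{\mathbb P} h_0(b_X)-h_0(a_X)$. Because the conditions of Theorem~\ref{th-md-2} force $h_0$ to be absolutely continuous with Radon-Nikodym derivative $h_0^*$ continuous on $[a,b]$ and vanishing outside it, I would rewrite this limit as $h_0(b_X)-h_0(a_X)=\int_{a_X}^{b_X}h_0^*(x)\,\dd x=\int_a^b h_0^*(x)\,\dd x$. For the denominator, Theorem~\ref{th-md-2} supplies $\sum_{t=2}^n|Y^0_{t,n}-Y^0_{t-1,n}|\to_{\mathbb P}\int_a^b|h_0^*(x)|\,\dd x$ directly, with no further manipulation needed.

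Next I would combine these two limits. Since the numerator and the denominator each converge in probability to a constant, the pair converges jointly in probability to the pair of constants; applying the continuous mapping theorem to $(u,v)\mapsto u/v$, which is continuous at the limit point provided the limiting denominator is nonzero, yields $\Lambda_n\to_{\mathbb P}\Lambda_\infty:=\big(\int_a^b h_0^*\,\dd x\big)\big/\big(\int_a^b|h_0^*|\,\dd x\big)$. The one point requiring explicit care is the strict positivity $\int_a^b|h_0^*(x)|\,\dd x>0$: this is precisely the non-degeneracy condition that $h_0$ is not constant on $[a,b]$, and it is tacitly in force because otherwise $I^0_n$ is the indeterminate ratio $0/0$. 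I expect this positivity requirement, rather than any genuine analytic difficulty, to be the step worth flagging.

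Finally I would substitute back into $I^0_n=\tfrac12\big(1+\Lambda_n\big)$ and use the elementary identity $(z)_+=(|z|+z)/2$, which upon integration gives $\int_a^b(h_0^*)_+\,\dd x=\tfrac12\big(\int_a^b|h_0^*|\,\dd x+\int_a^b h_0^*\,\dd x\big)$. A short rearrangement then shows that $\tfrac12(1+\Lambda_\infty)$ equals the expression $I^0_\infty$ displayed in~\eqref{index-0cor}, which completes the proof.
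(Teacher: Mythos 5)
Your proposal is correct and follows essentially the same route as the paper's own proof: rewrite $I^0_n$ via~\eqref{index-0}, feed in the limits from Theorems~\ref{th-md} and~\ref{th-md-2}, use the vanishing of $h^*_0$ outside $[a,b]$ to identify the numerator limit with $\int_a^b h^*_0(x)\,\dd x$, and convert via $z_+=(|z|+z)/2$. Your explicit flagging of the positivity of $\int_a^b|h^*_0(x)|\,\dd x$ is a point the paper leaves tacit (it is addressed only in the discussion following the corollary), but it does not change the argument.
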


Note the representation (recall equation~\eqref{index-0})
\begin{equation}\label{index-0cor00}
I^0_{\infty} =
{1\over 2} \big( 1+ \Lambda(h_0)\big),
\end{equation}
where
\[
 \Lambda(h_0)={h_0(b)-h_0(a) \over \int_{a}^{b}|h^*_0(x)|\dd x}.
\]
We shall see in the next section that for anomaly-affected systems, the empirical index $I_n$ converges to $0.5$. To distinguish this case from the limit $I^0_{\infty}$ in the currently discussed anomaly-free case, we need to assume $\Lambda(h_0)\neq 0$, which is tantamount to assuming $h_0(b)\neq h_0(a)$, because the numerator of $\Lambda(h_0)$ is positive. This is natural from the practical point of view as it excludes those transfer functions (which are usually non-decreasing) whose values at the end-points of the transfer window $[a,b]$ coincide. Moreover, given model uncertainty, we need to ensure that $h_0$ belongs to a class of functions for which $\Lambda(h_0)$ is sufficiently distant from $0$ so that in the presence of statistical uncertainty we could still -- with high confidence -- be able to see whether the empirical index $I_n$ converges to $0.5$ or some other number.

\begin{note}
The ratio on the right-hand side of statement~\eqref{index-0cor} arises as a normalized distance in a functional  space~\citep{DZ2017}, which after a discretization gives rise to the index $I^0_n$ and thus, in turn, to the index $I_n$~\citep{CDGZ2018}. For a generalization of these indices to multi-argument functions with further applications, we refer to~\cite{DMZ2019}. For related mathematical considerations, we refer to \cite{P1964}.
\end{note}

\section{Anomaly-affected orderly systems}
\label{disorderly systems}

If the system is out of $p$-reasonable order, then the index $I_n$ tends to $1/2$, as shown in the next theorem.

\begin{theorem}\label{th-1}
Let the outputs $Y_t$ be identically distributed random variables with finite $p$-th moments $\mathbb{E}(|Y_t|^p)<\infty $ for some $p\ge 1$. If the outputs are out of $p$-reasonable order  with respect to the inputs, then
\[
I_n\to_{\mathbb{P}} {1\over 2}
\]
when $n\to \infty $.
\end{theorem}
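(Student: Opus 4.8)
The plan is to exploit the algebraic identity $z_+=(|z|+z)/2$ exactly as was done for the anomaly-free index $I^0_n$ in equation~\eqref{index-0}. Applying it to each summand in the numerator of $I_n$ and then telescoping the resulting signed sum, $\sum_{i=2}^n(Y_{i,n}-Y_{i-1,n})=Y_{n,n}-Y_{1,n}$, I would rewrite the index as
\[
I_n=\frac{1}{2}\left(1+\frac{Y_{n,n}-Y_{1,n}}{\sum_{i=2}^n|Y_{i,n}-Y_{i-1,n}|}\right).
\]
The whole theorem then reduces to showing that the ratio inside the parentheses tends to $0$ in probability. Observing that the denominator is precisely $n^{1/p}B_{n,p}$, and that the hypothesis that the outputs are out of $p$-reasonable order means $B_{n,p}\to_{\mathbb{P}}\infty$, the denominator grows strictly faster than $n^{1/p}$. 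It therefore suffices to show that the numerator is $o_{\mathbb{P}}(n^{1/p})$.

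For the numerator, the key observation is that $Y_{1,n}$ and $Y_{n,n}$, being concomitants, are simply two of the observed values $Y_1,\dots,Y_n$; hence
\[
|Y_{n,n}-Y_{1,n}|\le 2\max_{1\le t\le n}|Y_t|.
\]
The crucial step is then a maximal bound establishing $\max_{1\le t\le n}|Y_t|=o_{\mathbb{P}}(n^{1/p})$. For any $\epsilon>0$, a union bound together with identical distribution gives
\[
\mathbb{P}\Big(\max_{1\le t\le n}|Y_t|>\epsilon n^{1/p}\Big)\le n\,\mathbb{P}\big(|Y_1|^p>\epsilon^p n\big).
\]
Since $\mathbb{E}(|Y_1|^p)<\infty$, the standard tail fact that $x\,\mathbb{P}(|Y_1|^p>x)\to 0$ as $x\to\infty$ (applied with $x=\epsilon^p n$) forces the right-hand side to vanish, which yields the maximal bound. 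I would emphasize that only identical distribution, and \emph{not} independence, enters here, since the union bound is insensitive to the dependence structure; this is exactly why the conclusion remains valid for dependent outputs.

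Finally I would assemble the pieces. Writing $A_n:=\max_{1\le t\le n}|Y_t|/n^{1/p}=o_{\mathbb{P}}(1)$, the ratio is bounded in absolute value by $2A_n/B_{n,p}$, a product of the $o_{\mathbb{P}}(1)$ factor $A_n$ and the $o_{\mathbb{P}}(1)$ factor $1/B_{n,p}$, hence itself $o_{\mathbb{P}}(1)$; this delivers $I_n\to_{\mathbb{P}}1/2$. I expect the only genuinely substantive step to be the maximal bound $\max_{1\le t\le n}|Y_t|=o_{\mathbb{P}}(n^{1/p})$, with the remainder being the telescoping identity and the routine interplay of an $o_{\mathbb{P}}(n^{1/p})$ numerator against a denominator that diverges faster than $n^{1/p}$. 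One minor subtlety to address is that the rewriting requires a nonzero denominator; but $B_{n,p}\to_{\mathbb{P}}\infty$ implies $\mathbb{P}(B_{n,p}=0)\to 0$, so the event on which $I_n$ is an indeterminate $0/0$ is asymptotically negligible and does not affect the limit in probability.
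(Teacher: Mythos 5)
Your proposal is correct, and its overall skeleton is the same as the paper's: the identity $z_+=(|z|+z)/2$ plus telescoping to get $I_n=\frac12\bigl(1+(Y_{n,n}-Y_{1,n})/\sum_{t=2}^n|Y_{t,n}-Y_{t-1,n}|\bigr)$, the hypothesis $B_{n,p}\to_{\mathbb{P}}\infty$ to make the denominator swamp $n^{1/p}$, and a range-type bound on the numerator. Where you differ is in the one substantive step, the numerator bound. The paper invokes its Lemma~\ref{lemma-GC}: by the Gascuel--Caraux bounds on expectations of extreme order statistics of identically distributed (arbitrarily dependent) variables together with H\"older's and Markov's inequalities, $n^{-1/p}(Y_{n:n}-Y_{1:n})=O_{\mathbb{P}}(1)$, and then implicitly uses that the concomitant difference satisfies $|Y_{n,n}-Y_{1,n}|\le Y_{n:n}-Y_{1:n}$. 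You instead bound $|Y_{n,n}-Y_{1,n}|\le 2\max_{1\le t\le n}|Y_t|$ and control the maximum by a union bound plus the tail fact $x\,\mathbb{P}(|Y_1|^p>x)\to 0$, which is valid under arbitrary dependence and even yields the slightly stronger conclusion $o_{\mathbb{P}}(n^{1/p})$ rather than $O_{\mathbb{P}}(n^{1/p})$ (either suffices here). Your route is more elementary and self-contained; the paper's route reuses a lemma it needs anyway for Theorems~\ref{th-1aa}, \ref{tf-2}, and \ref{tf-3}, and its expectation-level bound \eqref{gc-result} carries quantitative information that the union-bound argument does not. Your closing remark about the $0/0$ event having vanishing probability is a legitimate point that the paper glosses over. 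Both arguments use only identical marginal distributions and finite $p$-th moments, never independence, which is exactly what makes the theorem applicable to the dependent outputs considered in the paper.
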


To apply Theorem~\ref{th-1} for detecting non-degenerate anomalies $\boldsymbol{\varepsilon}_t$, we need to assume that when all $\boldsymbol{\varepsilon}_t$'s are equal to $\mathbf{0}$, then the system is in $p$-reasonable order. Hence, our task in this section is this:
Assuming that the system with anomaly-free outputs $Y^0_{i}=h_0(X_{i})=h(X_{i},\mathbf{0})$ is in $p$-reasonable order for some $p>0$, we need to show that the system becomes out of $p$-reasonable order when the outputs $Y_t$ are equal to $h(X_t,\boldsymbol{\varepsilon}_t)$ with non-degenerate at $\mathbf{0}$ random anomalies $\boldsymbol{\varepsilon}_t$. In other words, assuming $B^0_{n,p}=O_{\mathbb{P}}(1)$
when $n\to \infty $, we need to show that
$B_{n,p}\to _{\mathbb{P}} \infty $ when $\boldsymbol{\varepsilon}_t$'s are non-degenerate at $\mathbf{0}$.

To avoid overloading arguments with mathematical complexities, from now on we set $d=2$ and work with the three transfer functions~TF\ref{tf1}--TF\ref{tf3} (Section~\ref{notation}). Consequently, in the anomaly-free case we have
\begin{equation}\label{bnp0-a}
B^0_{n,p}={1\over n^{1/p}}\sum_{t=2}^n |h_0(X_{t:n})-h_0(X_{t-1:n})|.
\end{equation}
We refer to Theorem~\ref{th-1aa} for a description of those inputs $X_t$ and the baseline function $h_0$ for which the anomaly-free system is in $p$-reasonable order. Our next theorem deals with the case when the input anomalies $\delta_t$ are absent.

\begin{theorem}\label{tf-1}
Let $\delta_t=0$ for all $t\in \mathbb{\mathbb{Z}}$, and let the anomaly-free outputs $Y_t^0$ be in $p$-reasonable order with respect to the inputs for some $p>0$. The outputs $Y_t$ are out of $p$-reasonable order with respect to the inputs if and only if the output anomalies $\epsilon_t$ are out of $p$-reasonable order with respect to the inputs.
\end{theorem}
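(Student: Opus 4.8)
The plan is to reduce the statement to the additive output model and then exploit a single triangle inequality. Since $\delta_t=0$ for all $t$, the operative transfer function is TF\ref{tf2}, namely $h(x,0,z)=h_0(x)+z$, so the outputs are $Y_t=h_0(X_t)+\epsilon_t=Y_t^0+\epsilon_t$. The first step I would record is that the concomitant map is linear in the outputs: it merely relabels the outputs according to the ranks of the common inputs $X_t$, and this permutation is driven \emph{solely} by the $X_t$, hence acts identically on the $Y_t^0$ and on the $\epsilon_t$. Writing $\epsilon_{t,n}=\sum_{s=1}^n\epsilon_s\mathds{1}\{X_s=X_{t:n}\}$ for the concomitants of the output anomalies, this gives the clean decomposition $Y_{t,n}=Y_{t,n}^0+\epsilon_{t,n}$ for every $t$, and therefore $Y_{t,n}-Y_{t-1,n}=(Y_{t,n}^0-Y_{t-1,n}^0)+(\epsilon_{t,n}-\epsilon_{t-1,n})$.

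Next I would introduce the three statistics, all normalized by $n^{1/p}$: the output statistic $B_{n,p}$, the anomaly-free statistic $B^0_{n,p}$ (which is $O_{\mathbb{P}}(1)$ by hypothesis), and the anomaly statistic $B^{\epsilon}_{n,p}:=n^{-1/p}\sum_{t=2}^n|\epsilon_{t,n}-\epsilon_{t-1,n}|$; by Definition~\ref{def-00} applied with $\epsilon_t$ in the role of the output, the phrase ``$\epsilon_t$ out of $p$-reasonable order with respect to the inputs'' means precisely $B^{\epsilon}_{n,p}\to_{\mathbb{P}}\infty$. Applying the reverse triangle inequality $\bigl||a+b|-|b|\bigr|\le|a|$ termwise to the decomposition above, summing over $t$, and dividing by $n^{1/p}$ yields the two-sided bound
\[
\big|B_{n,p}-B^{\epsilon}_{n,p}\big|\le B^0_{n,p}.
\]
This inequality is the heart of the matter: it says that the output statistic and the anomaly statistic differ only by a quantity that, by the standing assumption, stays stochastically bounded.

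From here both implications are immediate and symmetric. For the ``if'' direction the bound gives $B_{n,p}\ge B^{\epsilon}_{n,p}-B^0_{n,p}$; fixing any threshold $K$, choosing $M$ so that $\mathbb{P}(B^0_{n,p}>M)$ is small for large $n$ (possible since $B^0_{n,p}=O_{\mathbb{P}}(1)$), and splitting the event $\{B_{n,p}\le K\}$ according to whether $B^0_{n,p}\le M$ forces $\mathbb{P}(B_{n,p}\le K)\to 0$, i.e.\ $B_{n,p}\to_{\mathbb{P}}\infty$. The ``only if'' direction is identical with the roles reversed, using $B^{\epsilon}_{n,p}\ge B_{n,p}-B^0_{n,p}$. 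I expect no real difficulty in these probabilistic estimates; the only point genuinely requiring care is the very first one, namely verifying that the concomitants decompose additively as $Y_{t,n}=Y^0_{t,n}+\epsilon_{t,n}$, since this is where the hypothesis $\delta_t=0$ (and hence the purely output-additive structure of TF\ref{tf2}) is used. Once that linearity is in place, the whole equivalence is a triangle-inequality consequence of $B^0_{n,p}=O_{\mathbb{P}}(1)$.
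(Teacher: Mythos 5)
Your proof is correct and follows essentially the same route as the paper: the paper's Lemma~\ref{conco-e} is exactly your observation that the concomitant permutation is driven solely by the $X_t$'s (so $Y_{t,n}=Y^0_{t,n}+\epsilon_{t,n}$), and the paper then applies the same triangle-inequality bounds in each direction separately, using $B^0_{n,p}=O_{\mathbb{P}}(1)$, where you package them as the single two-sided bound $\big|B_{n,p}-B^{\epsilon}_{n,p}\big|\le B^0_{n,p}$.
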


To illustrate Theorem~\ref{tf-1}, consider the case when the output anomalies $\epsilon_t$ and the inputs $X_t$ are independent. Assume also that the output anomalies $\epsilon_t$ are iid and have finite first moments. (Such anomalies can be interpreted as genuinely unintentional.) In this case, the joint distribution of the concomitants $(\epsilon_{1,n},\dots, \epsilon_{n,n})$ is the same as the joint distribution of the anomalies $(\epsilon_1,\dots , \epsilon_n)$ themselves. Consequently, the output anomalies $\epsilon_t$ are out of $p$-reasonable order with respect to the inputs if and only if
\begin{equation}\label{eps-0}
{1\over n^{1/p}}\sum_{t=2}^n |\epsilon_{t}-\epsilon_{t-1}|\to _{\mathbb{P}} \infty
\end{equation}
when $n\to \infty $. Statement~\eqref{eps-0} holds (see Lemma~\ref{positive-moment}) whenever the distribution of $\epsilon_1$ is non-degenerate (at any one point). Hence, we have the following corollary to Theorem~\ref{tf-1}.

\begin{corollary}\label{cor-1}
Let $\delta_t=0$ for all $t\in \mathbb{\mathbb{Z}}$, and let the anomaly-free outputs $Y_t^0$ be in $p$-reasonable order with respect to the inputs.  The outputs $Y_t$ are out of $p$-reasonable order with respect to the inputs for every $p>1$ whenever the output anomalies $\epsilon_t$ are iid, non-degenerate, and independent of the inputs.
\end{corollary}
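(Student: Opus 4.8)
The plan is to obtain the corollary as a direct consequence of Theorem~\ref{tf-1}, whose hypotheses are already granted here: with $\delta_t=0$ for all $t$ and the anomaly-free outputs $Y_t^0$ in $p$-reasonable order, Theorem~\ref{tf-1} reduces the assertion ``$Y_t$ are out of $p$-reasonable order'' to the single assertion that the output anomalies $\epsilon_t$ are themselves out of $p$-reasonable order with respect to the inputs. Thus the whole task becomes verifying this latter property for iid, non-degenerate $\epsilon_t$ that are independent of the inputs, and for every $p>1$.

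First I would unwind the definition of $p$-reasonable order applied to the $\epsilon_t$: it involves the concomitants $\epsilon_{t,n}$, obtained by reordering the anomalies according to the order statistics of the inputs, and the quantity $n^{-1/p}\sum_{t=2}^n|\epsilon_{t,n}-\epsilon_{t-1,n}|$. The key observation is that the permutation sorting $X_1,\dots,X_n$ is a function of the inputs alone and is therefore independent of the sequence $(\epsilon_1,\dots,\epsilon_n)$; since an iid sequence is exchangeable, applying an independent random permutation leaves its joint law unchanged. Hence $(\epsilon_{1,n},\dots,\epsilon_{n,n})\stackrel{d}{=}(\epsilon_1,\dots,\epsilon_n)$, so the random variable $n^{-1/p}\sum_{t=2}^n|\epsilon_{t,n}-\epsilon_{t-1,n}|$ has the same distribution as $n^{-1/p}\sum_{t=2}^n|\epsilon_t-\epsilon_{t-1}|$. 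Because convergence in probability to the constant $+\infty$ is governed entirely by the marginal laws, it therefore suffices to establish statement~\eqref{eps-0}.

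For \eqref{eps-0} I would invoke Lemma~\ref{positive-moment}, but let me indicate the mechanism I expect it to supply, since this is where the genuine work lies. Write $D_t=|\epsilon_t-\epsilon_{t-1}|$. These summands overlap (consecutive $D_t$ share a common $\epsilon$), so they are not independent; the clean move is to pass to a non-overlapping subsequence, such as the gaps $D_2,D_4,\dots$ across disjoint pairs $(\epsilon_{2k-1},\epsilon_{2k})$, which are iid copies of $D=|\epsilon_2-\epsilon_1|$. Since $\epsilon_1$ is non-degenerate, $D$ is a non-negative random variable that is not a.s. zero, so there exist $c>0$ and $\theta>0$ with $\mathbb{P}(D\ge c)=\theta$. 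Bounding the full sum below by the subsequence sum and then by $c$ times the count of large gaps, $\sum_{t=2}^n D_t\ge c\,\#\{k\le \lfloor n/2\rfloor: D_{2k}\ge c\}$, a binomial $(\lfloor n/2\rfloor,\theta)$ count that concentrates around $\theta n/2$ by the weak law of large numbers. Thus $\sum_{t=2}^n D_t$ is of order $n$ in probability, and dividing by $n^{1/p}$ leaves a factor of order $n^{1-1/p}$ that diverges precisely because $p>1$.

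The main obstacle is exactly this last divergence argument: one must avoid assuming a finite first moment for $\epsilon_1$, since the corollary only posits non-degeneracy, so a direct appeal to an $L^1$ law of large numbers for $\sum_{t=2}^n D_t$ is unavailable. The subsequence-plus-indicator device circumvents this by replacing ``mean of $D$'' with ``$\mathbb{P}(D\ge c)$,'' which is automatically positive for a non-degenerate $\epsilon_1$; the role of the hypothesis $p>1$ is then solely to force $n^{1-1/p}\to\infty$. I would present the exchangeability reduction and the resulting distributional identity carefully, then cite Lemma~\ref{positive-moment} for statement~\eqref{eps-0}, so that the corollary follows at once by combining it with Theorem~\ref{tf-1}.
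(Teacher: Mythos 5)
Your proposal is correct, and its skeleton --- reduce via Theorem~\ref{tf-1} to the assertion that the output anomalies $\epsilon_t$ are out of $p$-reasonable order, then use independence of the sorting permutation from the anomalies together with exchangeability to identify the law of the concomitant-difference sum with that of $n^{-1/p}\sum_{t=2}^n|\epsilon_t-\epsilon_{t-1}|$, i.e., statement~\eqref{eps-0} --- is exactly the paper's. Where you genuinely diverge is in how \eqref{eps-0} is established. The paper's Lemma~\ref{positive-moment} handles the $1$-dependence of the summands by splitting into even- and odd-indexed sums and then invokes the law of large numbers; this is precisely why that lemma assumes finite first moments, since it needs $\mathbb{E}\big(|\epsilon_2-\epsilon_1|\big)$ to exist and be strictly positive. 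Your subsequence-plus-indicator device --- lower-bounding the sum by $c$ times a $\mathrm{Binomial}(\lfloor n/2\rfloor,\theta)$ count of gaps exceeding $c$, with $\theta=\mathbb{P}(|\epsilon_2-\epsilon_1|\ge c)>0$ guaranteed by non-degeneracy alone --- replaces the mean by a tail probability and therefore requires no moment assumption whatsoever. This buys something real: the corollary as stated posits only iid, non-degenerate anomalies independent of the inputs, whereas the paper's own route, read literally, leans on the finite-first-moment hypothesis of Lemma~\ref{positive-moment} (an assumption the paper does make in the prose preceding the corollary, but which is absent from the corollary's statement). Your argument thus proves the corollary exactly as stated, at the cost of a slightly longer elementary step; the paper's version is shorter but needs the extra integrability.
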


For a special but important case of Corollary~\ref{cor-1}, recall that by Theorem~\ref{th-1aa}, the anomaly-free outputs $Y_t^0$ are in $p$-reasonable order with respect to the inputs when the baseline function $h_0$ is Lipshitz continuous. We shall encounter the latter assumption in the following two theorems.

First, we tackle the case when the output anomalies $\epsilon_t$ are not present.

\begin{theorem}\label{tf-2}
Let $\epsilon_t=0$ for all $t\in \mathbb{\mathbb{Z}}$, and let the baseline function $h_0$ be Lipshitz continuous. Furthermore, let the inputs $X_t$ be strictly stationary,  $\alpha $-mixing, and have finite $p$-th moments  $\mathbb{E}(|X_t|^p)<\infty $ for some $p> 1$. Then the outputs $Y_t$ are out of $p$-reasonable order with respect to the inputs whenever the following conditions hold:
\begin{enumerate}[\rm (i)]
\item  \label{part-11}
the input anomalies $\delta_t$ are iid and independent of the inputs $X_t$;
\item  \label{part-12}
$\mathbb{E}\big(|h_0(X_{1}+\delta_{2})-h_0(X_{1}+\delta_{1})|\big) >0 $.
\end{enumerate}
\end{theorem}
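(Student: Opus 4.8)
The plan is to specialize Definition~\ref{def-00} to the present set-up and then to extract, through a law-of-large-numbers argument, a linear-in-$n$ lower bound for the sum defining $B_{n,p}$. Since $\epsilon_t=0$, transfer function~TF\ref{tf1} gives $Y_t=h_0(X_t+\delta_t)$, and the output concomitant attached to the $t$-th input order statistic is $Y_{t,n}=h_0(X_{t:n}+\delta_{t,n})$, where $\delta_{t,n}=\sum_{s=1}^n\delta_s\mathds{1}\{X_s=X_{t:n}\}$ is the concomitant of the input anomalies. Because the $\delta_t$ are iid and independent of the whole sequence $X_t$ (assumption~(\ref{part-11})), the vector $(\delta_{1,n},\dots,\delta_{n,n})$ is again iid with the law of $\delta_1$ and is, moreover, independent of the order-statistics vector $(X_{1:n},\dots,X_{n:n})$; this distributional fact drives the entire argument. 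Writing $K$ for the Lipschitz constant of $h_0$ and inserting and subtracting $h_0(X_{t-1:n}+\delta_{t,n})$ inside each increment, the reverse triangle inequality gives the pointwise bound $|Y_{t,n}-Y_{t-1,n}|\ge g_t-K|X_{t:n}-X_{t-1:n}|$, where $g_t:=|h_0(X_{t-1:n}+\delta_{t,n})-h_0(X_{t-1:n}+\delta_{t-1,n})|$ isolates the effect of two distinct anomaly values evaluated at a common input level.

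Summing over $t$ and telescoping the range term, I obtain
\[
B_{n,p}\ge \frac{1}{n^{1/p}}\sum_{t=2}^n g_t-\frac{K\,(X_{n:n}-X_{1:n})}{n^{1/p}} .
\]
The second term is negligible: since $\mathbb{E}(|X_1|^p)<\infty$ forces $x^p\,\mathbb{P}(|X_1|>x)\to0$, a stationarity-and-union-bound estimate gives $n\,\mathbb{P}(|X_1|>\eta n^{1/p})\to0$ for every $\eta>0$, hence $\max_{1\le t\le n}|X_t|=o_{\mathbb{P}}(n^{1/p})$ and $(X_{n:n}-X_{1:n})/n^{1/p}\to_{\mathbb{P}}0$ (this is the same moment budget already exploited in Theorem~\ref{th-1aa}). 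It therefore remains to show that the first term tends to infinity, and for this it suffices to establish a weak law $\tfrac1n\sum_{t=2}^n g_t\to_{\mathbb{P}} c$ with $c:=\mathbb{E}\big(|h_0(X_1+\delta_2)-h_0(X_1+\delta_1)|\big)>0$ by assumption~(\ref{part-12}); indeed then $n^{-1/p}\sum_{t=2}^n g_t=n^{1-1/p}\cdot\tfrac1n\sum_{t=2}^n g_t\to_{\mathbb{P}}\infty$, because $p>1$.

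For the weak law I would condition on the order statistics. Given $(X_{1:n},\dots,X_{n:n})$, each $g_t$ is a deterministic function of the two consecutive iid anomalies $\delta_{t-1,n},\delta_{t,n}$, so the array $\{g_t\}$ is one-dependent; splitting it into its odd- and even-indexed subsequences produces two families of conditionally independent non-negative summands. The conditional mean is $\mathbb{E}(g_t\mid X)=\psi(X_{t-1:n})$ with $\psi(x):=\mathbb{E}|h_0(x+\delta_2)-h_0(x+\delta_1)|$, and since $\sum_{t=1}^n\psi(X_{t:n})=\sum_{t=1}^n\psi(X_t)$ the ergodicity of the strictly stationary $\alpha$-mixing sequence $\psi(X_t)$ (the sequence is ergodic because it is stationary and mixing, cf.\ the mixing facts recalled before Theorem~\ref{th-md-2}) yields $\tfrac1n\sum_{t=2}^n\psi(X_{t-1:n})\to_{\mathbb{P}}\mathbb{E}\psi(X_1)$. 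Because the anomalies may have infinite variance (as in the $\mathrm{Lomax}(1.2,1)$ scenario), a direct second-moment argument is unavailable, so I would truncate, replacing $g_t$ by $g_t\wedge M$: the truncated summands are bounded by $M$, their conditional block variances are $O(nM^2)$, and a conditional Chebyshev inequality (applied to the odd and even blocks, then integrated out) gives $\tfrac1n\sum_{t=2}^n(g_t\wedge M)\to_{\mathbb{P}}\mathbb{E}\big(\psi_M(X_1)\big)$, where $\psi_M(x)=\mathbb{E}\big(|h_0(x+\delta_2)-h_0(x+\delta_1)|\wedge M\big)$. Monotone convergence gives $\mathbb{E}\psi_M(X_1)\uparrow\mathbb{E}\psi(X_1)\ge c>0$, and since $g_t\ge g_t\wedge M$ this forces $\liminf_n\tfrac1n\sum_{t=2}^n g_t$ to exceed a positive constant in probability, which is all the lower bound needs.

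The main obstacle is precisely this last weak law: the summands $g_t$ carry two independent sources of dependence — the one-dependence inherited from sharing a common concomitant anomaly between neighbours, and the dependence among the input order statistics — while the anomalies are assumed only to possess a finite first moment. The conditioning-plus-odd/even-splitting device neutralizes the first source and reduces matters to conditionally independent sums, the mixing law of large numbers for $\psi(X_t)$ handles the second, and truncation at level $M$ (rather than a variance computation) keeps the heavy-tailed case within reach even when the second moment diverges; the delicate point is to assemble these three ingredients so that the conditional and unconditional convergences combine correctly and the final product $n^{1-1/p}\cdot\tfrac1n\sum g_t$ is driven to infinity.
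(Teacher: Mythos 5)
Your argument is correct, and it shares only its skeleton with the paper's proof: both start from the concomitant representation $Y_{t,n}=h_0(X_{t:n}+\delta_{t,n})$ (the paper's Lemma~\ref{conco-e}), both use the Lipschitz add-and-subtract trick plus telescoping to peel off a range term, and both kill that term at rate $n^{-1/p}$ (the paper via Lemma~\ref{lemma-GC}, you via a union bound and $x^p\,\mathbb{P}(|X_1|>x)\to 0$; both work, yours even gives $o_{\mathbb{P}}(1)$). The divergence proofs for the remaining sum, however, are genuinely different. The paper passes, without comment, from the concomitant sum $\sum_{t=2}^n|h_0(X_{t:n}+\delta_{t,n})-h_0(X_{t:n}+\delta_{t-1,n})|$ to the sum of the \emph{unordered} strictly stationary variables $\xi_t=|h_0(X_t+\delta_t)-h_0(X_t+\delta_{t-1})|$, and then runs an unconditional second-moment Chebyshev argument: $\mathbb{E}\bigl(|\sum_{t=2}^n(\xi_t-\mu)|^2\bigr)=o(n^2)$ via Rio's covariance inequality for bounded mixing sequences together with the comparison $\alpha_\xi(t)\le\alpha_X(t)$ for $t\ge 2$. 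That route requires $\xi_t$ to be bounded, which the paper secures by declaring $h_0$ bounded inside the proof (an assumption absent from the theorem statement), and the ordered-to-unordered swap itself is a logical jump: the two sums are not equal in distribution (only their conditional means given the inputs agree, up to one boundary term). Your proof handles exactly this point honestly: you stay with the concomitants, condition on the inputs so that $(\delta_{1,n},\dots,\delta_{n,n})$ is conditionally iid and the $g_t$ form a conditionally $1$-dependent array with conditional means $\psi(X_{t-1:n})$, and then combine odd/even splitting, truncation at level $M$, conditional Chebyshev, Birkhoff's ergodic theorem for the bounded sequence $\psi_M(X_t)$ ($\alpha$-mixing implies ergodicity -- a standard fact, though not one the paper records), and monotone convergence in $M$ to select a positive truncated mean. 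The trade-off: the paper's covariance-inequality route is shorter under its implicit assumptions and could in principle yield rates of divergence from the mixing rates, while your route needs no boundedness of $h_0$ and no integrability of the anomalies beyond hypothesis~(\ref{part-12}) itself, uses only ergodicity of the inputs rather than quantitative mixing bounds, and closes the concomitant-to-stationary gap that the paper leaves open.
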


A sufficient condition for assumption~\eqref{part-12} can be obtained via Lemma~\ref{positive-moment} and the elementary bound
\[
\mathbb{E}\big(|h_0(X_{1}+\delta_{2})-h_0(X_{1}+\delta_{1})|\big) \ge \mathbb{E}\big(|g_0(\delta_{2})-g_0(\delta_{1})|\big),
\]
where
\begin{equation}\label{g0-0}
g_0(y)=\mathbb{E}(h_0(X_{1}+y)).
\end{equation}
That is, assumption~\eqref{part-12} is satisfied whenever the distribution of $g_0(\delta_{1})$ is non-degenerate.

Finally, we tackle the case when the two anomalies $\delta_t$ and $\epsilon_t$ are non-degenerate.

\begin{theorem}\label{tf-3}
Let the baseline function $h_0$ be Lipshitz continuous. Furthermore, let the inputs $X_t$ be strictly stationary, $\alpha $-mixing, and have finite $p$-th moments $\mathbb{E}(|X_t|^p)<\infty $ for some $p> 1$.  Then the outputs $Y_t$ are out of $p$-reasonable order with respect to the inputs whenever the following conditions hold:
\begin{enumerate}[\rm (i)]
\item \label{part-121}
the input anomalies $\delta_t$ are iid, the output anomalies $\epsilon_t$ are also iid, they are independent of each other, and are also independent of the inputs $X_t$;
\item \label{part-122}
$\mathbb{E}\big(|h_0(X_{1}+\delta_{2})+\epsilon_{2}
-h_0(X_{1}+\delta_{1})-\epsilon_{1}|\big)>0$.
\end{enumerate}
\end{theorem}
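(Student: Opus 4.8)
The plan is to prove $B_{n,p}\to_{\mathbb P}\infty$ by isolating inside each concomitant difference a fluctuation driven purely by the anomalies, and showing that it accumulates at the linear rate $n$, which dominates the normalization $n^{1/p}$ because $p>1$. Writing $\delta_{(t)}$ and $\epsilon_{(t)}$ for the anomalies attached to the observation whose input equals the order statistic $X_{t:n}$, transfer function~TF\ref{tf3} gives $Y_{t,n}=h_0(X_{t:n}+\delta_{(t)})+\epsilon_{(t)}$. I would split
\[
Y_{t,n}-Y_{t-1,n}=A_t+C_t,
\]
where $A_t=h_0(X_{t:n}+\delta_{(t)})-h_0(X_{t-1:n}+\delta_{(t)})$ holds the anomaly fixed and absorbs the change of the input location, while
\[
C_t=h_0(X_{t-1:n}+\delta_{(t)})+\epsilon_{(t)}-h_0(X_{t-1:n}+\delta_{(t-1)})-\epsilon_{(t-1)}
\]
evaluates $h_0$ at the common point $X_{t-1:n}$ but with the two distinct consecutive anomaly pairs. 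By Lipschitz continuity and telescoping, $\sum_{t=2}^n|A_t|\le K\sum_{t=2}^n(X_{t:n}-X_{t-1:n})=K(X_{n:n}-X_{1:n})$, and since $\mathbb E(|X_1|^p)<\infty$ forces $X_{n:n}/n^{1/p}\to_{\mathbb P}0$ and $X_{1:n}/n^{1/p}\to_{\mathbb P}0$, the contribution $n^{-1/p}\sum_{t=2}^n|A_t|\to_{\mathbb P}0$. The reverse triangle inequality $|Y_{t,n}-Y_{t-1,n}|\ge|C_t|-|A_t|$ then reduces everything to showing that $n^{-1}\sum_{t=2}^n|C_t|$ stays bounded away from $0$.

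For the $C_t$ part I would condition on $\mathcal X=\sigma(X_1,\dots,X_n)$. Because the anomalies are iid within each family and independent of the inputs (condition~\eqref{part-121}), the rank permutation is $\mathcal X$-measurable, so given $\mathcal X$ the reshuffled pairs $(\delta_{(t)},\epsilon_{(t)})_t$ are again iid copies of $(\delta_1,\epsilon_1)$, now independent of the frozen order statistics. Hence $\mathbb E(|C_t|\mid\mathcal X)=\psi(X_{t-1:n})$, where $\psi(x)=\mathbb E|h_0(x+\delta_2)+\epsilon_2-h_0(x+\delta_1)-\epsilon_1|$ is formed from two independent copies. The Lipschitz bound $|h_0(x+\delta_2)-h_0(x+\delta_1)|\le K|\delta_2-\delta_1|$ is decisive here: it makes $\psi$ bounded by $M:=K\,\mathbb E|\delta_2-\delta_1|+\mathbb E|\epsilon_2-\epsilon_1|$ uniformly in $x$, so $\psi(X_1)$ is integrable and, by the ergodic theorem (Theorem~\ref{lm-1} uses that $\alpha$-mixing implies ergodicity),
\[
\frac1n\sum_{t=2}^n\psi(X_{t-1:n})=\frac1n\sum_{t=1}^n\psi(X_t)-\frac{\psi(X_{n:n})}{n}\longrightarrow \mathbb E\psi(X_1)=\mu ,
\]
where $\mu=\mathbb E|h_0(X_1+\delta_2)+\epsilon_2-h_0(X_1+\delta_1)-\epsilon_1|>0$ is exactly the quantity declared positive in condition~\eqref{part-122}.

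It then remains to pass from conditional means to the sum itself, that is, to prove the conditional weak law $n^{-1}\sum_{t=2}^n\bigl(|C_t|-\psi(X_{t-1:n})\bigr)\to_{\mathbb P}0$. Given $\mathcal X$ the terms $|C_t|$ are $1$-dependent, and the even-indexed and odd-indexed subfamilies each use disjoint anomaly pairs and are therefore conditionally independent, so I would treat the two subsequences separately. The obstacle I anticipate is that only first moments of the anomalies are available (the motivating $\mathrm{Lomax}(1.2,1)$ case has infinite variance), so a variance or Chebyshev argument is unavailable; this is the technical heart of the proof. The remedy is once more the Lipschitz structure: each $|C_t|$ is dominated by $D_t:=K|\delta_{(t)}-\delta_{(t-1)}|+|\epsilon_{(t)}-\epsilon_{(t-1)}|$, and along an even or odd subsequence the $D_t$ are iid with finite mean, so the family $\{|C_t|\}$ is uniformly integrable; a truncation-based weak law for independent, uniformly integrable arrays — the same mechanism underlying Lemma~\ref{positive-moment} — then yields the required concentration.

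Combining the three ingredients gives $n^{-1}\sum_{t=2}^n|C_t|\to_{\mathbb P}\mu$, whence
\[
\frac1{n^{1/p}}\sum_{t=2}^n|C_t|=n^{1-1/p}\cdot\frac1n\sum_{t=2}^n|C_t|\to_{\mathbb P}\infty
\]
because $1-1/p>0$, and together with the vanishing $A_t$-correction we conclude $B_{n,p}\to_{\mathbb P}\infty$, so the outputs are out of $p$-reasonable order. Throughout I am tacitly using that $\mathbb E|\delta_1|<\infty$ and $\mathbb E|\epsilon_1|<\infty$ (the regime relevant to condition~\eqref{part-122} having a finite positive value); if either mean is infinite, the divergence of $B_{n,p}$ is only easier to obtain, since $\mu$ is then $+\infty$ and the same lower-bound scheme applies after truncating the anomalies at a high level.
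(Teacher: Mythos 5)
Your proposal is correct, but its core argument is genuinely different from the paper's. Both proofs share the same opening reduction: represent the concomitants via Lemma~\ref{conco-e}, strip off the input-variation term by Lipschitz continuity and telescoping ($\sum_t|A_t|\le K(X_{n:n}-X_{1:n})$, negligible after dividing by $n^{1/p}$), and reduce everything to the divergence of the anomaly-driven sum. From there the paper silently replaces the concomitant-indexed sum by the sum of $\zeta_t=|h_0(X_t+\delta_t)+\epsilon_t-h_0(X_t+\delta_{t-1})-\epsilon_{t-1}|$ over the \emph{original} (unordered) indices, treats $\zeta_t$ as a strictly stationary sequence whose mixing coefficients are dominated by $\alpha_X(t)$ for lags $t\ge 2$, and proves the weak law via Chebyshev's inequality and Rio's covariance inequality; this second-moment route explicitly relies on the summands being bounded (the paper invokes boundedness of $h_0$, and in the present theorem the additive $\epsilon_t-\epsilon_{t-1}$ terms make $\zeta_t$ unbounded, a point the paper's ``analogous'' passage glosses over). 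You instead stay with the concomitant sum itself, condition on $\mathcal X=\sigma(X_1,\dots,X_n)$, use that the permuted anomaly pairs are conditionally iid, compute the conditional means $\psi(X_{t-1:n})$, exploit the permutation invariance $\sum_t\psi(X_{t-1:n})=\sum_t\psi(X_t)-\psi(X_{n:n})$ so that the ergodic theorem (mixing $\Rightarrow$ ergodic) identifies the limit with the quantity in condition~\eqref{part-122}, and close with a conditional weak law for $1$-dependent, uniformly integrable arrays via the even/odd split. Your route buys three things: it needs no second moments, so it covers the infinite-variance $\text{Lomax}(1.2,1)$ anomalies used in the paper's own experiments; it uses the $\alpha$-mixing hypothesis only through ergodicity rather than through quantitative covariance bounds; and it makes rigorous the concomitant-versus-unordered identification that the paper's notation switch leaves implicit (the two sums are \emph{not} equal in distribution in general, only their conditional means match). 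The paper's route, in exchange, is shorter given its lemma machinery. One caveat in your write-up: the uniform bound on $\psi$ and the uniform integrability of $\{|C_t|\}$ tacitly assume $\mathbb{E}|\delta_1|<\infty$ and $\mathbb{E}|\epsilon_1|<\infty$, which the theorem does not state; you acknowledge this, and the cleanest repair is the one you sketch -- work with $|C_t|\wedge L$ for a large level $L$, whose conditional means are bounded by $L$ and still converge to a positive limit, so the same scheme yields $B_{n,p}\to_{\mathbb P}\infty$ without any anomaly moment assumptions.
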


Assumption~\eqref{part-122} is satisfied  when the random variable $g_0(\delta_{1})+\epsilon_{1}$ is non-degenerate, where $g_0$ is the same function as in equation~\eqref{g0-0}.

\section{A summary and potential extensions}
\label{conclude}

In this paper we have explored a method for detecting systematic anomalies affecting systems when  genuine anomaly-free inputs belong to a large class of stationary time series, or can be reduced to such. The anomalies may mimic (from the distributional point of view) the genuine inputs so closely that the contaminated system may not exhibit any visual aberrations, yet they can be detected using the herein proposed method. Supporting probabilistic and statistical results have been rigorously derived, and conditions under which they hold carefully specified. This rigour facilitates confidence when interpreting results and thus when making decisions.

To illustrate how the method works in practice, we have illustrated it using actual time series and also included a numerical experiment under various model and anomaly specifications. The results have shown that the method is robust and is able to detect even tiny systematic anomalies, although, naturally, under longer periods of observation. The method covers light- and heavy-tailed inputs, thus showing its versatility in applications, including those that are associated with data traffic and cyber risks.

Several interesting topics for future study naturally arise from the present paper, throughout which we have so far concentrated on the model
\begin{equation}\label{mod-0}
Y_t=h(X_t,\boldsymbol{\varepsilon}_t)
\end{equation}
with one-dimensional inputs $X_t\in \mathbb{R}$ and outputs $Y_t\in \mathbb{R}$, and $d$-dimensional anomalies $\boldsymbol{\varepsilon}_t \in \mathbb{R}^{d}$. With the time series structure of inputs, contemporary and historical observations enter into the model via the equation $X_t=\langle \boldsymbol{\beta}, \mathbf{Z}_t \rangle $. This point of view together with naturally occurring multidimensional predictors in regression, classification, and, generally, in machine learning lead us to the model
\begin{equation}\label{mod-1}
Y_t=h(\mathbf{X}_t,\boldsymbol{\varepsilon}_t)
\end{equation}
with $k$-dimensional predictors $\mathbf{X}_t\in \mathbb{R}^k$ for some $k\in \mathbb{N} \cup \{+\infty \}$. More generally, problems associated with anomaly detection in parallel computer systems, electrical grid,  and wireless communication architectures such as SISO, SIMO, etc. \citep[e.g.,][]{tv2005,k2017} lead us to the model
\begin{equation}\label{mod-2}
\mathbf{Y}_t=h(\mathbf{X}_t,\boldsymbol{\varepsilon}_t)
\end{equation}
with $q$-valued ($q\in \mathbb{N}$) transfer function $h:\mathbb{R}^{k+d}\to \mathbb{R}^q$ and thus $q$-dimensional outputs $\mathbf{Y}_t$.

The transition from the univariate inputs $X_t$ to the multivariate ones $\mathbf{X}_t$ gives rise to serious mathematical challenges, particularly because of the lack of total ordering in multi-dimensional Euclidean spaces. We feel that the coordinate-wise ordering might lead to a useful anomaly-detection method, but at this moment it looks ad hoc, lacking geometric interpretation and thus intuitive appeal. The optimization problems tackled by \cite {DZ2017}, and \cite{DMZ2019} might give a clue as to what path to take. Alternatively, studies  by \cite{k1997} and \cite{m2002} on ordering multi-dimensional elements could give rise to an effective solution.

The multivariate nature of outputs $\mathbf{Y}_t$ also creates serious statistical-testing and decision-making problems, but we feel that with some effort, such problems can be tackled with the help of e-values studied by \citet{VW2021} and the multiple testing procedures developed by \citet{WR2022}. The e-values are expectation-based versions of the classical p-values. They are simpler to use, thus facilitating multiple hypothesis testing and, in turn, decision making. Hence, the e-values can give rise to impressively powerful and convenient statistical tools in the context of systematic-anomaly detection in, e.g., parallel computer systems, electrical grid, wireless communication architectures, and so on.

Finally, we conclude with the note that anomaly detection problems involve adversarial aspects (e.g., adversarial signal processing, adversarial hypothesis testing) which involve adversaries (intruders) who change their strategies over time. For a glimpse of such research areas, we refer to
\cite{BP2013}, \cite{BR2018}, \cite{BT2018}, \cite{AMB2021}. Naturally, machine learning techniques play a pivotal role in these areas.

\appendix

\section{Graphical illustrations}
\label{graphs}

In this appendix we illustrate the behaviour of $I_n$ and $B_{n,2}$ when  genuine, anomaly-free inputs follow the $\text{ARMA}(1,1)$ time series and the system is affected by iid $\text{Lomax}(\alpha,1)$ anomalies at the input and/or output stages. In the figures that follow, the system is always affected by anomalies. Hence, the index $I_n$ always tends to $1/2$ whereas $B_{n,2}$ grows together with the sample size $n$. Note also that convergence of $I_n$ to $1/2$ is slower when anomaly averages are smaller, meaning that anomalies are less noticeable.  This suggests, naturally, that larger sample sizes are needed to reach desired confidence when making decisions.

\begin{figure}[h!]
    \centering
    \begin{subfigure}[b]{0.36\textwidth}
        \includegraphics[width=\textwidth]{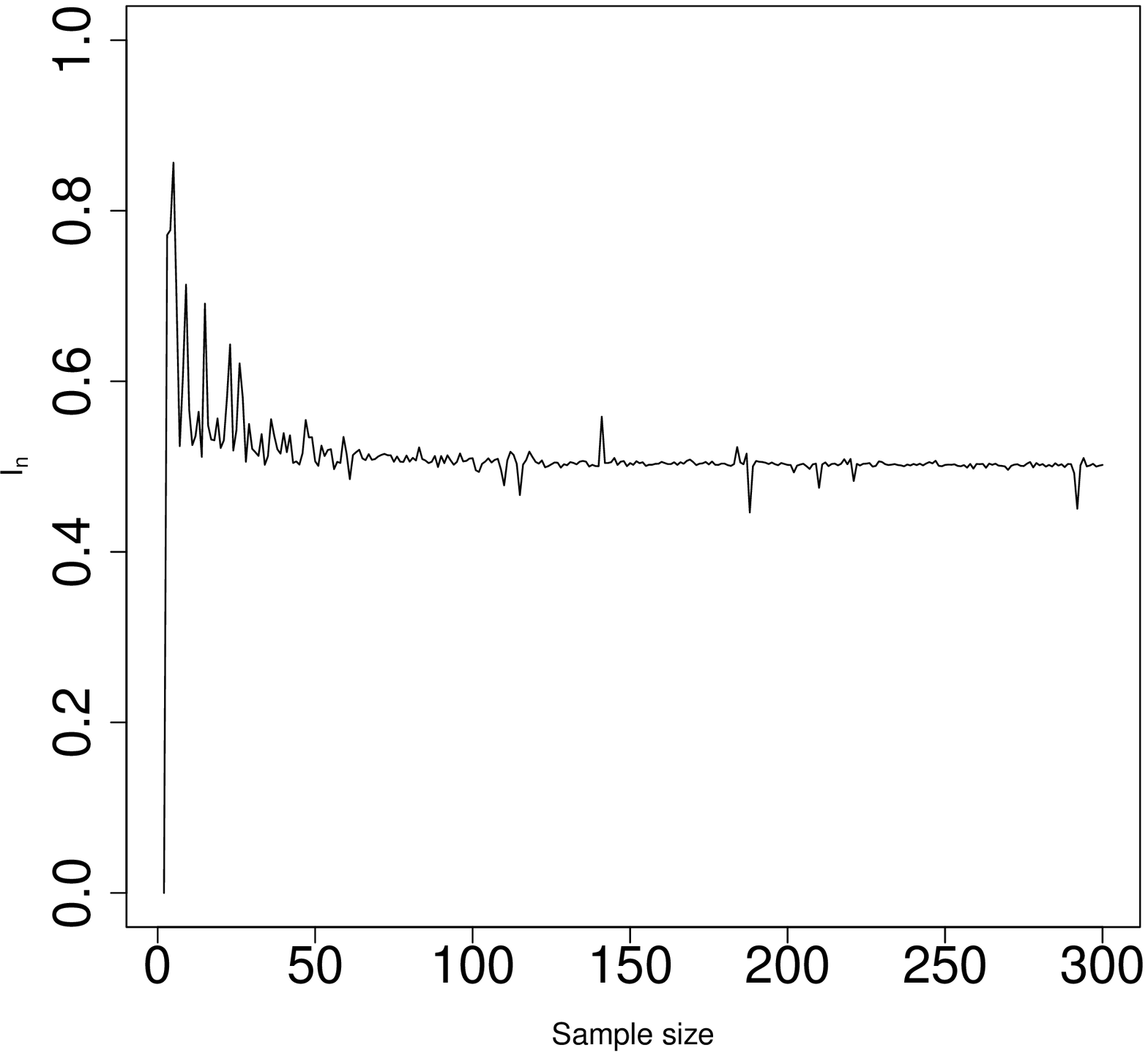}
        \caption{$I_n$ for $h(X_t,0, \epsilon_t)$.}
    \end{subfigure}
\hspace{10mm}
    \begin{subfigure}[b]{0.36\textwidth}
        \includegraphics[width=\textwidth]{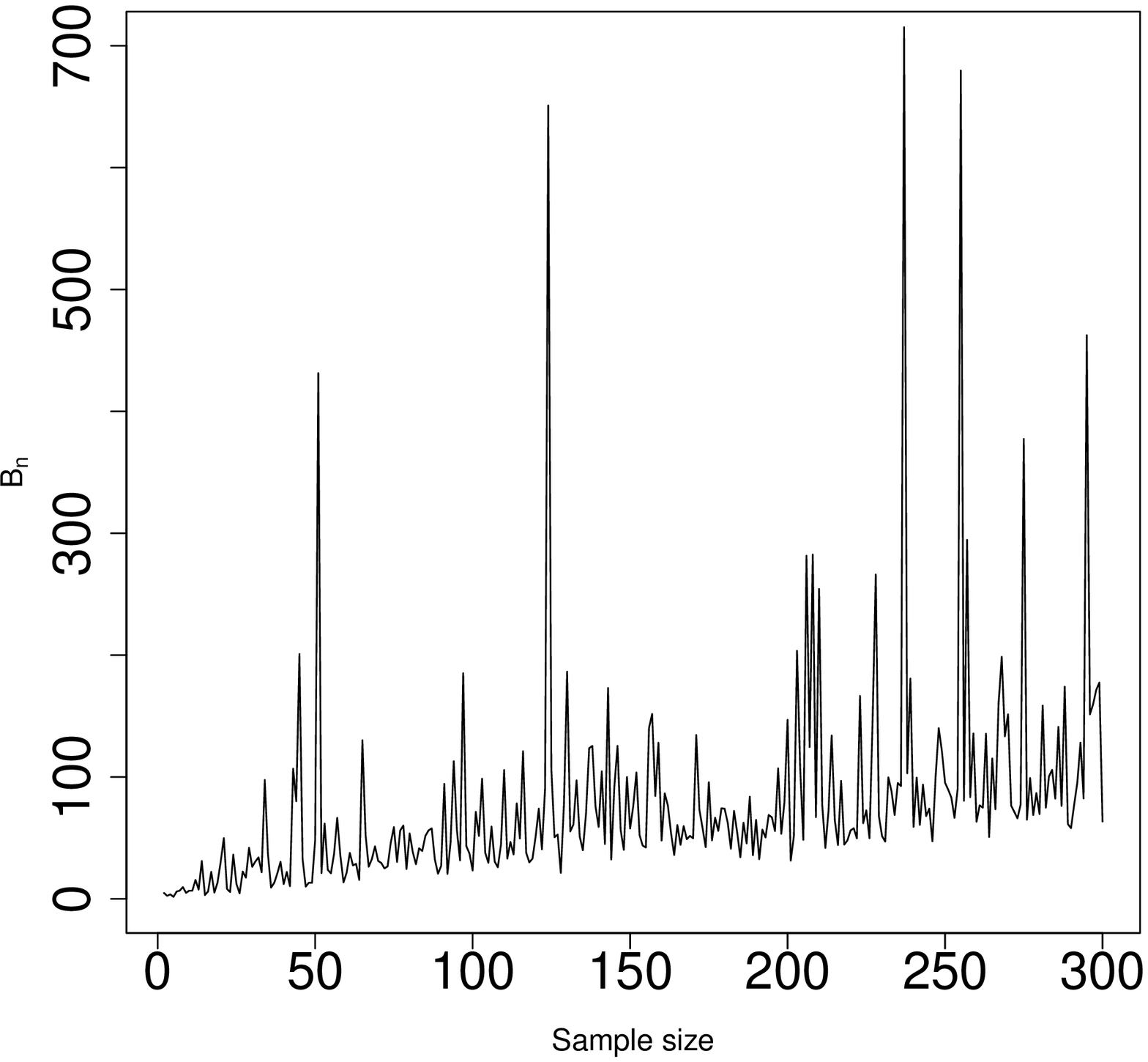}
        \caption{$B_{n,2}$ for $h(X_t,0, \epsilon_t)$.}
    \end{subfigure}
\hspace{10mm}
    \begin{subfigure}[b]{0.36\textwidth}
        \includegraphics[width=\textwidth]{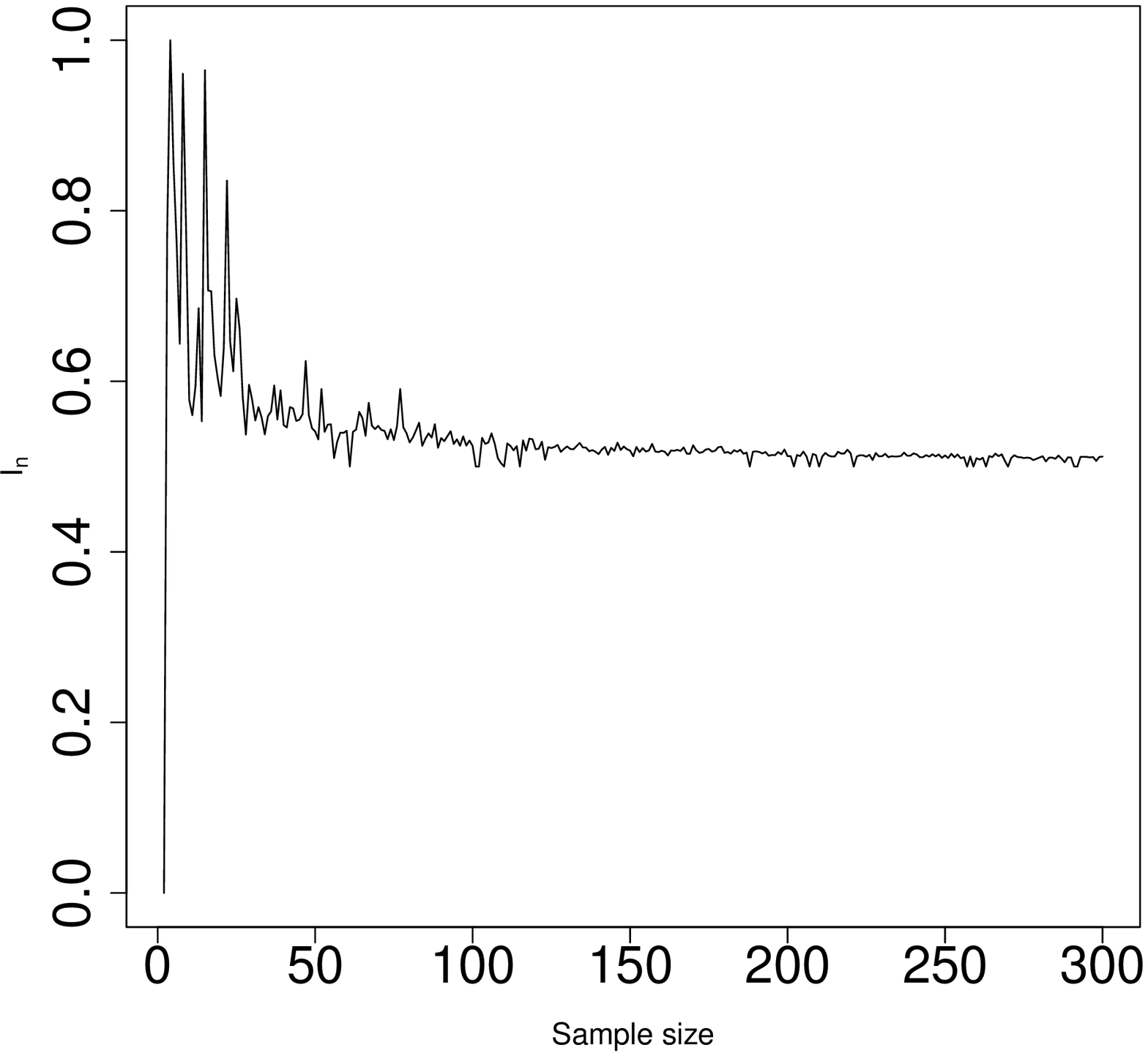}
        \caption{$I_n$ for $h(X_t, \delta_t,0)$.}
    \end{subfigure}
\hspace{10mm}
    \begin{subfigure}[b]{0.36\textwidth}
        \includegraphics[width=\textwidth]{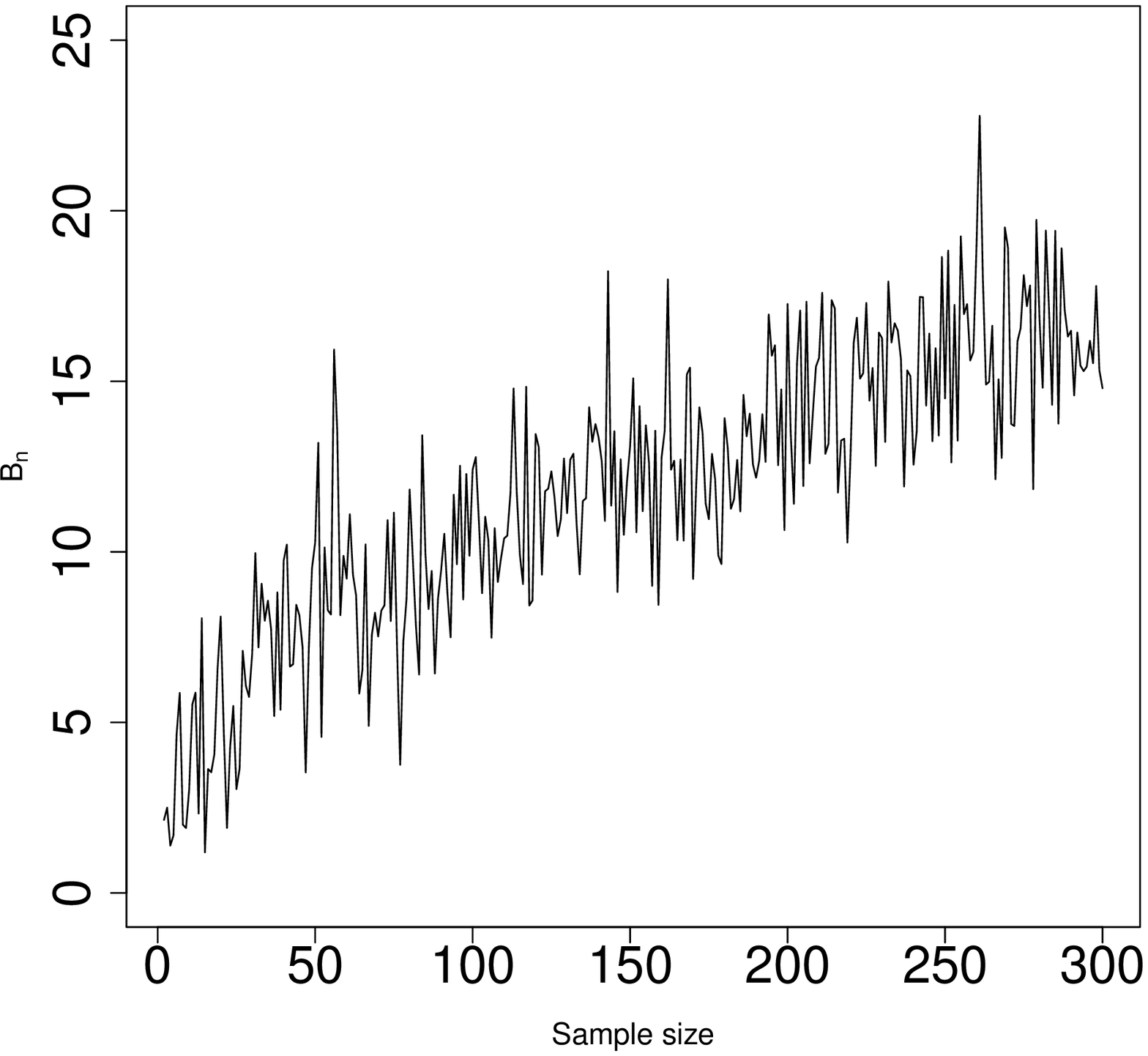}
        \caption{$B_{n,2}$ for $h(X_t, \delta_t,0)$.}
    \end{subfigure}
\hspace{10mm}
    \begin{subfigure}[b]{0.36\textwidth}
        \includegraphics[width=\textwidth]{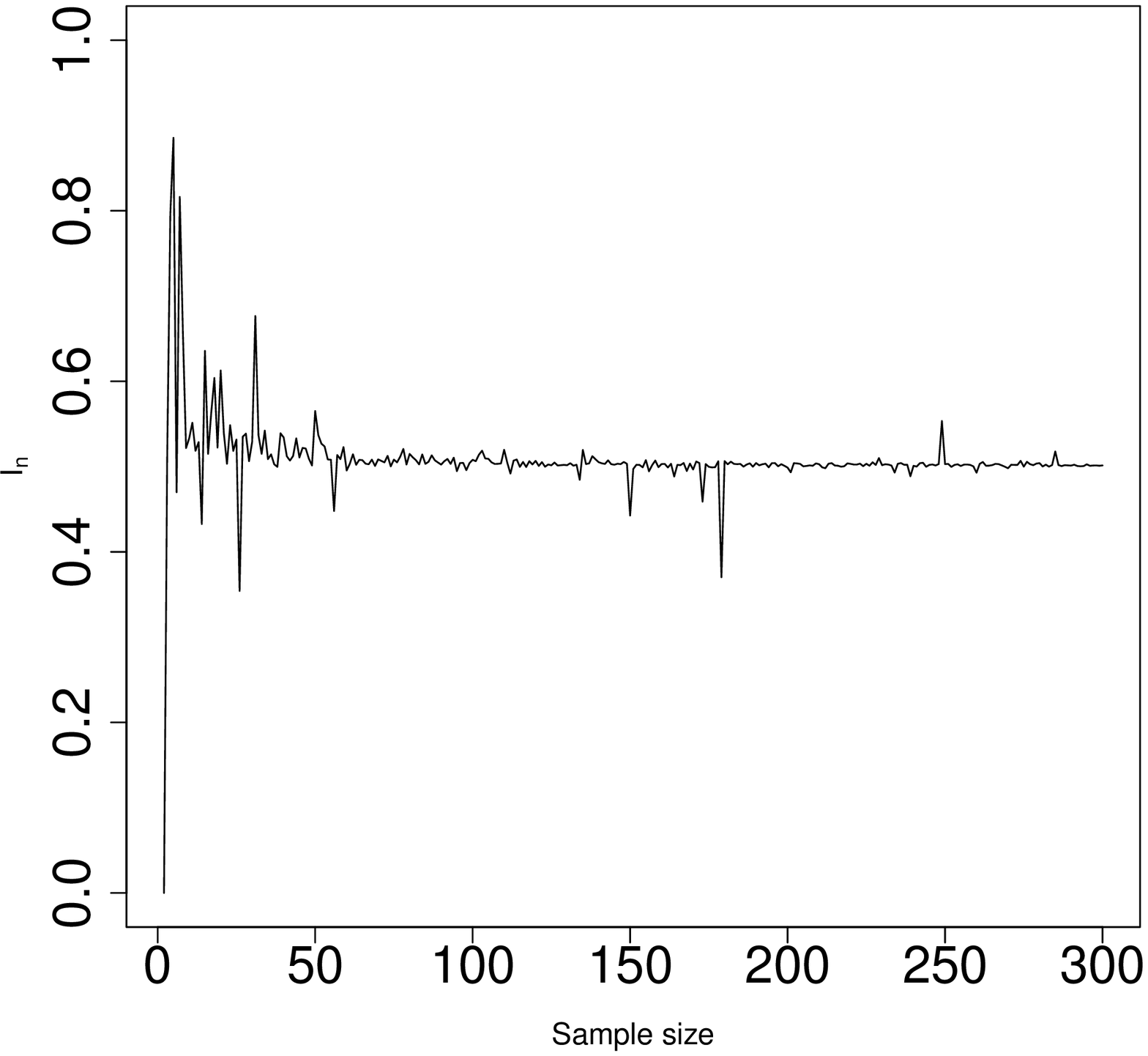}
        \caption{$I_n$ for $h(X_t, \delta_t, \epsilon_t)$.}
    \end{subfigure}
\hspace{10mm}
    \begin{subfigure}[b]{0.36\textwidth}
        \includegraphics[width=\textwidth]{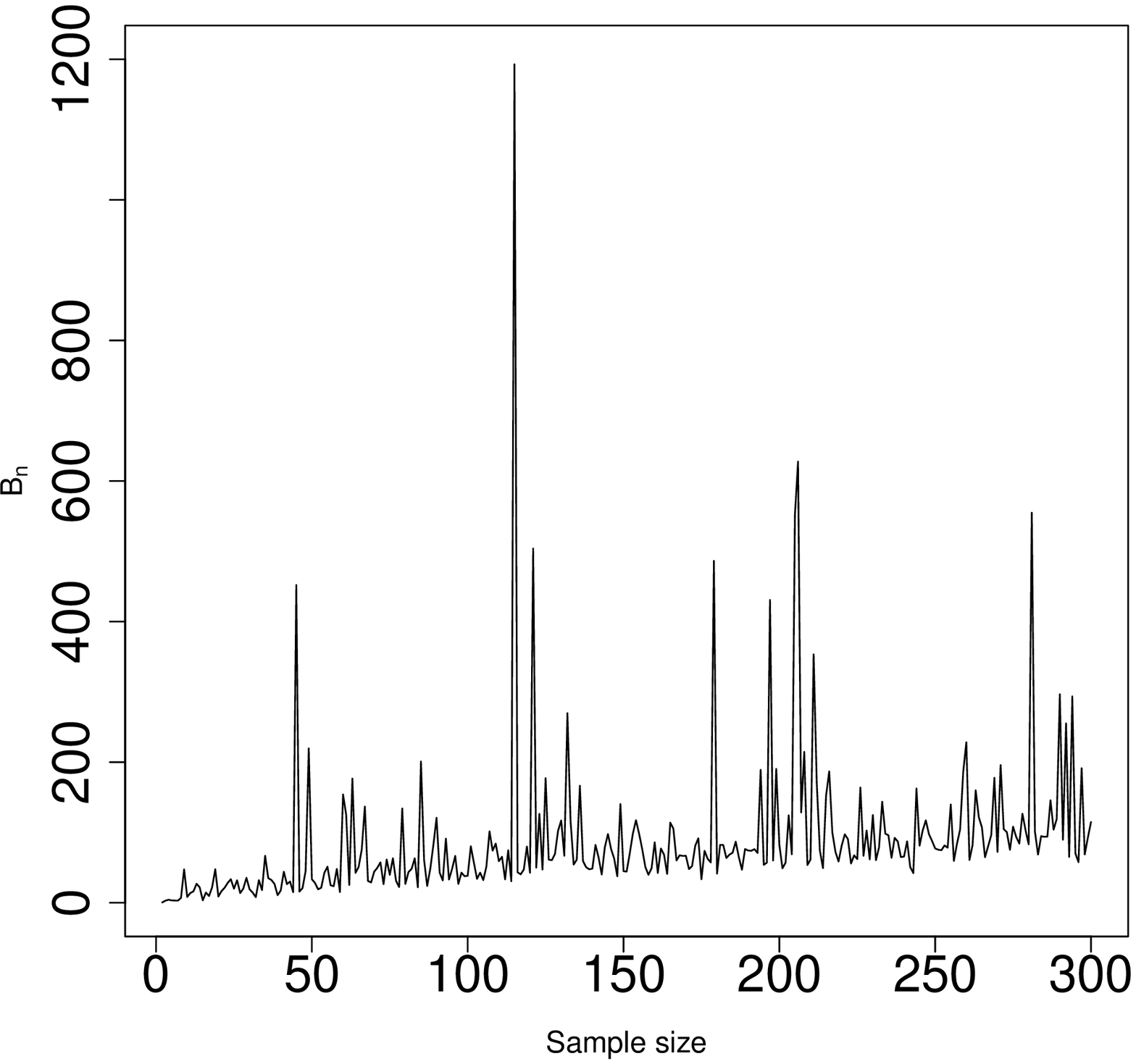}
        \caption{$B_{n,2}$ for $h(X_t, \delta_t, \epsilon_t)$.}
    \end{subfigure}
    \caption{The anomaly-affected indices $I_n$ and $B_{n,2}$ for the strict service range with respect to  $2\le n\le 300$ for $\text{ARMA}(1,1)$ inputs and iid $\text{Lomax}(1.2,1)$ anomalies.}
    \label{stringent-2}
\end{figure}

\begin{figure}[h!]
    \centering
    \begin{subfigure}[b]{0.36\textwidth}
        \includegraphics[width=\textwidth]{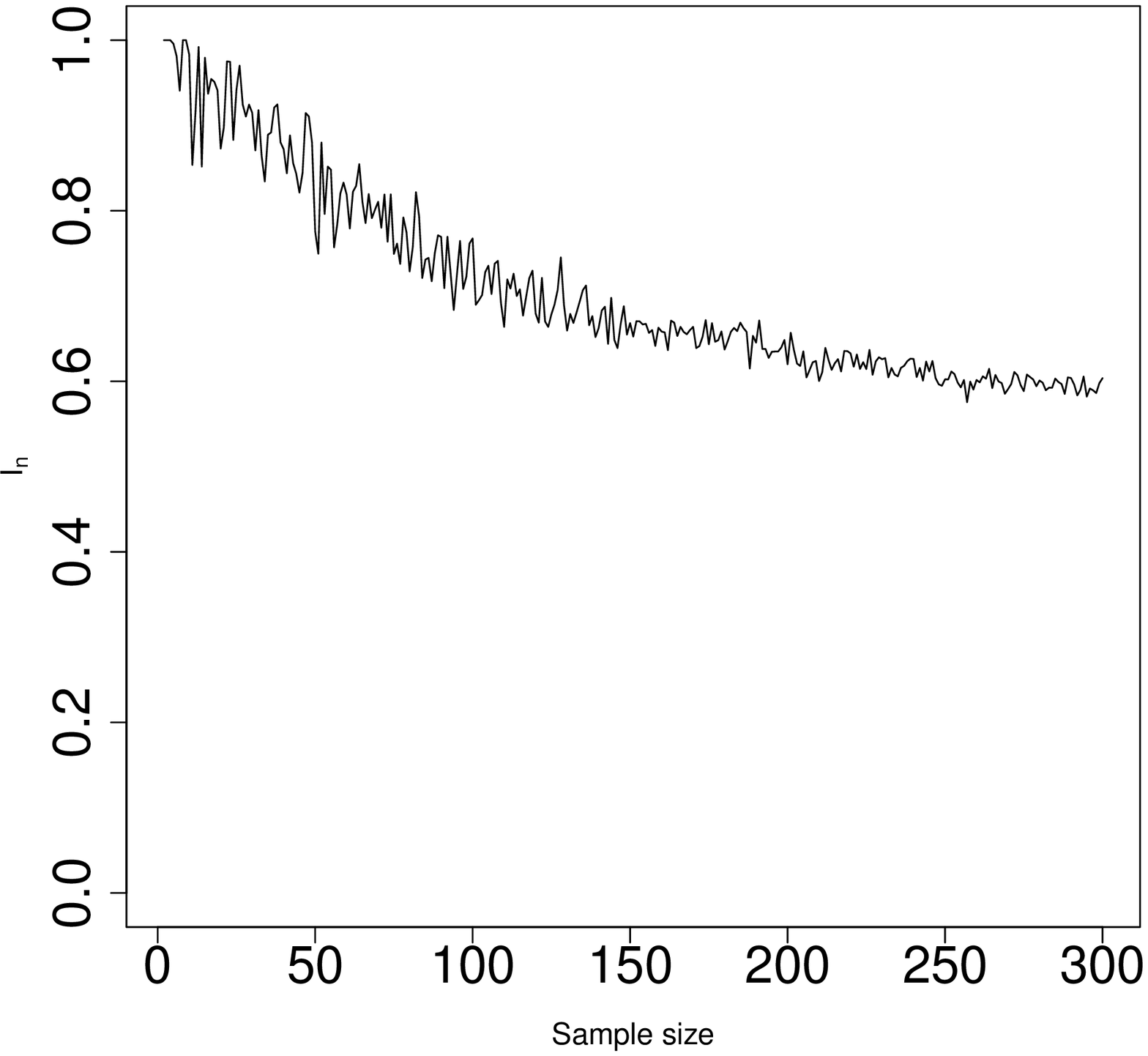}
        \caption{$I_n$ for $h(X_t,0, \epsilon_t)$.}
    \end{subfigure}
\hspace{10mm}
    \begin{subfigure}[b]{0.36\textwidth}
        \includegraphics[width=\textwidth]{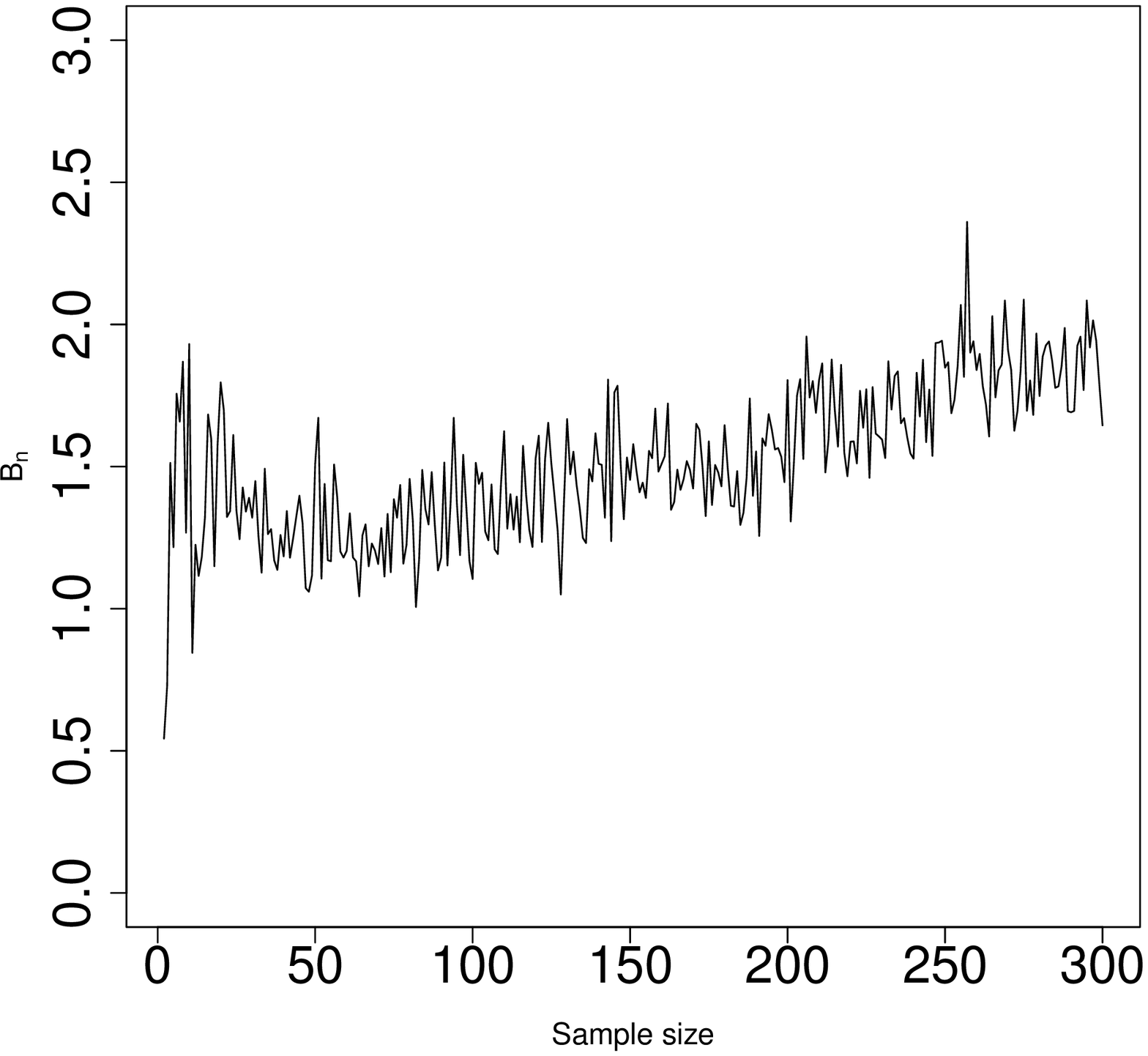}
        \caption{$B_{n,2}$ for $h(X_t,0, \epsilon_t)$.}
    \end{subfigure}
\hspace{10mm}
    \begin{subfigure}[b]{0.36\textwidth}
        \includegraphics[width=\textwidth]{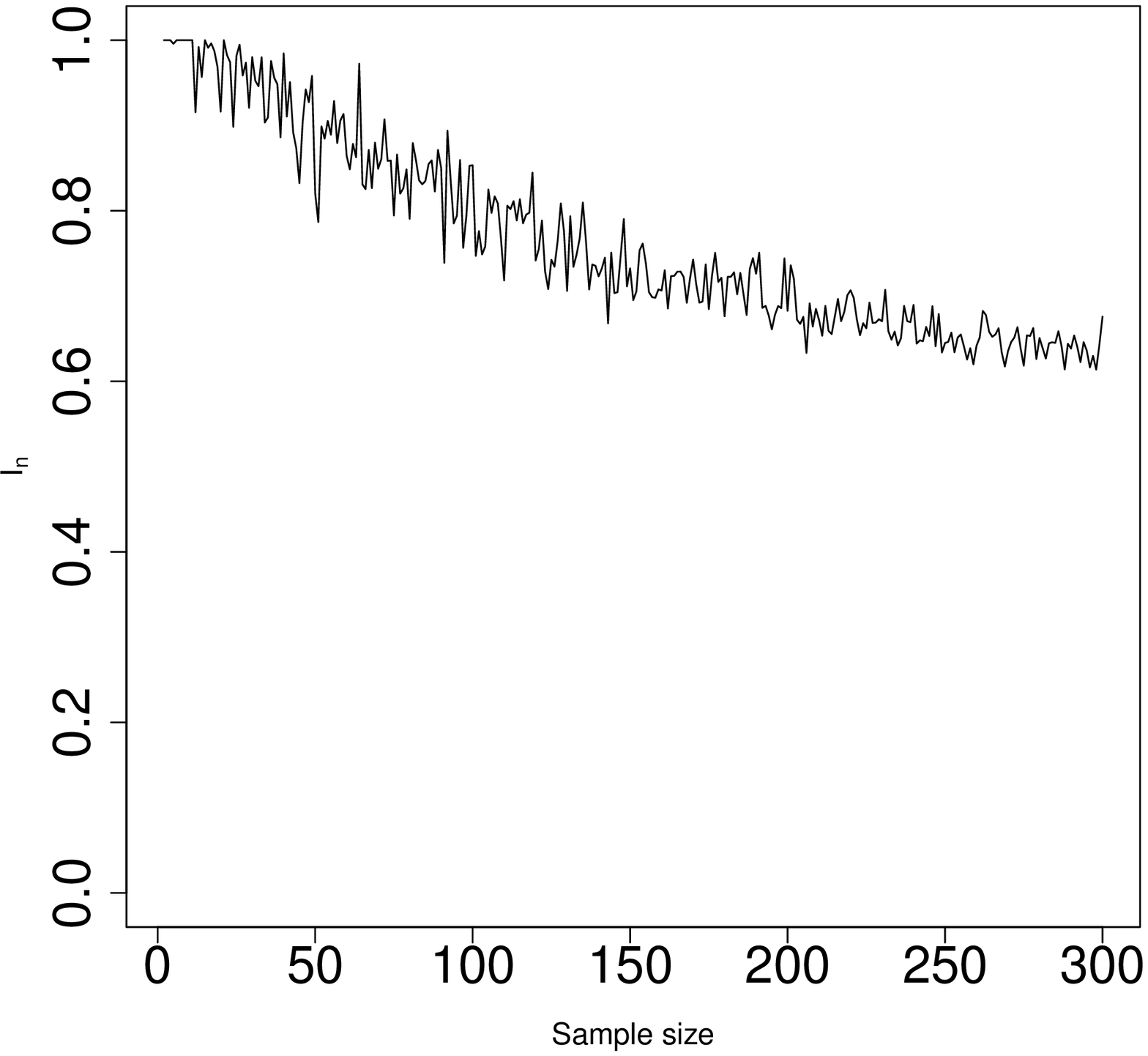}
        \caption{$I_n$ for $h(X_t, \delta_t,0)$.}
    \end{subfigure}
\hspace{10mm}
    \begin{subfigure}[b]{0.36\textwidth}
        \includegraphics[width=\textwidth]{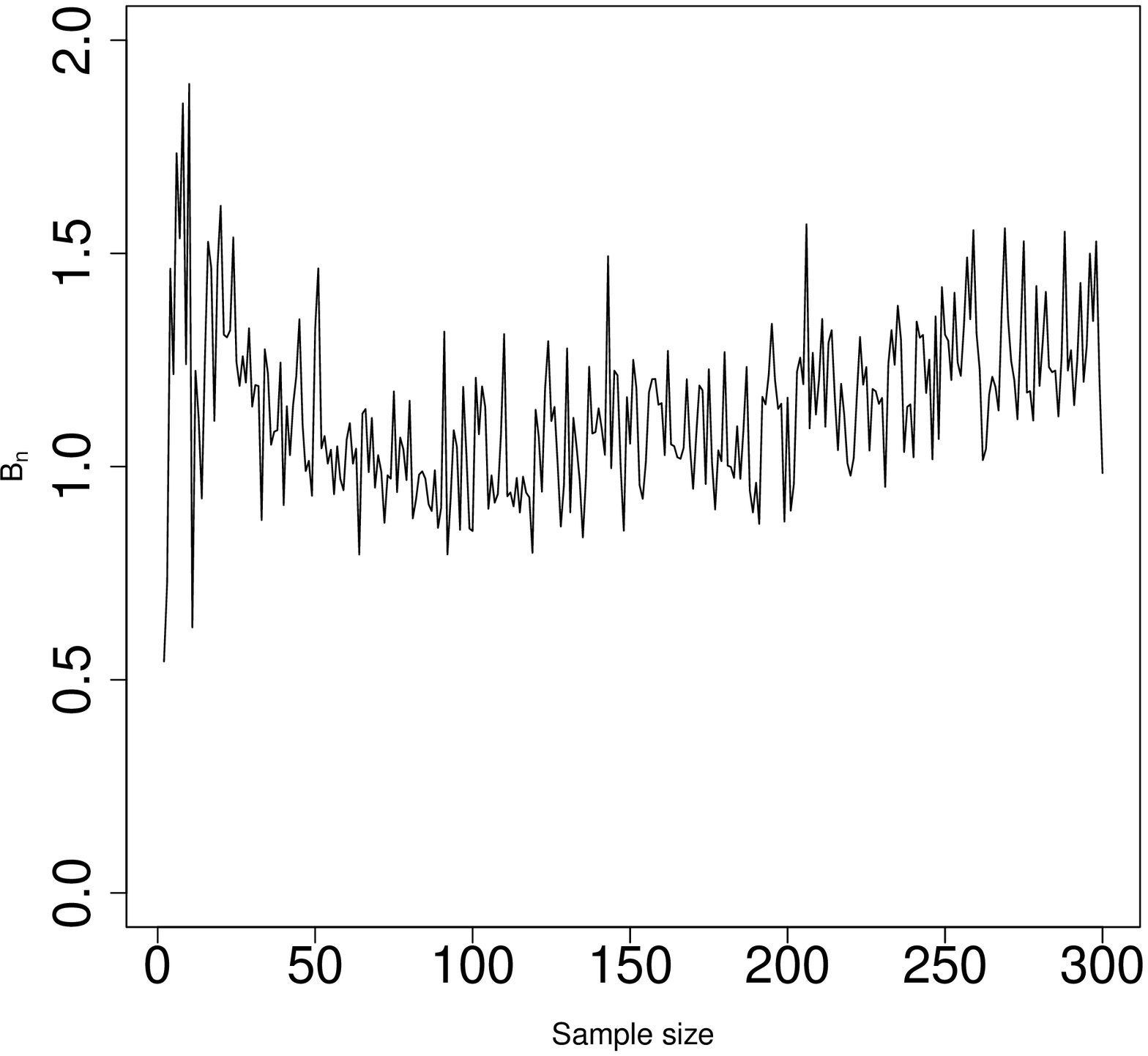}
        \caption{$B_{n,2}$ for $h(X_t, \delta_t,0)$.}
    \end{subfigure}
\hspace{10mm}
    \begin{subfigure}[b]{0.36\textwidth}
        \includegraphics[width=\textwidth]{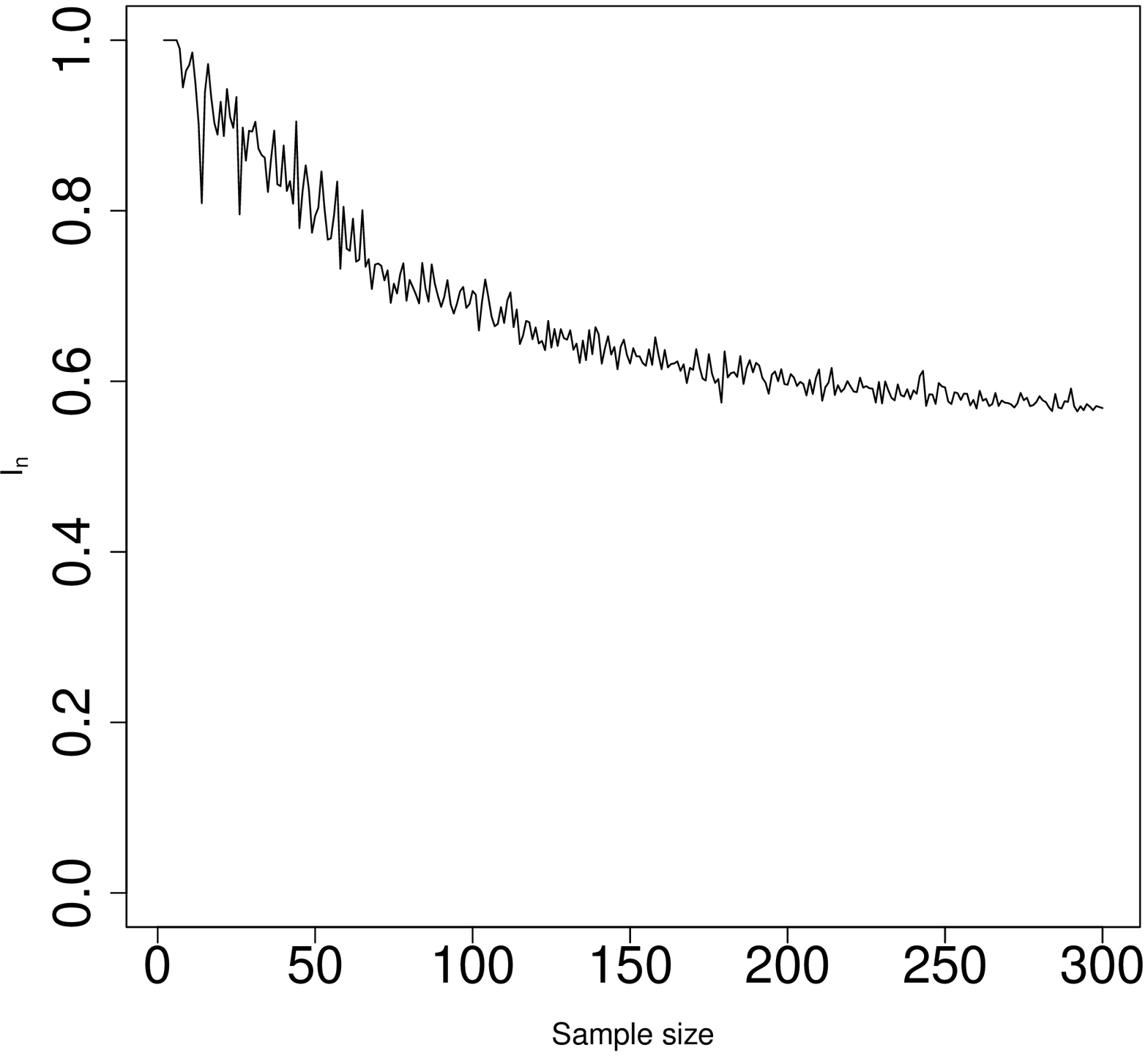}
        \caption{$I_n$ for $h(X_t, \delta_t, \epsilon_t)$.}
    \end{subfigure}
\hspace{10mm}
    \begin{subfigure}[b]{0.36\textwidth}
        \includegraphics[width=\textwidth]{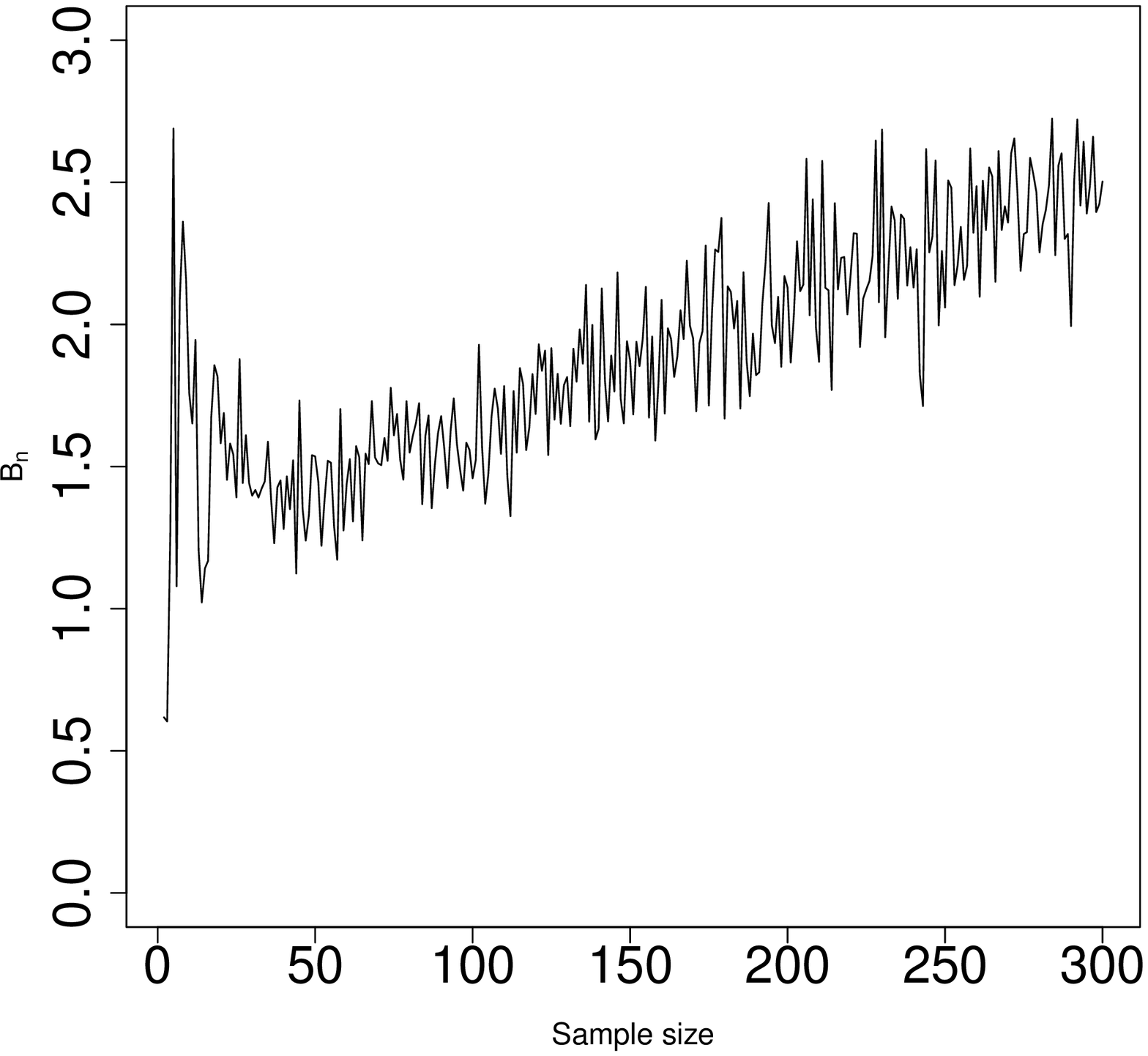}
        \caption{$B_{n,2}$ for $h(X_t, \delta_t, \epsilon_t)$.}
    \end{subfigure}
    \caption{The anomaly-affected indices $I_n$ and $B_{n,2}$ for the strict service range with respect to  $2\le n\le 300$  for $\text{ARMA}(1,1)$ inputs and iid  $\text{Lomax}(11,1)$ anomalies.}
    \label{stringent-2-11}
\end{figure}

\begin{figure}[h!]
    \centering
    \begin{subfigure}[b]{0.36\textwidth}
        \includegraphics[width=\textwidth]{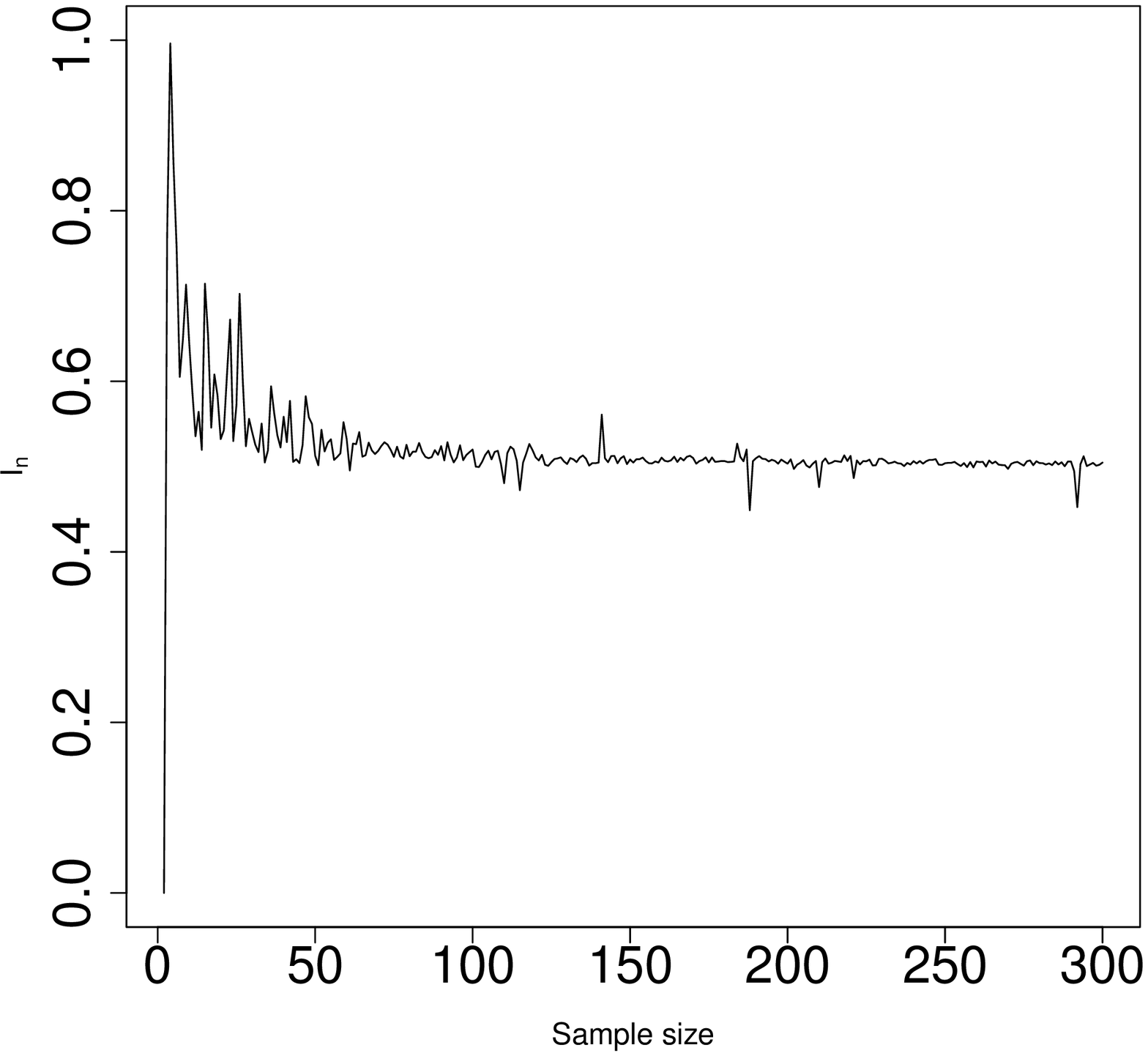}
        \caption{$I_n$ for $h(X_t,0, \epsilon_t)$.}
    \end{subfigure}
\hspace{10mm}
    \begin{subfigure}[b]{0.36\textwidth}
        \includegraphics[width=\textwidth]{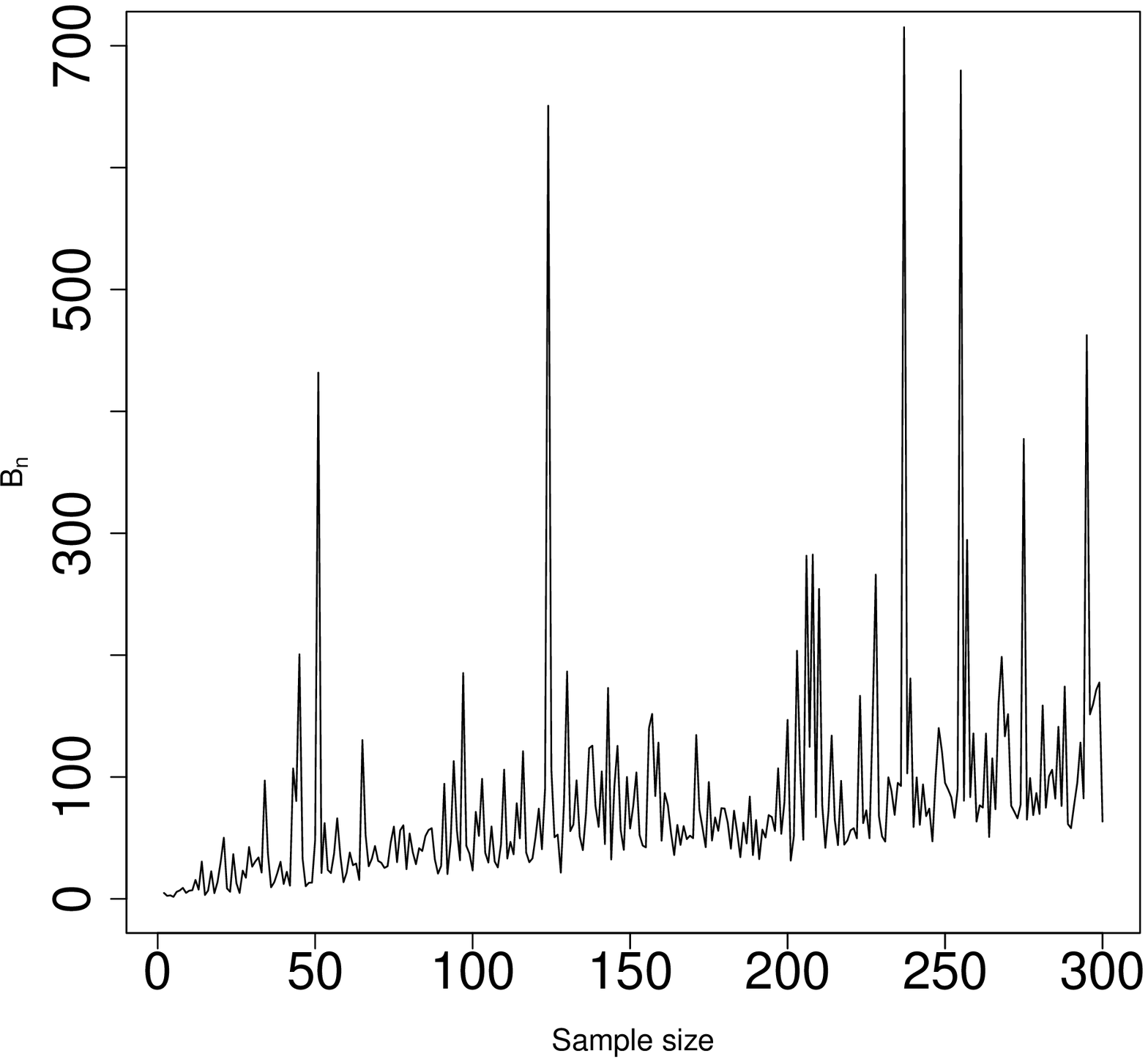}
        \caption{$B_{n,2}$ for $h(X_t,0, \epsilon_t)$.}
    \end{subfigure}
\hspace{10mm}
    \begin{subfigure}[b]{0.36\textwidth}
        \includegraphics[width=\textwidth]{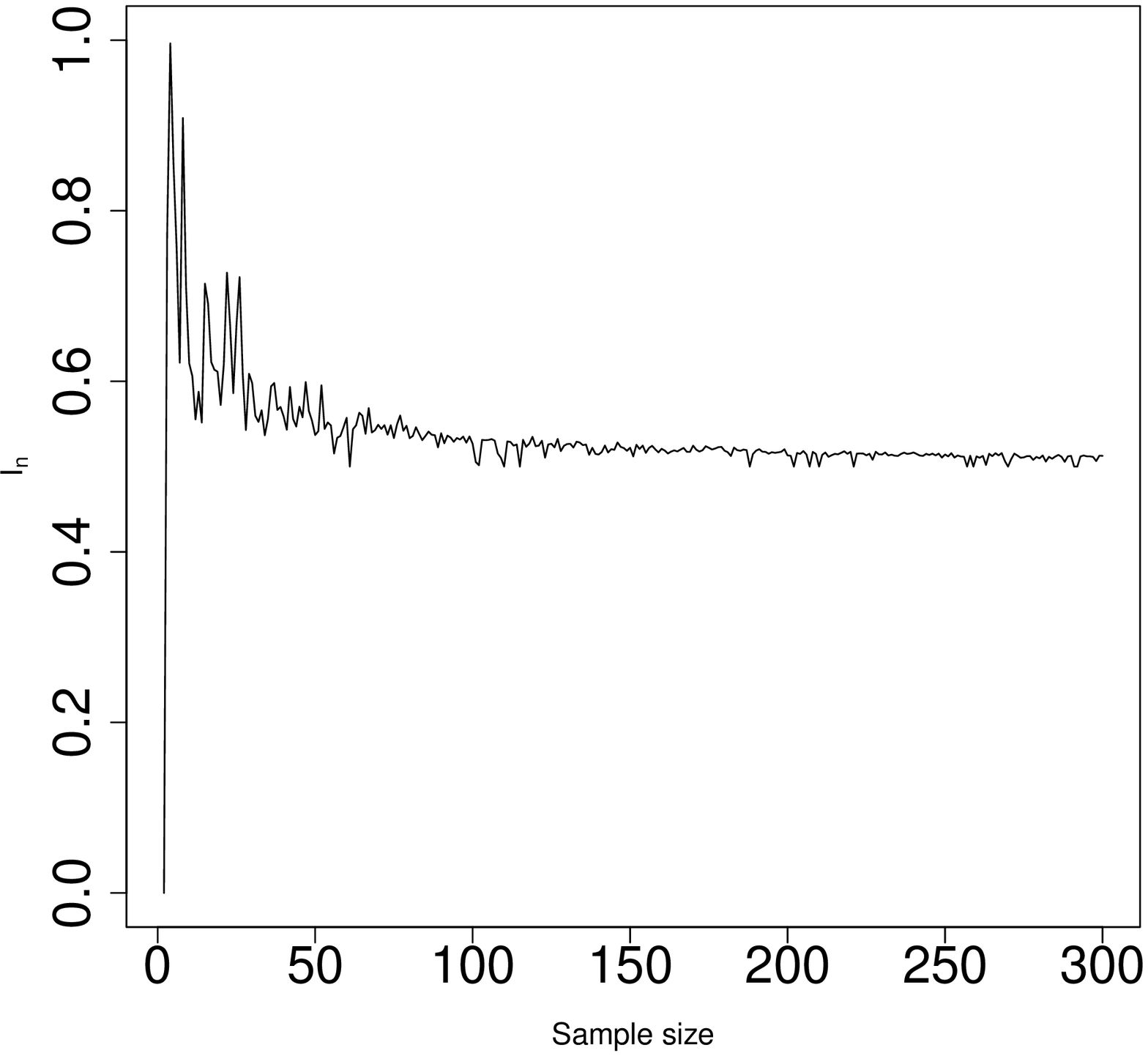}
        \caption{$I_n$ for $h(X_t, \delta_t,0)$.}
    \end{subfigure}
\hspace{10mm}
    \begin{subfigure}[b]{0.36\textwidth}
        \includegraphics[width=\textwidth]{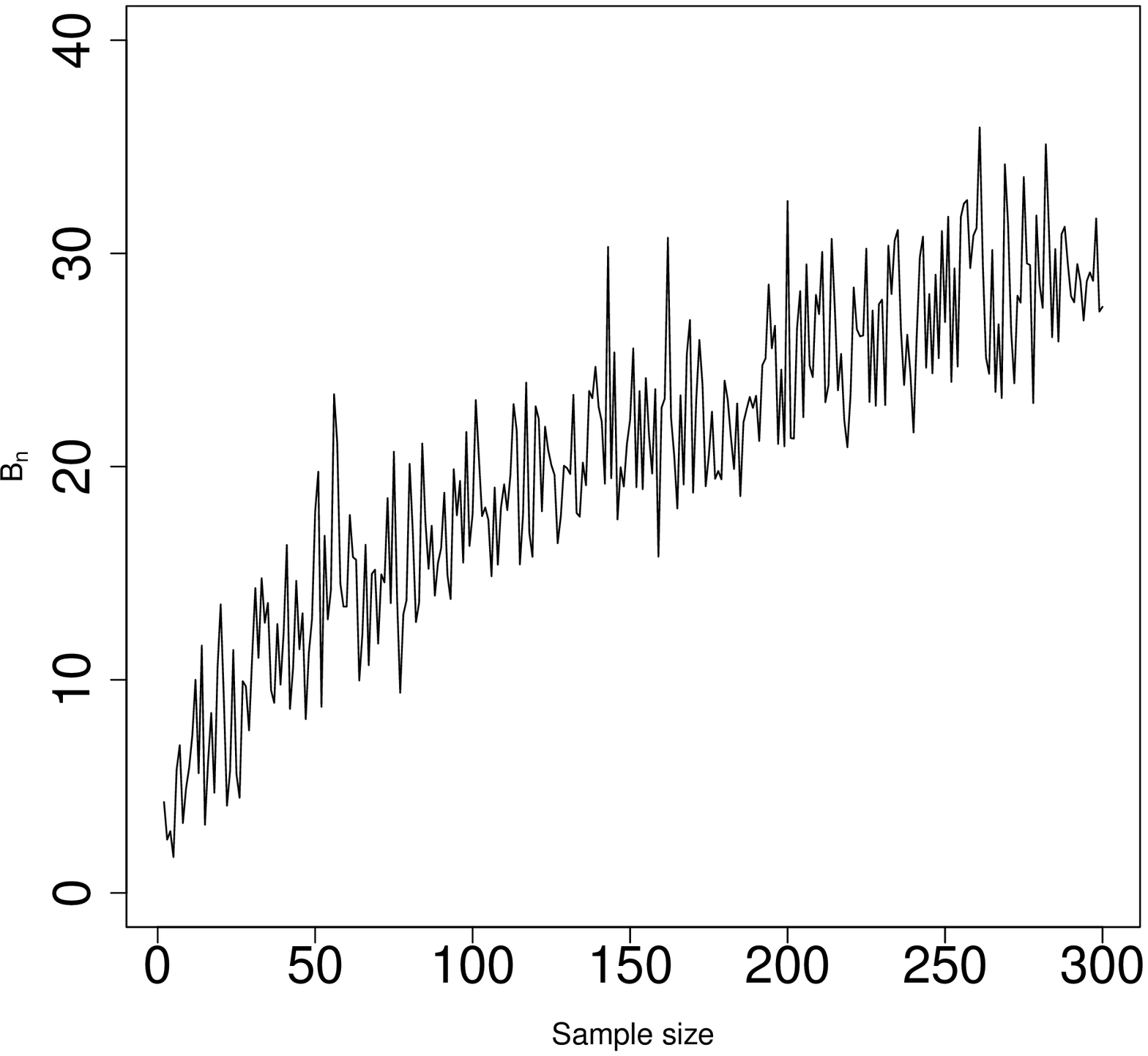}
        \caption{$B_{n,2}$ for $h(X_t, \delta_t,0)$.}
    \end{subfigure}
\hspace{10mm}
    \begin{subfigure}[b]{0.36\textwidth}
        \includegraphics[width=\textwidth]{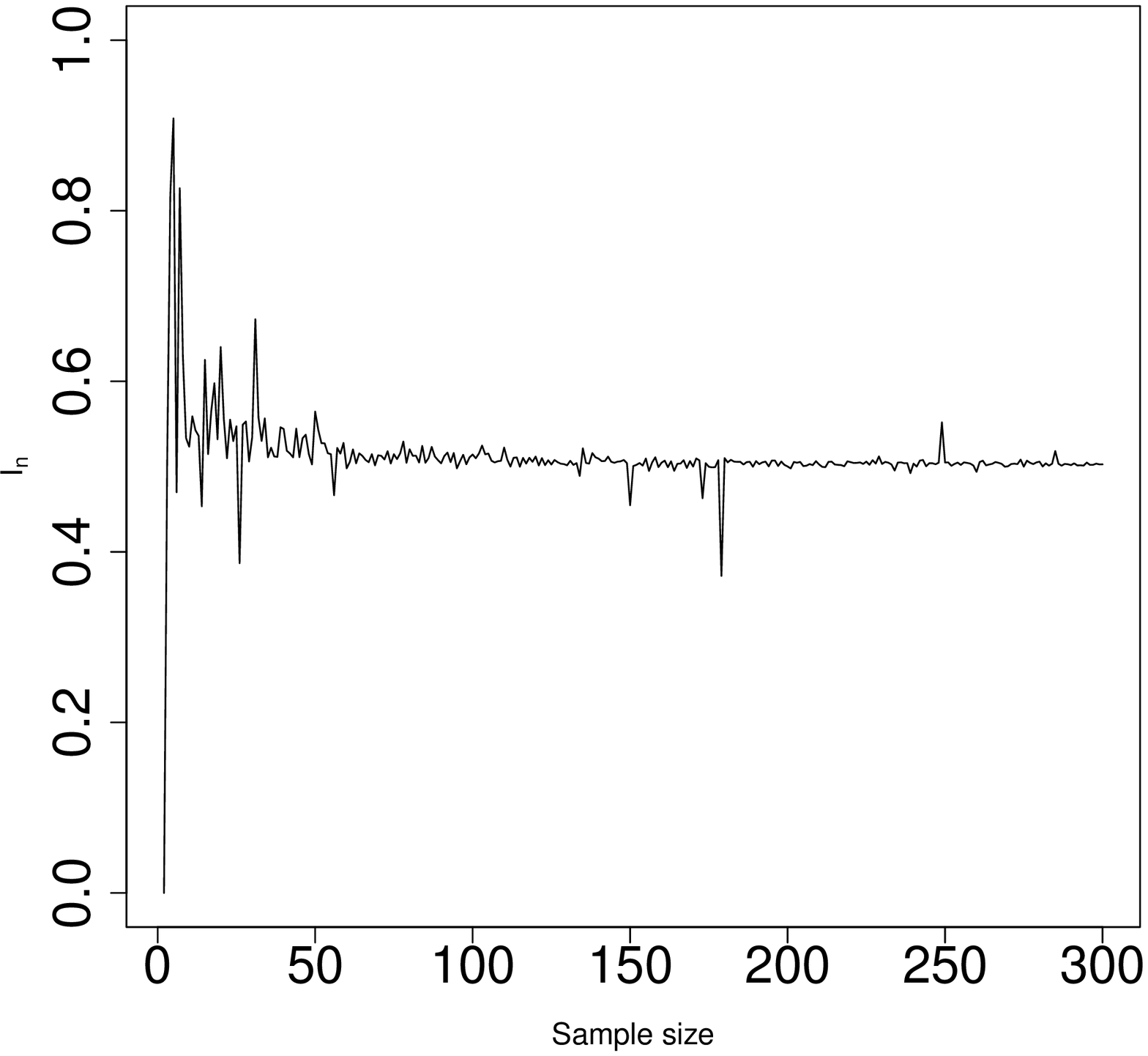}
        \caption{$I_n$ for $h(X_t, \delta_t, \epsilon_t)$.}
    \end{subfigure}
\hspace{10mm}
    \begin{subfigure}[b]{0.36\textwidth}
        \includegraphics[width=\textwidth]{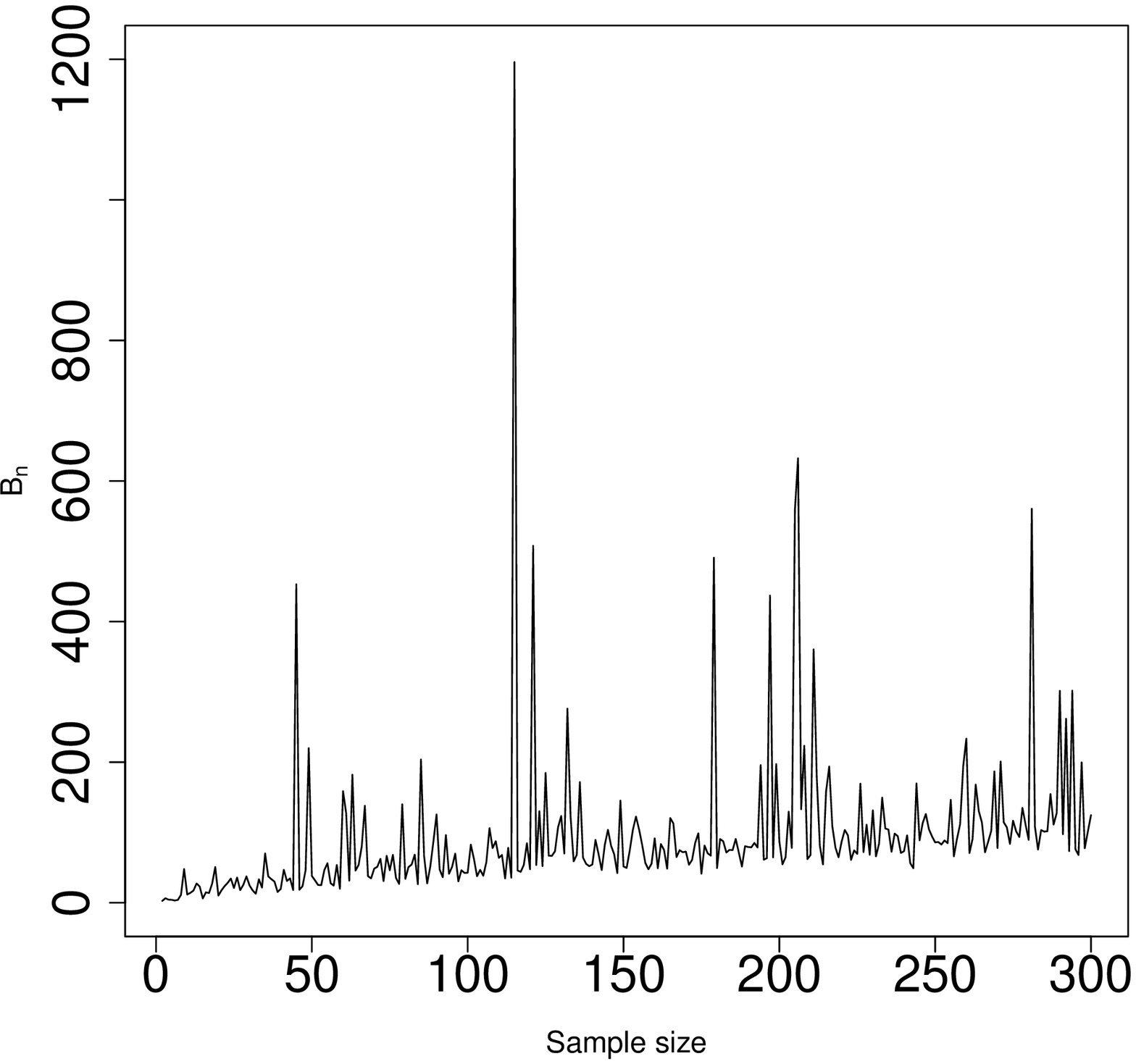}
        \caption{$B_{n,2}$ for $h(X_t, \delta_t, \epsilon_t)$.}
    \end{subfigure}
    \caption{The anomaly-affected indices $I_n$ and $B_{n,2}$ for the satisfactory service range with respect to  $2\le n\le 300$ for $\text{ARMA}(1,1)$ inputs and iid  $\text{Lomax}(1.2,1)$ anomalies.}
    \label{standard-3}
\end{figure}

\begin{figure}[h!]
    \centering
    \begin{subfigure}[b]{0.36\textwidth}
        \includegraphics[width=\textwidth]{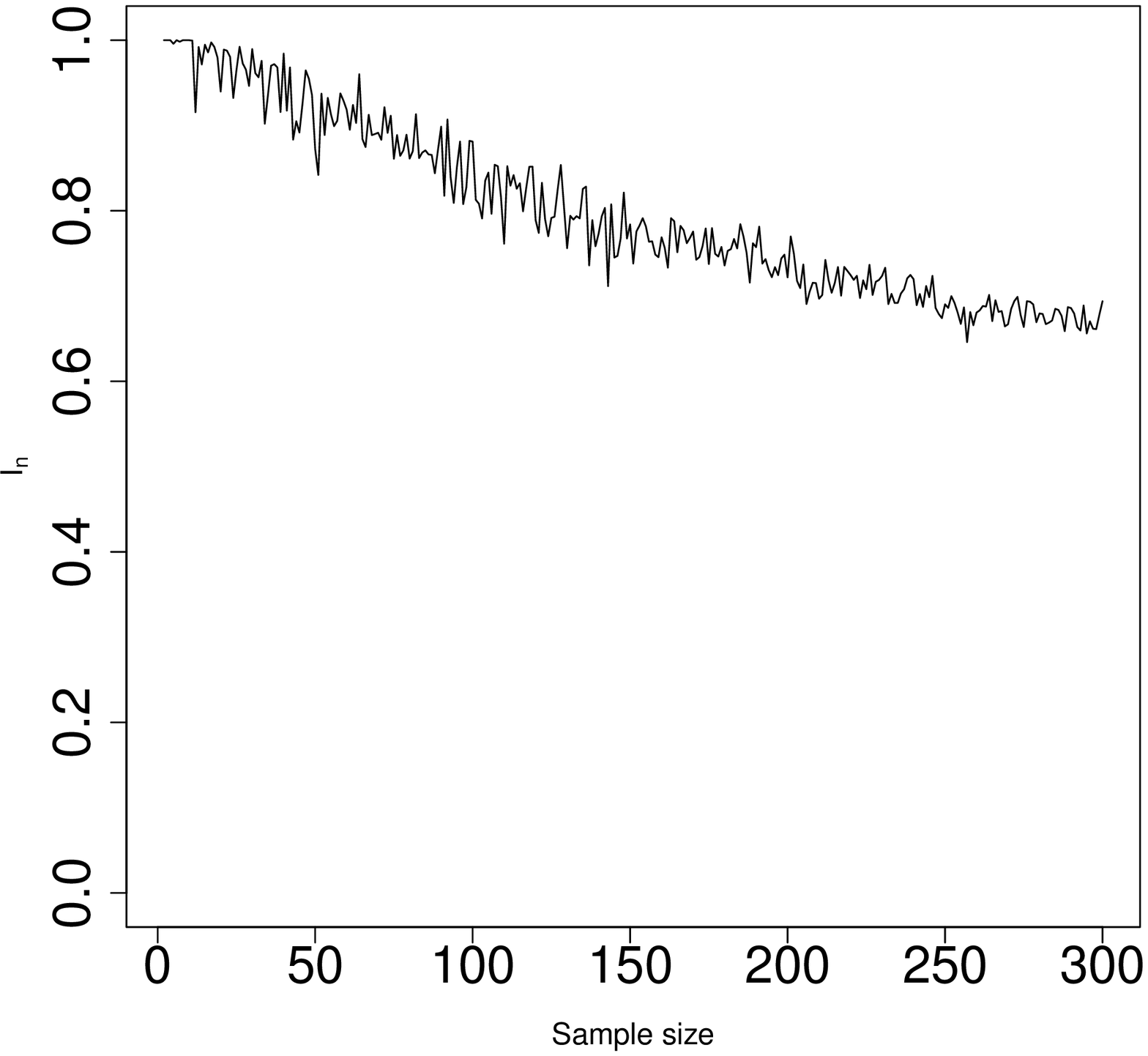}
        \caption{$I_n$ for $h(X_t,0, \epsilon_t)$.}
    \end{subfigure}
\hspace{10mm}
    \begin{subfigure}[b]{0.36\textwidth}
        \includegraphics[width=\textwidth]{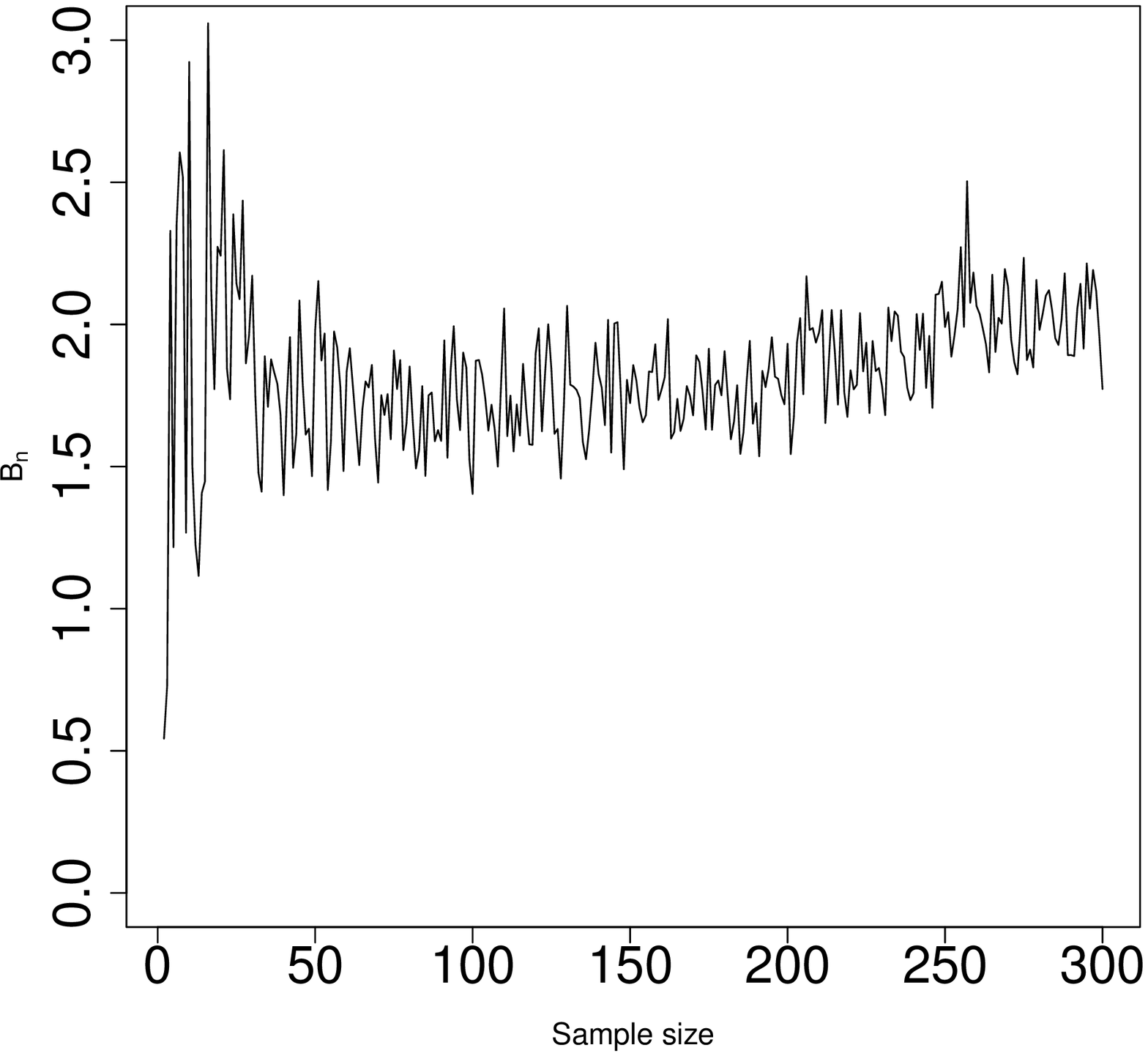}
        \caption{$B_{n,2}$ for $h(X_t,0, \epsilon_t)$.}
    \end{subfigure}
\hspace{10mm}
    \begin{subfigure}[b]{0.36\textwidth}
        \includegraphics[width=\textwidth]{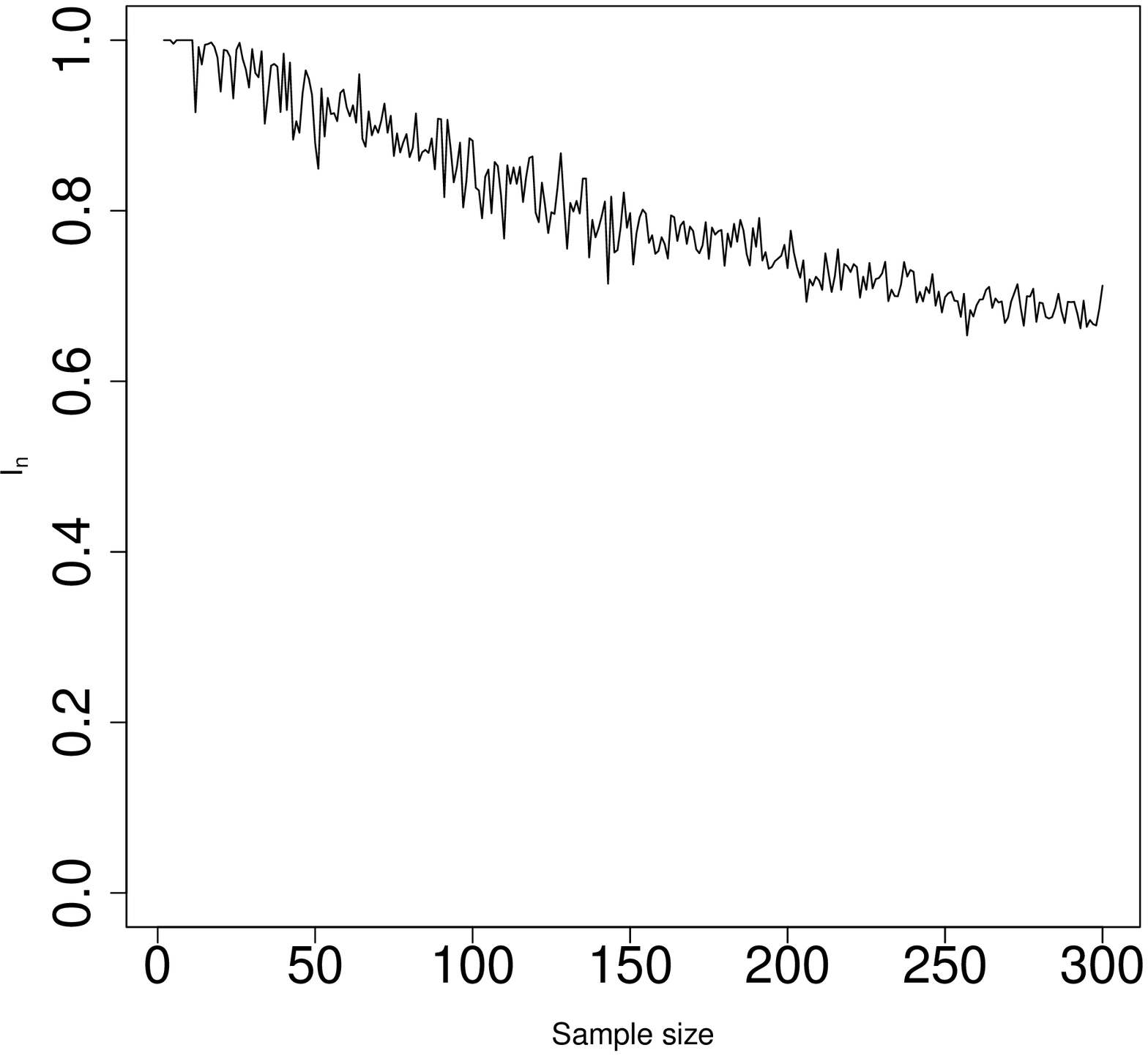}
        \caption{$I_n$ for $h(X_t, \delta_t,0)$.}
    \end{subfigure}
\hspace{10mm}
    \begin{subfigure}[b]{0.36\textwidth}
        \includegraphics[width=\textwidth]{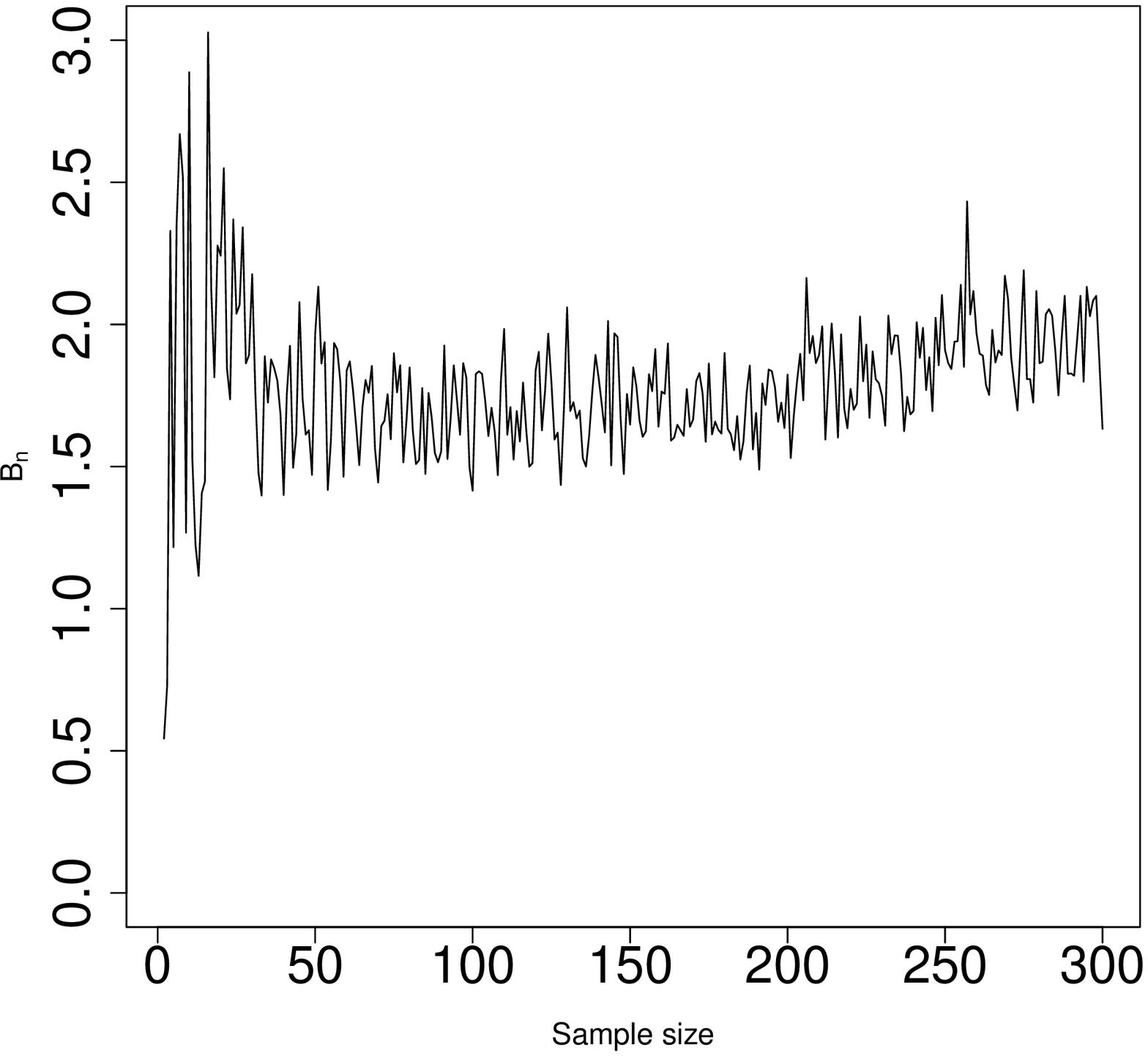}
        \caption{$B_{n,2}$ for $h(X_t, \delta_t,0)$.}
    \end{subfigure}
\hspace{10mm}
    \begin{subfigure}[b]{0.36\textwidth}
        \includegraphics[width=\textwidth]{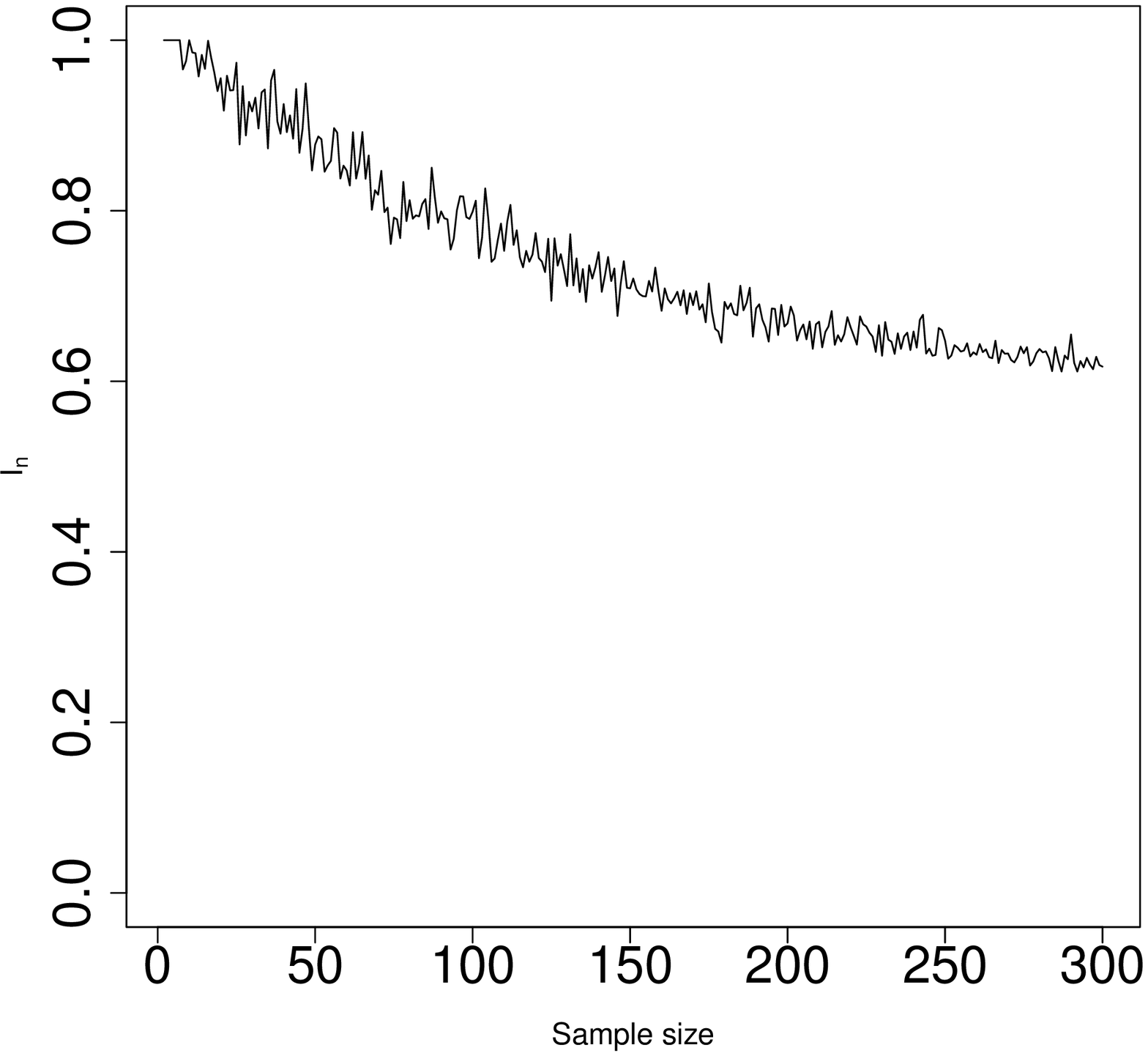}
        \caption{$I_n$ for $h(X_t, \delta_t, \epsilon_t)$.}
    \end{subfigure}
\hspace{10mm}
    \begin{subfigure}[b]{0.36\textwidth}
        \includegraphics[width=\textwidth]{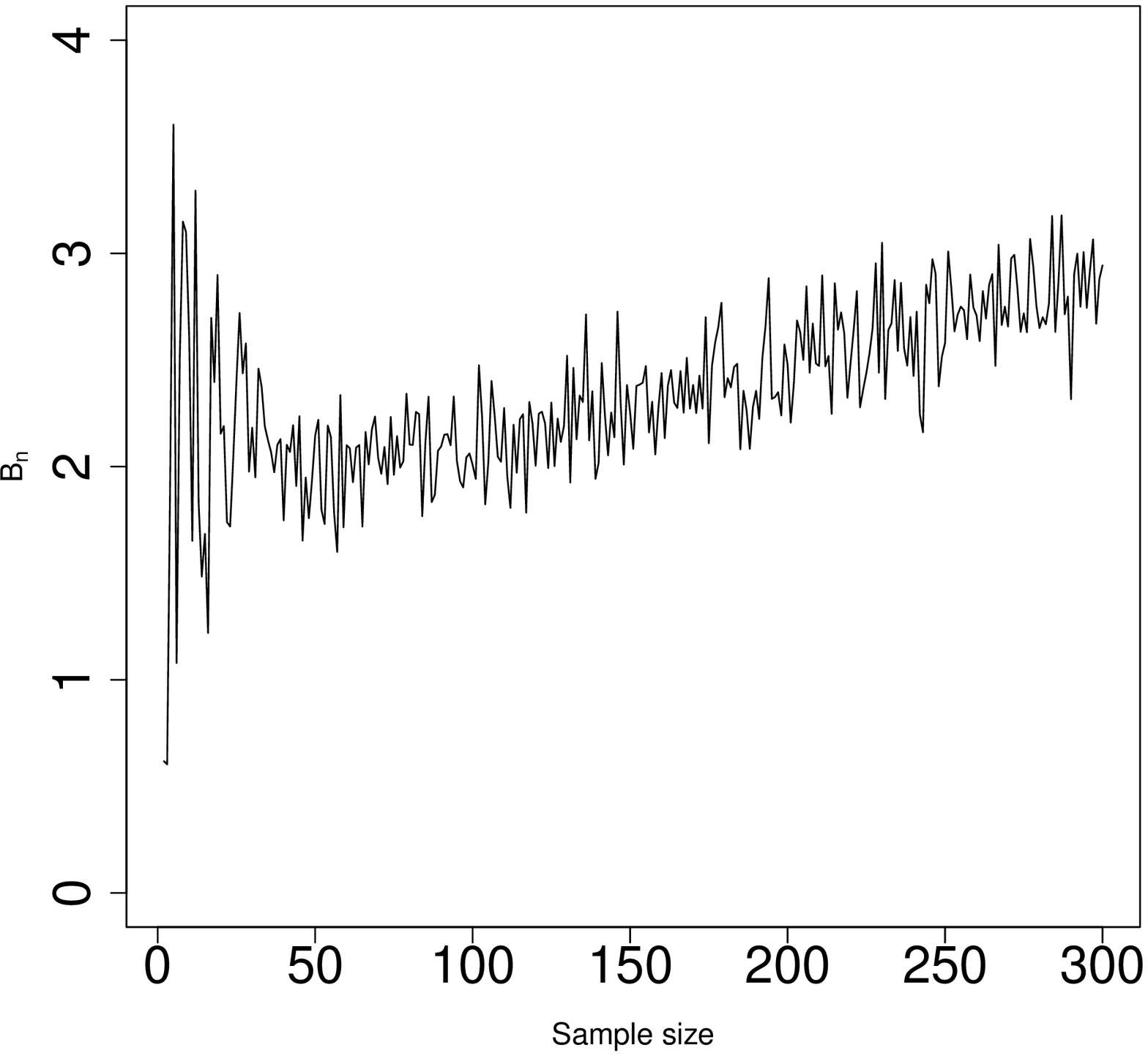}
        \caption{$B_{n,2}$ for $h(X_t, \delta_t, \epsilon_t)$.}
    \end{subfigure}
    \caption{The anomaly-affected indices $I_n$ and $B_{n,2}$ for the satisfactory service range with respect to  $2\le n\le 300$  for $\text{ARMA}(1,1)$ inputs and iid  $\text{Lomax}(11,1)$ anomalies.}
    \label{standard-3-11}
\end{figure}

\newpage

\section{Technical details}
\label{proofs}

To prove Theorem~\ref{th-1aa}, we need a lemma, which we shall also use when proving other results.

\begin{lemma}\label{lemma-GC}
Let $\xi_t$, $t\in \mathbb{Z}$, be identically distributed random variables such that  $\mathbb{E}(|\xi_t|^p)<\infty $ for some $p\ge 1$. Then
\begin{equation}\label{gc-result}
n^{-1/p}\big (\mathbb{E}(\xi_{n:n})-\mathbb{E}(\xi_{1:n})\big )=O(1)
\end{equation}
when $n\to \infty $, and thus $n^{-1/p}(\xi_{n:n}-\xi_{1:n})=O_{\mathbb{P}}(1)$, where $\xi_{1:n}\le \xi_{2:n}\le \cdots \le \xi_{n:n}$ are the order statistics of $\xi_{1}, \xi_{2}, \dots , \xi_{n}$.
\end{lemma}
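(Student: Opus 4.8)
The plan is to reduce everything to a bound on the expected maximum of $|\xi_t|$, a quantity that can be controlled without any independence assumption and is therefore perfectly suited to the merely identically-distributed hypothesis. First I would record the elementary identities $\xi_{n:n}=\max_{1\le t\le n}\xi_t$ and $\xi_{1:n}=\min_{1\le t\le n}\xi_t$, together with the deterministic inequality
\[
0\le \xi_{n:n}-\xi_{1:n}\le 2\max_{1\le t\le n}|\xi_t|=:2M_n ,
\]
which follows at once from $\xi_{n:n}\le M_n$ and $-\xi_{1:n}\le M_n$.

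Next I would control $\mathbb{E}(M_n)$. The crucial observation, valid irrespective of any dependence among the $\xi_t$, is the pathwise bound $M_n^p=\max_{1\le t\le n}|\xi_t|^p\le \sum_{t=1}^n |\xi_t|^p$. Taking expectations and writing $\mu_p:=\mathbb{E}(|\xi_1|^p)<\infty$ gives $\mathbb{E}(M_n^p)\le n\mu_p$. Since $p\ge 1$, the map $x\mapsto x^{1/p}$ is concave on $[0,\infty)$, so Jensen's inequality yields
\[
\mathbb{E}(M_n)=\mathbb{E}\big((M_n^p)^{1/p}\big)\le \big(\mathbb{E}(M_n^p)\big)^{1/p}\le (n\mu_p)^{1/p}=n^{1/p}\mu_p^{1/p}.
\]
Combined with the first display this gives $n^{-1/p}\big(\mathbb{E}(\xi_{n:n})-\mathbb{E}(\xi_{1:n})\big)\le 2\mu_p^{1/p}$ for every $n$, which is precisely~\eqref{gc-result}.

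Finally, for the stochastic statement I would use that $n^{-1/p}(\xi_{n:n}-\xi_{1:n})$ is nonnegative with expectation bounded by $2\mu_p^{1/p}$ uniformly in $n$. Markov's inequality then gives, for every $K>0$,
\[
\sup_n\,\mathbb{P}\big(n^{-1/p}(\xi_{n:n}-\xi_{1:n})>K\big)\le \frac{2\mu_p^{1/p}}{K},
\]
and letting $K\to\infty$ produces the uniform tightness defining $n^{-1/p}(\xi_{n:n}-\xi_{1:n})=O_{\mathbb{P}}(1)$.

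I do not anticipate a genuine obstacle; the only point requiring care is to resist invoking order-statistic formulas or tail estimates that presuppose independence, since the lemma assumes only identical marginals. The entire argument hinges on the two dependence-free pathwise bounds $\xi_{n:n}-\xi_{1:n}\le 2M_n$ and $M_n^p\le\sum_t|\xi_t|^p$, after which Jensen and Markov do the remaining work.
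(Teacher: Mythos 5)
Your proof is correct, and it takes a genuinely different route from the paper's. The paper invokes the extremal-dependence bounds of Gascuel and Caraux (1992), namely $\mathbb{E}(\xi_{n:n})\le n\int_{1-1/n}^{1}F_{\xi}^{-1}(y)\,\dd y$ and $\mathbb{E}(\xi_{1:n})\ge n\int_{0}^{1/n}F_{\xi}^{-1}(y)\,\dd y$, and then applies H\"{o}lder's inequality to the quantile integrals to extract the rate $n^{1/p}$; the probabilistic conclusion then follows by Markov, exactly as in your last step. You instead bypass order-statistic theory entirely via the two pathwise, dependence-free bounds $\xi_{n:n}-\xi_{1:n}\le 2M_n$ and $M_n^p\le\sum_{t=1}^n|\xi_t|^p$, followed by Jensen (equivalently, monotonicity of $L^p$ norms). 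What your argument buys is self-containedness and transparency: no external citation is needed, and the constant $2\mu_p^{1/p}$ is explicit. What the paper's route buys is sharpness of the intermediate bound --- the Gascuel--Caraux quantile bounds are the best possible under identical marginals, which is the natural tool in a paper organized around order statistics and concomitants --- and the paper later remarks that their bounds (8) permit non-identical marginals; note, though, that your argument extends just as easily to that setting, since $\mathbb{E}(M_n^p)\le\sum_t\mathbb{E}(|\xi_t|^p)\le n\sup_t\mathbb{E}(|\xi_t|^p)$. Both proofs correctly avoid any independence assumption, which is the one trap the lemma sets.
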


\begin{proof}
Using bounds (7) of~\citet{GC1992} and then applying H\"{o}lder's inequality, we have
\begin{align*}
\mathbb{E}(\xi_{n:n})-\mathbb{E}(\xi_{1:n})
&\le n\int_{1-1/n}^{1}F_{\xi}^{-1}(y) \dd y - n\int_{0}^{1/n}F_{\xi}^{-1}(y) \dd y
\\
&\le n\int_{1-1/n}^{1}|F_{\xi}^{-1}(y)| \dd y + n\int_{0}^{1/n}|F_{\xi}^{-1}(y)| \dd y
\\
&\le c n \bigg( \int_0^1 |F_{\xi}^{-1}(y)|^p \dd y \bigg )^{1/p}n^{-1/q}
\\
& \le c n^{1/p} \big( \mathbb{E}(|\xi_1|^p) \big )^{1/p},
\end{align*}
where $F_{\xi}$ denotes the cdf of $\xi_1$, and $q\in [1,\infty ]$ is such that $p^{-1}+q^{-1}=1$. (When $p=1$, we set $q=\infty $.) This proves statement~\eqref{gc-result}. To prove the concluding part of the lemma, we choose any constant $\lambda >0$ and write the bounds
\[
\mathbb{P}(n^{-1/p}(\xi_{n:n}-\xi_{1:n})>\lambda )
\le {n^{-1/p}\over \lambda }\big( \mathbb{E}(\xi_{n:n})-\mathbb{E}(\xi_{1:n})\big)
\le {c\over \lambda }
\]
with a finite constant $c<\infty $ that does not depend on $n$ and $\lambda $.
This finishes the entire proof of Lemma~\ref{lemma-GC}.
\end{proof}

\citet[bounds~(8)]{GC1992} allow different distributions of $\xi_t$'s, and only simple though space consuming modifications of Theorem~\ref{th-1} and its proof (to be later given in this appendix) are required to accommodate this case. This is significant because it implies high robustness of convergence of $I_n$ to $1/2$ with respect to possibly varied (i.e., non-stationary)  marginal  distributions of the outputs $Y_t$.

\begin{proof}[Proof of Theorem~\ref{th-1aa}]
We first prove part~\eqref{part-1}. Using Lipschitz continuity of $h_0$, we have
\begin{align}
{1\over n^{1/p}}\sum_{t=2}^n |Y^0_{t,n}-Y^0_{t-1,n}|
&\le {K\over n^{1/p}}\sum_{t=2}^n (X_{t:n}-X_{t-1:n})
\notag
\\
&= {K\over n^{1/p}}(X_{n:n}-X_{1:n}).
\label{bound-w1}
\end{align}
Lemma~\ref{lemma-GC} with $X_t$'s instead of $\xi_t$'s gives us the statement  $n^{-1/p}(X_{n:n}-X_{1:n})=O_{\mathbb{P}}(1)$ and completes the proof of  part~\eqref{part-1}.

To prove part~\eqref{part-2}, we start with the bound
\begin{align}
{1\over n^{1/p}}\sum_{t=2}^n |Y^0_{t,n}-Y^0_{t-1,n}|
&={1\over n^{1/p}}\sum_{t=2}^n |h_0(X_{t:n})-h_0(X_{t-1:n}) |
\notag
\\
&={1\over n^{1/p}}\sum_{t=2}^n \bigg| \int_{X_{t-1:n}}^{X_{t:n}}h^*_0(x)\dd x \bigg|
\notag
\\
&\le {1\over n^{1/p}}\int_{X_{1:n}}^{X_{n:n}}|h^*_0(x)|\dd x .
\label{bound-w2}
\end{align}
Applying H\"{o}lder's inequality on the right-hand side of bound~\eqref{bound-w2} with  $\beta$ such that $\alpha^{-1}+\beta^{-1}=1$, we have
\begin{align}
{1\over n^{1/p}}\sum_{t=2}^n |Y^0_{t,n}-Y^0_{t-1,n}|
&\le {1\over n^{1/p}}(X_{n:n}-X_{1:n})^{1/ \alpha} \bigg(\int_{X_{1:n}}^{X_{n:n}}|h^*_0(x)|^{\beta}\dd x \bigg)^{1/\beta }
\notag
\\
&\le {c\over n^{1/p}}(X_{n:n}-X_{1:n})^{1/ \alpha}.
\label{bound-w3}
\end{align}
To show that the right-hand side of bound~\eqref{bound-w3} is of order $O_{\mathbb{P}}(1)$, we fix any $\lambda >0$ and write the bound
\begin{equation}
\mathbb{P}\bigg( {1\over n^{1/p}}(X_{n:n}-X_{1:n})^{1/ \alpha} > \lambda  \bigg )
\le
{1\over \lambda ^{\alpha} }
n^{-\alpha/p} \big(\mathbb{E}(\xi_{n:n})-\mathbb{E}(\xi_{1:n})\big) .
\label{bound-w4}
\end{equation}
Lemma~\ref{lemma-GC} with $X_t$'s instead of $\xi_t$'s and with $p/\alpha$ instead of $p$ shows that the right-hand side of bound~\eqref{bound-w4} can be made as small as desired by choosing a sufficiently large $\lambda $ and for all sufficiently large $n$.  This completes the proof of part~\eqref{part-2}, and that of Theorem~\ref{th-1aa} as well.
\end{proof}

\begin{proof}[Proof of Theorem~\ref{th-md}]
Since the baseline function $h_0$ is absolutely continuous on $[a_{X},b_{X}]$, there is an integrable on $[a_{X},b_{X}]$ function $h_0^*$ such that $h_0(v)- h_0(u)=\int_{u}^{v}h_0^*(x)\dd x$ for all $u,v\in [a_{X},b_{X}]$ such that $u\le v$.  Hence,
\begin{align*}
Y^0_{n,n}-Y^0_{1,n} - \big (h_0(b_{X})- h_0(a_{X})\big )
&= h_0(X_{n:n})-h_0(X_{1:n}) - \big (h_0(b_{X})- h_0(a_{X}) \big )
\\
&= -\int_{X_{n:n}}^{b_{X}}h_0^*(x)\dd x
- \int_{a_{X}}^{X_{1:n}}h_0^*(x)\dd x
\\
&= -\int_{a_{X}}^{b_{X}}\mathds{1}\{x\ge X_{n:n}\}h_0^*(x)\dd x
- \int_{a_{X}}^{b_{X}}\mathds{1}\{x<X_{1:n}\}h_0^*(x)\dd x .
\end{align*}
Consequently, for every $\lambda >0$, using Markov's inequality we have
\begin{align}
\mathbb{P}\Big( \big| Y^0_{n,n}-Y^0_{1,n} - \big (h_0(b_{X})- h_0(a_{X}) \big ) \big |>\lambda \Big )
&\le {1\over \lambda} \mathbb{E}\Big(\big| Y^0_{n,n}-Y^0_{1,n} - \big (h_0(b_{X})- h_0(a_{X}) \big ) \big | \Big )
\notag
\\
&\le {1\over \lambda}\int_{a_{X}}^{b_{X}}\mathbb{P}(X_{n:n}\le x)|h_0^*(x)|\dd x
\notag
\\
& \qquad +  {1\over \lambda}\int_{a_{X}}^{b_{X}}\mathbb{P}(X_{1:n}> x)|h_0^*(x)|\dd x .
\label{md-3}
\end{align}
Since the inputs $X_t$ are temperately dependent and $\int_{a_{X}}^{b_{X}}|h_0^*(x)|\dd x <\infty $, the Lebesgue dominated convergence theorem implies that the two integrals on the right-hand side of bound~\eqref{md-3} converge to $0$ when $n\to \infty $. This completes the proof of Theorem~\ref{th-md}.
\end{proof}

\begin{proof}[Proof of Theorem~\ref{lm-1}]
Fix any $t\in\mathbb{N}$ and let $n\ge t$. We have
\begin{align*}
F_{X_{n:n}}(x) &= \mathbb{P}(X_1\le x,\ldots,X_n\le x)
\\
&\le \mathbb{P}(X_{t}\le x,X_{2t}\le x,\ldots,X_{\lfloor n/t\rfloor t}\le x) \\
&\le \mathbb{P}(X_{t}\le x)\mathbb{P}(X_{2t}\le x,\ldots,X_{\lfloor n/t\rfloor t}\le x) + \alpha_X(t) \\
&= F(x)\mathbb{P}(X_{2t}\le x,\ldots,X_{\lfloor n/t\rfloor t}\le x) + \alpha_X(t) \\
&\le F(x)^2\mathbb{P}(X_{3t}\le x,\ldots,X_{\lfloor n/t\rfloor t}\le x) + \alpha_X(t)\big (1 + F(x)\big) \\
&\le \cdots \\
&\le F(x)^{\lfloor n/t\rfloor} + \alpha_X(t)\big(1 + F(x) + \cdots + F(x)^{\lfloor n/t\rfloor - 1}\big) \\
&= F(x)^{\lfloor n/t\rfloor} + \dfrac{\alpha_X(t)\big(1 - F(x)^{\lfloor n/t\rfloor}\big)}{1 - F(x)}.
\end{align*}
When $x<b_{X}$, we have $F(x) < 1$ and so
$$
\limsup_{n\to\infty}F_{X_{n:n}}(x) \le \dfrac{\alpha_X(t)}{1 - F(x)}.
$$
Letting $t\to\infty$, we have
\[
\limsup_{n\to\infty}F_{X_{n:n}}(x)=0
\]
and so  $F_{X_{n:n}}(x)\to 0$ when $n\to\infty$. This establishes the second part of property~\eqref{md-1}.

When $x>a_{X}$, we set $\xi_t := -X_t$ for all $t\in\mathbb{Z}$.  By the previous case, we know that $F_{\xi_{n:n}}(z) \to 0$
for all $z < b_{\xi}$. Since $X_{1:n}=-\xi_{n:n}$ and $a_{X}=-b_{\xi}$, we have  $\mathbb{P}(X_{1:n}\ge -z) \to 0$
for all $-z > a_{X}$.  This establishes the first part of property~\eqref{md-1}   and concludes the entire proof of Theorem~\ref{lm-1}.
\end{proof}

To prove Theorem~\ref{th-md-2}, we need a lemma.

\begin{lemma}\label{th-md-3new}
Let the inputs $X_t$ be strictly stationary, temperately dependent,  and satisfy the Glivenko-Cantelli property.  If the cdf $F$ and the corresponding quantile function $F^{-1}$ are continuous, then for any finite subinterval $[a,b]$ of $[a_{X},b_{X}]$, we have
\begin{equation}
\label{md-10new}
\max_{1\le t \le n+1}\big( Z_{t,n}-Z_{t-1,n} \big ) \to_{\mathbb{P}} 0
\end{equation}
when $n\to \infty $, where $Z_{0,n}:=a$, $Z_{n+1,n}:=b$, and, for all $t=1,\dots , n$,
\[
Z_{t,n}:=h_c(X_{t:n})=
\left\{
  \begin{array}{ll}
    a & \hbox{ when} \quad X_{t:n}<a, \\
    X_{t:n} & \hbox{ when} \quad a\le X_{t:n} \le b , \\
    b & \hbox{ when} \quad X_{t:n}>b .
  \end{array}
\right.
\]
\end{lemma}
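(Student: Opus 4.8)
The plan is to read $\max_{1\le t\le n+1}(Z_{t,n}-Z_{t-1,n})$ as the largest gap in the ordered point configuration
\[
a=Z_{0,n}\le Z_{1,n}\le\cdots\le Z_{n,n}\le Z_{n+1,n}=b
\]
living inside $[a,b]$, and to show that the clamped order statistics become dense in $[a,b]$ as $n\to\infty$. Since $h_c$ is nondecreasing, the $Z_{t,n}$ are automatically ordered, and away from the two endpoints each value $Z_{t,n}\in(a,b)$ coincides with an actual observation $X_{t:n}\in(a,b)$. I would therefore split the maximal gap into \emph{interior} gaps (between consecutive observations lying in $(a,b)$) and the two \emph{boundary} gaps $Z_{1,n}-a$ and $b-Z_{n,n}$, and control each type separately.

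For the interior gaps I would use the Glivenko--Cantelli property together with the strict monotonicity of $F$ on $[a_X,b_X]$, which is exactly what continuity of the quantile function $F^{-1}$ guarantees (flat stretches of $F$ correspond to jumps of $F^{-1}$). Fix $\epsilon\in(0,b-a)$ and partition $[a,b]$ by points $a=c_0<c_1<\cdots<c_M=b$ of mesh less than $\epsilon/2$. By strict monotonicity, $p_*:=\min_{1\le j\le M}\big(F(c_j)-F(c_{j-1})\big)>0$. On the event $\{\Vert F_n-F\Vert<p_*/4\}$, whose probability tends to $1$ by the Glivenko--Cantelli property, every cell $[c_{j-1},c_j)$ satisfies $F_n(c_j)-F_n(c_{j-1})>p_*/2>0$ and hence contains at least one observation. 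Consequently no data-free interval can be long enough to contain a whole cell, so any two consecutive observations in $[a,b]$ differ by less than $\epsilon$; this bounds every interior gap by $\epsilon$ with probability tending to $1$.

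For the two boundary gaps I would argue that $Z_{1,n}-a>\epsilon$ forces $X_{1:n}>a+\epsilon$, and $b-Z_{n,n}>\epsilon$ forces $X_{n:n}<b-\epsilon$. Because $0<\epsilon<b-a$ and $[a,b]\subseteq[a_X,b_X]$, both $a+\epsilon$ and $b-\epsilon$ lie in the open interval $(a_X,b_X)$, so temperate dependence gives $\mathbb{P}(X_{1:n}\ge a+\epsilon)\to0$ and $\mathbb{P}(X_{n:n}\le b-\epsilon)\to0$. (Equivalently, the first and last cells of the partition above already contain an observation on the Glivenko--Cantelli event, which places the first $Z$-value exceeding $a$ within $a+\epsilon/2$ of $a$ and the last $Z$-value below $b$ within $\epsilon/2$ of $b$; either route works.) Combining the interior and boundary bounds, for each fixed $\epsilon$ the maximal gap exceeds $\epsilon$ with probability tending to $0$, which is the claim~\eqref{md-10new}.

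The main obstacle, and the reason the statement is not purely a Glivenko--Cantelli exercise for an empirical quantile process, is the interaction between the clamping and the artificial endpoints $Z_{0,n}=a$, $Z_{n+1,n}=b$: one must rule out a large empty stretch appearing at either end of $[a,b]$, not merely in its interior. Handling this uniformly in the (random) location of the gap is what dictates the fixed deterministic partition of $[a,b]$, whose finitely many cells let a single Glivenko--Cantelli control bound all gaps at once; the continuity of $F$ is what lets me ignore ties and treat the observations in $[a,b]$ as distinct, and the continuity of $F^{-1}$ is what makes every cell carry positive mass.
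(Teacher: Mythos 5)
Your proposal is correct, but the engine driving it is genuinely different from the paper's. Both proofs make the same first move---split the maximal gap into the two boundary gaps, killed by temperate dependence since $a+\epsilon$ and $b-\epsilon$ lie in $(a_X,b_X)$, and the interior gaps---but for the interior gaps the paper bounds $Z_{t,n}-Z_{t-1,n}\le X_{t:n}-X_{t-1:n}$, localizes the relevant indices $t$ to a band $[\alpha n,\beta n]$ via Glivenko--Cantelli, rewrites the spacings through the empirical quantile function $F_n^{-1}$, and then combines uniform continuity of $F^{-1}$ on compact subintervals of $(0,1)$ with a separately proven statement that $\sup_{t\in[t_0,t_1]}\lvert F_n^{-1}(t)-F^{-1}(t)\rvert\to_{\mathbb{P}}0$ (itself extracted from Glivenko--Cantelli using the strict monotonicity of $F$). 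Your fixed-partition pigeonhole argument bypasses the quantile process entirely: the mesh-$\epsilon/2$ partition, the positive cell masses $p_*>0$ (same strict-monotonicity input, just used directly), and a single Glivenko--Cantelli event force an observation into every cell, so no gap of length $\epsilon$ can survive. This is shorter and more elementary, and, as your parenthetical remark correctly observes, it even covers the boundary gaps without invoking temperate dependence, showing that hypothesis is redundant here given Glivenko--Cantelli and continuity of $F$ and $F^{-1}$ (indeed $\mathbb{P}(X_{n:n}\le x)\le\mathbb{P}(\Vert F_n-F\Vert\ge 1-F(x))\to 0$, so Glivenko--Cantelli already implies temperate dependence). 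What the paper's longer route buys is the uniform quantile convergence statement~\eqref{md-10new2u} as a reusable intermediate result; what yours buys is brevity and a unified treatment of interior and boundary gaps. One cosmetic point: your cells should be taken half-open on the left, $(c_{j-1},c_j]$, so that $F_n(c_j)-F_n(c_{j-1})>0$ literally certifies an observation in the cell, but since $F$ is continuous this is immaterial.
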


\begin{proof}
Since $a_X\le a$ and the inputs $X_t$ are  temperately dependent, we have $X_{1:n} \to_{\mathbb{P}} a_X$ and so $Z_{1,n}-a \to_{\mathbb{P}} 0$. Likewise, since $b\le b_X$, we have $X_{n:n} \to_{\mathbb{P}} b_X$ and so $b-Z_{n,n} \to_{\mathbb{P}} 0$. Consequently, statement~\eqref{md-10new} holds provided that
\begin{equation}
\label{md-10new-0}
\max_{2\le t \le n}\big( Z_{t,n}-Z_{t-1,n} \big ) \to_{\mathbb{P}} 0.
\end{equation}
Note that we only need to consider those $t$'s for which $X_{t:n}> a$ and $X_{t-1:n}\le b$. These two restrictions are equivalent to $F_n^{-1}(t/n)> a$ and $F_n^{-1}((t-1)/n)\le b$, respectively. Note that $F_n^{-1}(t/n)> a$ is equivalent to $t/n> F_n(a)$, and $F_n^{-1}((t-1)/n)\le b$ is equivalent to $(t-1)/n\le F_n(b)$. Due to the Glivenko-Cantelli property, we therefore conclude that for any (small) $\delta>0$ and for all sufficiently large $n$,  all those $t$'s for which the bounds $X_{t:n}> a$ and $X_{t-1:n}\le b$ hold  are such that $(1-\delta)F(a)n\le t \le (1+\delta)F(a)n$. For typographical simplicity, we rewrite the latter bounds as $\alpha n \le t \le \beta n$, where $\alpha:=(1-\delta)F(a)$ and $\beta:=(1+\delta)F(a)$. Hence, statement~\eqref{md-10new-0} follows if
\begin{equation}
\label{md-10new1}
\max_{\alpha n \le t \le \beta n}\big( Z_{t,n}-Z_{t-1,n} \big ) \to_{\mathbb{P}} 0.
\end{equation}
Since $ Z_{t,n}-Z_{t-1,n} \le X_{t:n}-X_{t-1:n}$ for all $t=2,\dots , n$, statement~\eqref{md-10new1} follows if
\begin{equation}
\label{md-10new2}
\max_{\alpha n \le t \le \beta n}\big( X_{t:n}-X_{t-1:n} \big ) \to_{\mathbb{P}} 0.
\end{equation}
To prove the latter statement, we write
\begin{align}
\max_{\alpha n \le t \le \beta n}\big( X_{t:n}-X_{t-1:n} \big )
&=\max_{\alpha n \le t \le \beta n}\big( F_n^{-1}(t/n)-F_n^{-1}((t-1)/n) \big )
\notag
\\
&\le \max_{\alpha n \le t \le \beta n}\big( F^{-1}(t/n)-F^{-1}((t-1)/n) \big )
\notag
\\
&\qquad +\max_{\alpha n \le t \le \beta n}\big| F_n^{-1}(t/n)-F^{-1}(t/n) \big |
\notag
\\
&\qquad\qquad  +\max_{\alpha n \le t \le \beta n}\big| F_n^{-1}((t-1)/n)-F^{-1}((t-1)/n) \big |.
\label{md-10new3}
\end{align}
The first maximum on the right-hand side of bound~\eqref{md-10new3} converges to $0$ because $F^{-1}$ is continuous on $(0,1)$ and thus uniformly continuous on every closed subinterval of $(0,1)$. As to the second and third maxima on the right-hand side of bound~\eqref{md-10new3}, they converge to $0$ in probability because
\begin{equation}
\label{md-10new2u}
\Gamma_n:=\sup_{t\in [t_0,t_1]}\big| F_n^{-1}(t)-F^{-1}(t) \big | \to_{\mathbb{P}} 0
\end{equation}
for every closed interval $[t_0,t_1] \subset (0,1)$, because the Glivenko-Cantelli property holds. To show that the just noted implication is true, we proceed as follows.

Statement~\eqref{md-10new2u} means that, for any fixed $\gamma >0$, the probability of the event $\Gamma_n\le \gamma $ converges to $1$ when $n\to \infty$. This event
has at least the same, if not larger, probability as the event
\begin{equation}
\label{stat-9}
F^{-1}(t)- \gamma < F_n^{-1}(t)\le F^{-1}(t)+ \gamma
\quad \textrm{for all} \quad  t\in [t_0,t_1] ,
\end{equation}
which is equivalent to
\[
F_n(F^{-1}(t)- \gamma) < t \le F_n(F^{-1}(t)+ \gamma)
\quad \textrm{for all} \quad  t\in [t_0,t_1].
\]
The latter event has at least the same, if not larger, probability as the event
\begin{equation}\label{stat-10v}
F(F^{-1}(t)- \gamma) +\Vert F_n-F\Vert < t \le F(F^{-1}(t)+ \gamma) -\Vert F_n-F\Vert
\quad \textrm{for all} \quad  t\in [t_0,t_1].
\end{equation}
Since $t=F(F^{-1}(t))$, event~\eqref{stat-10v} has at least the same, if not larger, probability as the event
\begin{equation}\label{stat-10}
-\Delta_1( \gamma) +\Vert F_n-F\Vert < 0 \le \Delta_2( \gamma)-\Vert F_n-F\Vert ,
\end{equation}
where
\[
\Delta_1( \gamma):= \inf_{t\in [t_0,t_1]} \big( F(F^{-1}(t)) - F(F^{-1}(t)- \gamma)\big)
\]
and
\[
\Delta_2( \gamma):= \inf_{t\in [t_0,t_1]} \big( F(F^{-1}(t)+ \gamma) - F^{-1}(F^{-1}(t))\big).
\]
Since the cdf $F$ is strictly increasing (because we have assumed that $F^{-1}$ is continuous), the quantities $\Delta_1( \gamma)$ and $\Delta_2( \gamma)$ are (strictly) positive for every $\gamma>0$. We therefore conclude that statement~\eqref{stat-10} holds with as large a probability as desired, provided that $n$ is sufficiently large. This, in turn, implies that event~\eqref{stat-9} can be made as close to $1$ as desired, provided that $n$ is sufficiently large. The proof of Lemma~\ref{th-md-3new} is finished.
\end{proof}

\begin{proof}[Proof of Theorem~\ref{th-md-2}]
Since the baseline function $h_0$ is absolutely continuous on $[a_{X},b_{X}]$ and its Radon-Nikodym derivative $h^*_0$ vanishes outside the interval $[a,b]$, we have
\begin{align*}
|Y^0_{t,n}-Y^0_{t-1,n} |
&=\bigg| \int_{X_{t:n}}^{X_{t-1:n}}h^*_0(x)\dd x \bigg|
\\
&=\bigg| \int_{[X_{t-1:n},X_{t:n}]\cap [a,b]}h^*_0(x)\dd x \bigg|
= |h^*_0(\xi_{t,n})|(Z_{t,n}-Z_{t-1,n}),
\end{align*}
where, due to the mean-value theorem, the right-most equation holds for some $\xi_{t,n}\in [Z_{t-1,n},Z_{t,n}]$ with $Z_{t,n}$'s defined in Lemma~\ref{th-md-3new}. Consequently,
\begin{align*}
\Theta_n
:=&\sum_{t=2}^n |Y^0_{t,n}-Y^0_{t-1,n}|- \int_{a}^{b}|h^*_0(x)|\dd x
\\
=& \sum_{t=2}^n |h^*_0(\xi_{t,n})|(Z_{t,n}-Z_{t-1,n}) - \int_{a}^{b}|h^*_0(x)|\dd x.
\end{align*}
Obviously, $Z_{t,n}\in [a,b]$ for all $t=2,\dots , n$. We also have $Z_{0,n}=a$ and $Z_{n+1,n}=b$. By Lemma~\ref{th-md-3new},
\begin{equation}
\label{md-10d}
\max_{1\le t \le n+1}(Z_{t,n}-Z_{t-1,n}) \to_{\mathbb{P}} 0.
\end{equation}
Furthermore,
\begin{align}
\Theta_n
&= \sum_{t=1}^{n+1} |h^*_0(\xi_{t,n})|(Z_{t,n}-Z_{t-1,n}) - \int_{a}^{b}|h^*_0(x)|\dd x
- \sum_{t\in\{1,n+1\}} |h^*_0(\xi_{t,n})|(Z_{t,n}-Z_{t-1,n})
\notag
\\
&= \sum_{t=1}^{n+1} |h^*_0(\xi_{t,n})|(Z_{t,n}-Z_{t-1,n}) - \int_{a}^{b}|h^*_0(x)|\dd x
+ o_{\mathbb{P}}(1),
\label{md-10dd}
\end{align}
where the last equation holds when $n\to \infty $ because the function $h^*_0$ is bounded, and $ Z_{1,n}-a \to_{\mathbb{P}} 0$ and $b-Z_{n,n} \to_{\mathbb{P}} 0$
when $n\to \infty $, which we verified at the beginning of the proof of Lemma~\ref{th-md-3new}.  Hence, equation~\eqref{md-10dd} holds, and in order to prove $\Theta_n \to_{\mathbb{P}} 0$, we need to show
\begin{equation}
\label{md-30}
\Theta_n^*
:=\sum_{t=1}^{n+1} |h^*_0(\xi_{t,n})|(Z_{t,n}-Z_{t-1,n}) - \int_{a}^{b}|h^*_0(x)|\dd x \to_{\mathbb{P}} 0.
\end{equation}
In other words, we need to show that, for every $\gamma>0$,
\begin{equation}
\label{md-31}
\mathbb{P}\big(|\Theta_n^*|\ge \gamma\big ) \to 0
\end{equation}
when $n\to \infty $. For this, we first rewrite statement~\eqref{md-10d} explicitly: for every $\lambda>0$,
\begin{equation}
\label{md-32}
\mathbb{P}\Big(\max_{1\le t \le n+1}(Z_{t,n}-Z_{t-1,n})\ge \lambda\Big ) \to 0
\end{equation}
when $n\to \infty $. Hence, statement~\eqref{md-31} follows if, for any $\gamma>0$, we can find $\lambda>0$ such that
\begin{equation}
\label{md-33}
\mathbb{P}\Big(|\Theta_n^*|\ge \gamma, \max_{1\le t \le n+1}(Z_{t,n}-Z_{t-1,n})< \lambda\Big ) \to 0
\end{equation}
when $n\to \infty $. We now recall the very basic definition of Riemann integral, according to which, for any $\gamma>0$, we can find $\lambda >0$ such that
\[
\bigg| \sum_{t=1}^{n+1} |h^*_0(\zeta_{t,n})|(z_{t,n}-z_{t-1,n}) - \int_{a}^{b}|h^*_0(x)|\dd x \bigg| < \gamma
\]
whenever
\begin{equation}
\label{md-34}
\max_{1\le i \le n+1}(z_{t,n}-z_{t-1,n})<\lambda ,
\end{equation}
where $z_{0,n}:=a$, $z_{n+1,n}:=b$, $z_{t-1,n}\le z_{t,n}$ for $t=1,\dots , n+1$, and $\zeta_{t,n}\in [z_{t-1,n},z_{t,n}]$. Hence, with the same $\lambda>0$ as in statement~\eqref{md-34}, probability~\eqref{md-33} is equal to $0$.  This establishes statement~\eqref{md-31} and finishes the proof of Theorem~\ref{th-md-2}.
\end{proof}

\begin{proof}[Proof of Corollary~\ref{cor-md-2}]
By Theorems~\ref{th-md} and \ref{th-md-2}, we have
\[
I^0_n\to_{\mathbb{P}} {1\over 2}\bigg( 1+
{ h_0(b_{X})- h_0(a_{X})
\over \int_{a}^{b}|h^*_0(x)|\dd x} \bigg )
\]
when $n\to \infty $. Furthermore, we have
\[
h_0(b_{X})- h_0(a_{X})= h_0(b)- h_0(a)= \int_{a}^{b}h^*_0(x)\dd x
\]
because the Radon-Nikodym derivative $h^*_0$ of $h_0$ vanishes outside the interval $[a,b]$
and the positive part $z_{+}$ of every real number $z\in \mathbb{R}$ can be written as $(|z|+z)/2$. This concludes the proof of Corollary~\ref{cor-md-2}.
\end{proof}

\begin{proof}[Proof of Theorem~\ref{th-1}]
Since $z_{+}=(|z|+z)/2$ for every real number $z\in \mathbb{R}$, we have
\[
I_n={1\over 2}\bigg( 1+
{Y_{n,n}-Y_{1,n}\over \sum_{t=2}^n |Y_{t,n}-Y_{t-1,n}|} \bigg ).
\]
Lemma~\ref{lemma-GC} with $Y_t$'s instead of $\xi_t$'s  says that $n^{-1/p}(Y_{n:n}-Y_{1:n})=O_{\mathbb{P}}(1)$. Since the system is out of $p$-reasonable order, we have $n^{-1/p}\sum_{t=2}^n |Y_{t,n}-Y_{t-1,n}|\to_{\mathbb{P}} \infty $ when $n\to \infty $ and thus $I_n\to_{\mathbb{P}} 1/2$. This concludes the proof of Theorem~\ref{th-1}.
\end{proof}

To prove Theorem~\ref{tf-1}, we need a formula for $B_{n,p}$ analogous to equation~\eqref{bnp0-a}.

\begin{lemma}\label{conco-e}
The concomitants $Y_{1,n}, \dots ,  Y_{n,n}$ of the outputs $Y_t=h(X_t,\boldsymbol{\varepsilon}_t)$, $t=1,\dots, n$, with respect to the inputs $X_1,\dots , X_n$ are given by
\[
Y_{t,n}= h(X_{t:n},\boldsymbol{\varepsilon}_{t,n}),
\]
where $\boldsymbol{\varepsilon}_{1,n},\dots , \boldsymbol{\varepsilon}_{n,n}$ are the concomitants of the anomalies $\boldsymbol{\varepsilon}_1,\dots , \boldsymbol{\varepsilon}_n$  with respect to $X_1,\dots , X_n$, that is,
\[
\boldsymbol{\varepsilon}_{t,n}=\sum_{s=1}^n \boldsymbol{\varepsilon}_{s}\mathds{1}\{X_{s}=X_{t:n}\} .
\]
Consequently,
\begin{equation}\label{form-aa}
B_{n,p}={1\over n^{1/p}}\sum_{t=2}^n \big |h(X_{t:n},\boldsymbol{\varepsilon}_{t,n})-h(X_{t-1:n},\boldsymbol{\varepsilon}_{t-1,n})\big |.
\end{equation}
\end{lemma}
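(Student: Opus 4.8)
The plan is to reduce the statement to pure bookkeeping by exploiting that the inputs are almost surely distinct, so that the rank-to-index correspondence is a genuine permutation. Because the cdf $F$ is continuous, the order statistics satisfy $X_{1:n}<\cdots<X_{n:n}$ with no ties, and hence for each rank $t\in\{1,\dots,n\}$ there is exactly one index $s=s(t)\in\{1,\dots,n\}$ with $X_{s(t)}=X_{t:n}$. First I would observe that this makes every indicator sum collapse: in the defining equation $Y_{t,n}=\sum_{s=1}^n Y_s\mathds{1}\{X_s=X_{t:n}\}$ only the term $s=s(t)$ survives, so $Y_{t,n}=Y_{s(t)}$, and likewise $\boldsymbol{\varepsilon}_{t,n}=\boldsymbol{\varepsilon}_{s(t)}$ from its analogous definition.

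The crux --- such as it is --- is that the two concomitant operations are governed by the \emph{same} antirank map $s(\cdot)$, precisely because both $Y_{t,n}$ and $\boldsymbol{\varepsilon}_{t,n}$ are defined through the identical selector $\mathds{1}\{X_s=X_{t:n}\}$. With that alignment in hand, I would substitute the model equation $Y_{s}=h(X_s,\boldsymbol{\varepsilon}_s)$ at the surviving index:
\[
Y_{t,n}=Y_{s(t)}=h\big(X_{s(t)},\boldsymbol{\varepsilon}_{s(t)}\big)=h\big(X_{t:n},\boldsymbol{\varepsilon}_{t,n}\big),
\]
where the last equality uses $X_{s(t)}=X_{t:n}$ together with $\boldsymbol{\varepsilon}_{s(t)}=\boldsymbol{\varepsilon}_{t,n}$. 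This is exactly the asserted representation of the output concomitants.

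Finally, formula~\eqref{form-aa} is obtained by inserting this representation into Definition~\ref{def-00}: replacing $Y_{t,n}$ and $Y_{t-1,n}$ in $B_{n,p}=n^{-1/p}\sum_{t=2}^n|Y_{t,n}-Y_{t-1,n}|$ by $h(X_{t:n},\boldsymbol{\varepsilon}_{t,n})$ and $h(X_{t-1:n},\boldsymbol{\varepsilon}_{t-1,n})$ respectively yields the claim verbatim. I do not anticipate a genuine obstacle: the lemma is essentially definitional, and the only point requiring care is to state clearly that no ties occur (a consequence of the continuity of $F$) so that the selector is single-valued and one and the same permutation simultaneously reorders the inputs, the outputs, and the anomalies.
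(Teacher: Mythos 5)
Your proof is correct and follows essentially the same route as the paper's: both rest on the continuity of $F$ guaranteeing distinct inputs, so that the indicator selector picks out a unique index and the same (anti)rank correspondence simultaneously governs the concomitants of the outputs and of the anomalies, after which one substitutes $Y_s=h(X_s,\boldsymbol{\varepsilon}_s)$ and reads off the representation. The only cosmetic difference is that you make the antirank map $s(t)$ explicit, whereas the paper carries out the same collapse by reindexing inside the indicator sums.
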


\begin{proof}
Since the cdf $F$ of each input $X_t$ is continuous, we can assume without loss of generality that all the inputs $X_1,\dots , X_n$ are unequal. Hence, we can write the equation
\[
\boldsymbol{\varepsilon}_{t}=\sum_{s=1}^n \boldsymbol{\varepsilon}_{s}\mathds{1}\{X_{s}=X_{t}\} .
\]
This implies that the concomitants of the outputs $Y_1,\dots , Y_n$ with respect to the inputs  $X_1,\dots , X_n$ can be expressed as follows:
\begin{align*}
Y_{t,n}
&=\sum_{s=1}^n Y_s\mathds{1}\{X_{s}=X_{t:n}\}
\\
&=\sum_{s=1}^n h(X_s,\boldsymbol{\varepsilon}_{s})\mathds{1}\{X_{s}=X_{t:n}\}
\\
&=\sum_{s=1}^n h(X_{s:n},\boldsymbol{\varepsilon}_{s,n})\mathds{1}\{X_{s:n}=X_{t:n}\}
\\
&= h(X_{t:n},\boldsymbol{\varepsilon}_{t,n}).
\end{align*}
This establishes equation~\eqref{form-aa} and concludes the proof of Lemma~\ref{conco-e}.
\end{proof}

\begin{note}
\citet[Section~4]{KD1990} use the notation $\boldsymbol{\varepsilon}_{[t]}$ instead of $\boldsymbol{\varepsilon}_{t,n}$, in which case the equation $Y_{t,n}= h(X_{t:n},\boldsymbol{\varepsilon}_{t,n})$ turns into
$Y_{t,n}= h(X_{t:n},\boldsymbol{\varepsilon}_{[t]})$. We prefer the notation $\boldsymbol{\varepsilon}_{t,n}$ as it reminds us that the anomaly concomitants depend on the sample size $n$.
\end{note}

\begin{proof}[Proof of Theorem~\ref{tf-1}]
Since the anomaly-free outputs $Y_t^0$ are in $p$-reasonable order with respect to the inputs $X_t$ for some $p>0$, we have  $B^0_{n,p}=O_{\mathbb{P}}(1)$. By Lemma~\ref{conco-e}, we have
\begin{align*}
B_{n,p}
& ={1\over n^{1/p}}\sum_{t=2}^n |h(X_{t:n},0,\epsilon_{t,n})-h(X_{t-1:n},0,\epsilon_{t-1,n})|
\\
&={1\over n^{1/p}}\sum_{t=2}^n |h_0(X_{t:n})+\epsilon_{t,n}-h_0(X_{t-1:n})-\epsilon_{t-1,n}|
\\
&\ge {1\over n^{1/p}}\sum_{t=2}^n |\epsilon_{t,n}-\epsilon_{t-1,n}|
-{1\over n^{1/p}}\sum_{t=2}^n |h_0(X_{t:n})-h_0(X_{t-1:n})|
\\
&= {1\over n^{1/p}}\sum_{t=2}^n |\epsilon_{t,n}-\epsilon_{t-1,n}|
+O_{\mathbb{P}}(1).
\end{align*}
Hence, if the output anomalies $\epsilon_t$ are out of $p$-reasonable order with respect to the inputs,  meaning that
\[
{1\over n^{1/p}}\sum_{t=2}^n |\epsilon_{t,n}-\epsilon_{t-1,n}|\to _{\mathbb{P}} \infty ,
\]
then $B_{n,p}\to _{\mathbb{P}} \infty $ when $n\to \infty $.

Conversely, if $B_{n,p}\to _{\mathbb{P}} \infty $ when $n\to \infty $, then the bound
\begin{align*}
B_{n,p}
&\le {1\over n^{1/p}}\sum_{t=2}^n |\epsilon_{t,n}-\epsilon_{t-1,n}|
+{1\over n^{1/p}}\sum_{t=2}^n |h_0(X_{t:n})-h_0(X_{t-1:n})|
\\
&= {1\over n^{1/p}}\sum_{t=2}^n |\epsilon_{t,n}-\epsilon_{t-1,n}|
+O_{\mathbb{P}}(1)
\end{align*}
implies that the output anomalies $\epsilon_t$ are out of $p$-reasonable order with respect to the inputs. This concludes the proof of Theorem~\ref{tf-1}.
\end{proof}

\begin{lemma}\label{positive-moment}
If random variables $\xi_t$ are iid and have finite first moments, then
\begin{equation}\label{xi-0}
{1\over n^{1/p}}\sum_{t=2}^n |\xi_{t}-\xi_{t-1}|\to _{\mathbb{P}} \infty
\end{equation}
whenever the distribution of $\xi_1$ is non-degenerate.
\end{lemma}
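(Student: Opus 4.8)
The plan is to show that $\sum_{t=2}^n |\xi_t-\xi_{t-1}|$ grows \emph{linearly} in $n$, so that after division by $n^{1/p}$ the ratio diverges in the relevant range $p>1$ (the range in which the lemma is actually invoked, e.g.\ in Corollary~\ref{cor-1} and Theorems~\ref{tf-2}--\ref{tf-3}). Write $D_t:=|\xi_t-\xi_{t-1}|$ for $t\ge 2$ and set $\mu:=\mathbb{E}(D_2)=\mathbb{E}|\xi_2-\xi_1|$. First I would record two elementary facts about $\mu$. It is finite, since $\mu\le \mathbb{E}|\xi_2|+\mathbb{E}|\xi_1|=2\mathbb{E}|\xi_1|<\infty$ by the finite-first-moment hypothesis. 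It is also strictly positive: if $\mathbb{P}(\xi_1=\xi_2)=1$ for the iid pair $(\xi_1,\xi_2)$, then $\sum_x \mathbb{P}(\xi_1=x)^2=1$, which forces the common law to carry an atom of mass one and hence be degenerate; so non-degeneracy of $\xi_1$ gives $\mathbb{P}(\xi_1\neq\xi_2)>0$ and thus $\mu>0$.

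The main obstacle is that the summands $D_t$ are simultaneously dependent (consecutive differences share a common $\xi$) and only first-moment integrable (finite first moments need not give finite variance), so neither the iid law of large numbers nor a naive Chebyshev bound applies directly to $\sum_{t=2}^n D_t$. I would dispose of both difficulties at once by splitting the sum according to the parity of the index. The even-indexed differences $D_2,D_4,D_6,\dots$ are built from the disjoint blocks $\{\xi_1,\xi_2\},\{\xi_3,\xi_4\},\dots$, hence are iid; likewise the odd-indexed differences $D_3,D_5,D_7,\dots$ are built from $\{\xi_2,\xi_3\},\{\xi_4,\xi_5\},\dots$ and are iid. Thus $\sum_{t=2}^n D_t$ decomposes into two sums of iid, integrable terms, each containing asymptotically $n/2$ summands.

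To each sub-sum I would apply Khinchin's weak law of large numbers, which needs only a finite first moment, concluding that each sub-sum divided by its number of terms converges in probability to $\mu$. Recombining and using that both counts are asymptotically $n/2$ yields $\tfrac{1}{n}\sum_{t=2}^n D_t \to_{\mathbb{P}} \mu$. (A self-contained alternative, if one prefers not to cite Khinchin, is to truncate $D_t$ at a high level $c$ within each parity class, apply Chebyshev to the square-integrable truncated terms, and let $c\to\infty$ so that $\mathbb{E}(D_2\wedge c)\uparrow\mu$; since $D_t\ge D_t\wedge c$ this still delivers the needed lower bound.)

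Finally I would conclude by writing $n^{-1/p}\sum_{t=2}^n D_t = n^{1-1/p}\cdot n^{-1}\sum_{t=2}^n D_t$. For any threshold $M>0$, the event $\{n^{-1/p}\sum_{t=2}^n D_t\le M\}$ equals $\{n^{-1}\sum_{t=2}^n D_t\le M n^{1/p-1}\}$; since $1/p-1<0$ the bound $Mn^{1/p-1}\to 0$ and so eventually drops below $\mu/2$, whence the weak law forces the probability of this event to tend to $0$. Therefore $n^{-1/p}\sum_{t=2}^n|\xi_t-\xi_{t-1}|\to_{\mathbb{P}}\infty$, which is exactly~\eqref{xi-0}.
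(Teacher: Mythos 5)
Your proof is correct and follows essentially the same route as the paper's: the paper's (much terser) argument likewise splits $\sum_{t=2}^n|\xi_t-\xi_{t-1}|$ by parity of $t$ into two iid sub-sums, uses positivity of $\mathbb{E}|\xi_2-\xi_1|$ (from non-degeneracy of the iid pair) together with the law of large numbers, and concludes divergence for every $p>1$. Your write-up simply fills in the details the paper leaves implicit — the Khinchin weak law needing only first moments, the atom argument for $\mu>0$, and the final $n^{1-1/p}\to\infty$ step.
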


\begin{proof}
Since the summands $|\xi_{t}-\xi_{t-1}|$, $t=2,3,\dots $, are $1$-dependent, splitting the sum into the sums with respect to even and odd $t$'s yields statement~\eqref{xi-0} for every $p>1$ if the moment  $\mathbb{E}(|\xi_{2}-\xi_{1}|)$ is (strictly) positive. Since $\xi_{2}$ and $\xi_{1}$ are iid, the aforementioned moment is positive whenever the distribution of $\xi_1$ is non-degenerate.
\end{proof}

\begin{proof}[Proof of Theorem~\ref{tf-2}]
Since the baseline function $h_0$ is Lipshitz continuous, Theorem~\ref{th-1aa} implies that the anomaly-free outputs $Y_t^0$ are in $p$-reasonable order with respect to the inputs $X_t$. Consequently,  $B^0_{n,p}=O_{\mathbb{P}}(1)$. By Lemma~\ref{conco-e}, we have
\[
B_{n,p}={1\over n^{1/p}}\sum_{t=2}^n |h(X_{t:n},\delta_{t,n},0)-h(X_{t-1:n},\delta_{t-1,n},0)| .
\]
Using Lipschitz continuity of $h_0$ and also Lemma~\ref{lemma-GC} with $X_t$ instead of $\xi_t$, we have
\begin{align*}
B_{n,p}
&={1\over n^{1/p}}\sum_{t=2}^n |h_0(X_{t:n}+\delta_{t,n})-h_0(X_{t:n}+\delta_{t-1,n})
+h_0(X_{t:n}+\delta_{t-1,n})-h_0(X_{t-1:n}+\delta_{t-1,n})|
\\
&\ge {1\over n^{1/p}}\sum_{t=2}^n |h_0(X_{t:n}+\delta_{t,n})-h_0(X_{t:n}+\delta_{t-1,n})|
-{K\over n^{1/p}}\sum_{t=2}^n (X_{t:n}-X_{t-1:n})
\\
&= {1\over n^{1/p}}\sum_{t=2}^n |h_0(X_{t:n}+\delta_{t,n})-h_0(X_{t:n}+\delta_{t-1,n})|
-{K\over n^{1/p}}(X_{n:n}-X_{1:n})
\\
&= {1\over n^{1/p}}\sum_{t=2}^n |h_0(X_{t:n}+\delta_{t,n})-h_0(X_{t:n}+\delta_{t-1,n})|
+O_{\mathbb{P}}(1).
\end{align*}
Hence, with the notation
\[
\xi_t= |h_0(X_{t}+\delta_{t})-h_0(X_{t}+\delta_{t-1})|,
\]
we are left to check the statement
\begin{equation}\label{delta}
{1\over n^{1/p}}\sum_{t=2}^n \xi_t
\to _{\mathbb{P}} \infty.
\end{equation}
That is, we need to show that for every $\lambda <\infty $, we have
\[
\mathbb{P}\bigg({1\over n^{1/p}}\sum_{t=2}^n \xi_t \le \lambda  \bigg) \to 0
\]
when $n\to \infty $. Since the moment
$\mu:=\mathbb{E}(\xi_1)=\mathbb{E}(\xi_t) $ is strictly positive by assumption~\eqref{part-12}, for all sufficiently large $n$ we have
\begin{align}
\mathbb{P}\bigg({1\over n^{1/p}}\sum_{t=2}^n \xi_t \le \lambda  \bigg)
&\le \mathbb{P}\bigg(-\bigg|{1\over n^{1/p}}\sum_{t=2}^n (\xi_t-\mu)\bigg|+n^{1-1/p}\mu \le \lambda  \bigg)
\notag
\\
&= \mathbb{P}\bigg(\bigg|{1\over n^{1/p}}\sum_{t=2}^n (\xi_t-\mu)\bigg|\ge n^{1-1/p}\mu - \lambda  \bigg)
\notag
\\
&\le {1\over  (n^{1-1/p}\mu - \lambda )^2}\mathbb{E}\bigg(\bigg|{1\over n^{1/p}}\sum_{t=2}^n (\xi_t-\mu)\bigg|^2\bigg)
\notag
\\
&\le {c\over  n^2 }\mathbb{E}\bigg(\bigg|\sum_{t=2}^n (\xi_t-\mu)\bigg|^2\bigg).
\label{delta-2}
\end{align}
The use of the second moment for bounding the probability was prudent because $\xi_t$'s are bounded by a constant, which follows because the baseline function $h_0$ is bounded. Hence, our task becomes to prove
\begin{equation}\label{delta-3}
 {1\over  n^2 }\mathbb{E}\bigg(\bigg|\sum_{t=2}^n (\xi_t-\mu)\bigg|^2\bigg) \to 0.
\end{equation}
Since the sequence $\xi_t$ is strictly stationary, by \citet[Corollary~1.2, p.~10]{R2017} we have
\begin{align}
\mathbb{E}\bigg(\bigg|\sum_{t=2}^n (\xi_t-\mu)\bigg|^2\bigg)
&= \sum_{t=2}^n \mathrm{Var}(\xi_t) + 2 \sum_{2\le s < t \le n} \mathrm{Cov}(\xi_s,\xi_t)
\notag
\\
&=(n-1) \mathrm{Var}(\xi_0) + 2 \sum_{t=1}^{n-2} (n-1-t)\mathrm{Cov}(\xi_0,\xi_t)
\notag
\\
&\le c(n-1) \alpha_{\xi}(0) + c(n-2)\alpha_{\xi}(1) + c \sum_{t=2}^{n-2} (n-1-t)\alpha_{\xi}(t),
\label{bound-alpha}
\end{align}
where $c$ is a finite constant that depends on $h_0$.
Next, for every $t\ge 2$, we have
\begin{align}
\alpha_{\xi}(t)
&= \sup \Big\{
\big |\mathbb{P}(A\cap B) - \mathbb{P}(A)\mathbb{P}(B)\big |
: A\in\sigma(\xi_u, u \le 0), ~ B\in\sigma(\xi_v, v\ge t ) \Big \}
\notag
\\
&\le \sup \Big\{
\big |\mathbb{P}(A\cap B) - \mathbb{P}(A)\mathbb{P}(B)\big |
: A\in\sigma(X_u, \delta_u,\delta_{u-1}, u \le 0), ~ B\in\sigma(X_v, \delta_v,\delta_{v-1},  v\ge t ) \Big \}
\notag
\\
&\le \sup \Big\{
\big |\mathbb{P}(A\cap B) - \mathbb{P}(A)\mathbb{P}(B)\big |
: A\in\sigma(X_u, u \le 0), ~ B\in\sigma(X_v,  v\ge t ) \Big \},
\label{rhs-0}
\end{align}
where the last inequality holds because
\begin{align*}
\big |\mathbb{P}(A\cap C \cap B\cap D) - \mathbb{P}(A\cap C )\mathbb{P}(B\cap D )\big |
&=\big |\mathbb{P}(A\cap B) - \mathbb{P}(A)\mathbb{P}(B)\big | \mathbb{P}(C\cap D )
\\
&\le \big |\mathbb{P}(A\cap B) - \mathbb{P}(A)\mathbb{P}(B)\big |
\end{align*}
for all $A\in\sigma(X_u, u \le 0)$, $B\in\sigma(X_v,  v\ge t )$, and all $C\in\sigma(\delta_u, u \le 0)$,  $D\in\sigma(\delta_v,  v\ge t-1 )$, upon recalling that $A$ and $B$ are independent of $C$ and $D$, and also $C$ and $D$ are independent of each other as long as $t\ge 2$. Note that the right-hand side of bound~\eqref{rhs-0} is equal to $\alpha_{X}(t)$, and so we have the bound $\alpha_{\xi}(t) \le \alpha_{X}(t)$ for all $t\ge 2$. This result together with bound~\eqref{bound-alpha} imply
\begin{align*}
 {1\over  n^2 }\mathbb{E}\bigg(\bigg|\sum_{t=2}^n (\xi_t-\mu)\bigg|^2\bigg)
&\le {c\over n} \alpha_{X}(0) +  {c\over  n }\alpha_{X}(1) + {c\over n^2} \sum_{t=2}^{n-2} (n-1-t)\alpha_{X}(t)
\end{align*}
with a finite constant $c$. The right-hand side of the latter bound converges to $0$ when $n\to \infty $, provided that $\alpha_{X}(t)$ converges to $0$ when $t\to \infty $, which is true because the inputs $X_t$ are $\alpha $-mixing. This concludes the proof of Theorem~\ref{tf-2}.
\end{proof}

\begin{proof}[Proof of Theorem~\ref{tf-3}]
Since the baseline function $h_0$ is Lipshitz continuous, Theorem~\ref{th-1aa} implies that the anomaly-free outputs $Y_t^0$ are in $p$-reasonable order with respect to the inputs $X_t$. Consequently,  $B^0_{n,p}=O_{\mathbb{P}}(1)$. By Lemma~\ref{conco-e}, we have
\[
B_{n,p}={1\over n^{1/p}}\sum_{t=2}^n |h(X_{t:n},\delta_{t,n},\epsilon_{t,n})-h(X_{t-1:n},\delta_{t-1,n},\epsilon_{t-1,n})|.
\]
Proceeding analogously as in the proof of Theorem~\ref{tf-2}, we have
\[
B_{n,p} \ge {1\over n^{1/p}}\sum_{t=2}^n \zeta_t +O_{\mathbb{P}}(1)
\]
when $n\to \infty $, where
\[
\zeta_t= |h_0(X_{t}+\delta_{t})+\epsilon_{t}-h_0(X_{t}+\delta_{t-1})-\epsilon_{t-1}|.
\]
The rest is analogous to the proof of Theorem~\ref{tf-2} starting with statement~\eqref{delta}, and we thus skip the details. This finishes the proof of Theorem~\ref{tf-3}.
\end{proof}

\end{document}